\documentclass[11pt]{amsart}
\usepackage{booktabs,tabularx,array,makecell,threeparttable}
\usepackage{amsmath,amssymb,amsthm,graphicx,bbm}
\usepackage[left=1in,right=1in,top=1in, bottom=1in]{geometry}
\usepackage{cite}
\usepackage[dvipsnames]{xcolor}
\usepackage[british]{babel}
\usepackage{hyperref}
\usepackage{enumitem}
\usepackage{mathtools}
\usepackage{comment}
\usepackage{subcaption}
\usepackage{lipsum}
\usepackage{algorithm}
\usepackage{algpseudocode}
\usepackage{multirow}
\usepackage{xcolor}
\makeatletter
\newcommand{\customlabel}[2]{#2\def\@currentlabel{#2}\label{#1}}
\makeatother
\hypersetup{
    colorlinks=false,
    pdfborder={0 0 0},
}
\usepackage{tikz}
\theoremstyle{plain}
\newtheorem{thm}{Theorem}[section]
\newtheorem{lem}[thm]{Lemma}
\newtheorem{prop}[thm]{Proposition}
\newtheorem{cor}[thm]{Corollary}

\newtheorem*{assumption*}{Assumption}
\newtheorem{example}{Example}[section]

\theoremstyle{remark}
\newtheorem{rem}[thm]{Remark}

\newenvironment{namedalgorithm}[2]
{
\begin{algorithm}[H]
\renewcommand{\thealgorithm}{\textcolor{blue}{(#1)}}
\caption{#2}\label{alg:#1}
}
{
\end{algorithm}
}

\DeclareMathOperator{\E}{\mathbb E}
\renewcommand{\P}{\mathbb P}

\renewcommand{\epsilon}{\varepsilon}

\newcommand{\TV}{\mathrm{TV}}

\newcommand{\eps}{\varepsilon}

\newcommand{\ud}{{\mathrm d}}
\newcommand{\CP}{{\mathrm{RC}}}
\newcommand{\R}{{\mathbb R}}

\newcommand{\N}{{\mathbb N}}

\newcommand{\Id}{\text{Id}}

\newcommand{\RCtot}{\mathrm{RC}_\mathrm{total}}
\makeatletter
\def\namedlabel#1#2{\begingroup
    #2%
    \def\@currentlabel{#2}%
    \phantomsection\label{#1}\endgroup
}
\makeatother

\begin{document}


\title[Diffeomorphic Monte Carlo]{Diffeomorphic Markov Chain Monte Carlo: fast mixing \\for heavy-tailed distributions}
\author{Miha Bre\v{s}ar}
\address{School of Data Science, The Chinese University of Hong Kong, Shenzhen, China} 
\email{mihabresar@cuhk.edu.cn}
\author{Aleksandar Mijatovi\'c}
\address{Department of Statistics, University of Warwick, UK}
\email{a.mijatovic@warwick.ac.uk}

\subjclass[2020]{
60J10
, 60J20
,  65C05
}

\keywords{heavy tailed distributions, non-asymptotic mixing-time bounds, Hit and Run, Ball Walk, Random Walk Metropolis, Dikin Walk, uniform ergodicity}

\begin{abstract}
We introduce a new class of uniformly ergodic MCMC algorithms, termed  Diffeomorphic Contraction Sampler (DCS), and provide fast non-asymptotic mixing guarantees for DCS targeting distributions on $\R^d$ with arbitrarily heavy polynomial tails. DCS provides a solution to a well-known problem for  MCMC samplers, which typically struggle with the combination of unbounded high-dimensional state space and vanishing gradients.

The DCS pulls back a target on $\R^d$ onto a Euclidean ball $B(R)\subset\R^d$ and then samples from the transformed density on the convex set $B(R)$ via algorithms such as the Ball Walk, Hit-and-Run and others. A radial diffeomorphic contraction is chosen so that the  pull-back density on $B(R)$ is bounded, implying uniform ergodicity for \textit{all} targets with a finite polynomial moment. Non-asymptotic bounds for DCS require stronger assumptions such as log-concavity of the pull-back density. In practice, this is achieved approximately by a preconditioned  automorphism of the ball $B(R)$, tuned via Variational Inference. 

Numerical simulation tests demonstrate that the DCS outperforms significantly  the No-U-Turns sampler  on multi-dimensional heavy-tailed targets arising as real-world posteriors in PosteriorDB~\cite{posteriordb} benchmark.  DCS also numerically outperforms in high-dimensional examples recently developed spherical projection samplers~\cite{grazzi2026sub,Yang24} for heavy-tailed target distributions.
\end{abstract}

\maketitle

\section{Introduction}

Sampling from heavy-tailed target distributions in high dimensions is a notoriously difficult problem due to the compounding effects of unbounded state spaces and vanishing gradients in the tails~\cite{Mengersen96,brevsar2025central,Roberts07}. In such settings, Markov chain Monte Carlo (MCMC) methods typically exhibit excessively slow convergence and the central limit theorem for ergodic average estimators fails frequently~\cite{brevsar2025central}. A promising direction for improving the performance of vanilla samplers in this context relies on density transformations, which have been shown to yield faster asymptotic rates of convergence~\cite{grazzi2026sub,MR3097969,Yang24}. This paper introduces the \textit{diffeomorphic contraction sampler}~\ref{alg:DCS}, based on a  transformation via a diffeomorphic radial contraction that pulls back the heavy-tailed target on the Euclidean space $\R^d$ into a density on the ball $B(R)$ of radius $R\in(0,\infty)$ in $\R^d$. This diffeomorphism translates the simulation problem into sampling from a distribution on a cornerless convex body $B(R)$, where efficient algorithms, such as the Ball Walk~\ref{alg:DCS-BW} and Hit-and-Run~\ref{alg:DCS-HnR}, are known to work well. 
The choice of the sampling domain $B(R)$ is not arbitrary:  
sampling on a ball has the best mixing time bounds among all convex bodies in $\R^d$, see~\cite{MR2309621,jiang2024regularized}. 

The following features \textbf{(I)-(III)} of Algorithm~\ref{alg:DCS}  are  the main  contributions of the paper.\\
\noindent \textbf{(I) Uniform ergodicity of Algorithm~\ref{alg:DCS} for all polynomial targets.} We prove that Algorithm~\ref{alg:DCS} 
achieves uniform ergodicity for arbitrarily heavy polynomial tails of the target (see Theorem~\ref{thm:Uniform_ergodicity_unified} in Section~\ref{subsec:uniform_ergodicity} below). It is known that vanilla MCMC algorithms are only polynomially ergodic for polynomial targets~\cite{brevsar2025central,Roberts07,livingstone2019geometric}.
Moreover,~\ref{alg:DCS} is a strict improvement on  recent transformation-based methods~\cite{Yang24,grazzi2026sub} that require certain tail decay of the target for the algorithm to exhibit uniform ergodicity. If these tail-decay conditions are violated, the algorithms in~\cite{Yang24,grazzi2026sub} are only polynomially ergodic (see~\cite{brevsar2025central} and numerical example in Section~\ref{subsubsec:infinite_first_moment} below). \\
\noindent \textbf{(II) Non-asymptotic convergence bounds for Algorithm~\ref{alg:DCS}.} We provide  rigorous non-asymptotic mixing time bounds for Algorithm~\ref{alg:DCS} under appropriate assumptions on the target density (see Theorem~\ref{thm:dcs_bounds} in Section~\ref{subsec:non_asymptotic} below). 
The key feature of Algorithm~\ref{alg:DCS} is that the sampling  on bounded compact sets is in this case applied to the ball $B(R)$ in $\R^d$, where random walk Metropolis type algorithms (such as the Ball Walk~\ref{alg:DCS-BW}) typically do not get stuck as there are no corners in the state space~\cite{MR2309621,jiang2024regularized}. 

Existing non-asymptotic mixing-time results for heavy-tailed sampling are to the best of our knowledge limited to the Random Walk Metropolis (RWM) with Gaussian proposals~\cite{MR5033571}. From a cold start, the mixing-time bound for the RWM exhibits an exponential dependence $\exp(d)$ on the dimension $d$ for a symmetric Student-$t$ target~\cite[Example~92]{MR5033571}. Moreover, the lower bounds on the convergence rate in~\cite[Sec~3.1]{brevsar2025central} for RWM suggest optimality for the non-asymptotic result in~\cite[Example~92]{MR5033571}. 
In contrast, Algorithm~\ref{alg:DCS} (based on Hit-and-Run~\ref{alg:DCS-HnR}) has a polynomial mixing time proportional to $d^3$ from a cold start for symmetric Student-$t$ targets (see Theorem~\ref{thm:dcs_bounds} and Example~\ref{ex:student_mixing} below).\\
\noindent \textbf{(III) Robustness and numerical efficiency of Algorithm~\ref{alg:DCS}.} In Section~\ref{sec:sim} we conduct extensive testing of the performance  of Algorithm~\ref{alg:DCS} using both real-world and hybrid data. These tests demonstrate the robustness of~\ref{alg:DCS} to arbitrary polynomial tail decay of the target and complex geometry of the landscape induced by the potential. In particular,  Algorithm~\ref{alg:DCS} significantly numerically outperforms JAX hardware-accelerated No-U-Turns sampler implementation~\cite{hoffman2014no, phan2019composable, jax2018github}. This performance advantage holds for real-world PosteriorDB benchmarks~\cite{posteriordb, magnusson2024posteriordb}, simultaneously exhibiting funnel geometries and heavy tails~\cite{neal2003slice} (Section~\ref{subsec:data_real}), and for skewed targets with polynomial tails (Section~\ref{subsec:art_data}).
 In the numerical examples of Section~\ref{subsec:comp_stereo}, Algorithm~\ref{alg:DCS} outperforms state-of-the-art samplers~\cite{Yang24,grazzi2026sub} designed for heavy-tailed target distributions in the regime when the sampler in~\cite{grazzi2026sub} is uniformly ergodic (and, less surprisingly, also when it is not). 

Fast non-asymptotic mixing~\textbf{(II)} requires suitable properties (near isotropy/log-concavity on the ball $B(R)$) of the pull-back measure under the radial diffeomorphism. Inspired by adaptive MCMC methods~\cite{hoffman2014no,bell2024adaptive,MR3671776} and the preconditioning of sampling spaces via transport maps~\cite{parno2018transport,hoffman2019neutra}, we introduce a Variational Inference (VI) training phase in Algorithm~\ref{alg:DCS} to optimize the transformation parameters of the automorphism of $B(R)$ (see Section~\ref{sec:optimization} below). 
We stress that the tail-condition for uniform ergodicity~\textbf{(I)} (see Assumption~\eqref{eq:unif_ergodicity_pi_condition} below) is not affected by this step.
The preprocessing  smooths complicated landscape geometries into near-uniform target densities that are proven to mix fast (see results in Section~\ref{subsec:non_asymptotic}). Put differently, the radial-contraction diffeomorphism makes sure that the sampling algorithm is uniformly ergodic~\textbf{(I)}, which deals with the non-log-concavity in the tails of the target. The VI preprocessing ensures good non-asymptotic performance~\textbf{(II)} of Algorithm~\ref{alg:DCS}  for non-log-concave (i.e., multimodal)  target distributions.

The remainder of the paper is organised as follows. Section~\ref{sec:radial_contraction} defines Algorithm~\ref{alg:DCS}. Section~\ref{sec:theoretical_results} formulates and discusses uniform ergodicity and the non-asymptotic mixing-time bounds for Algorithm~\ref{alg:DCS}. Subsection~\ref{subsec:literature} describes the related literature for heavy-tailed sampling algorithms. Section~\ref{sec:sim} provides numerical evidence of the efficiency of Algorithm~\ref{alg:DCS} for heavy-tailed targets. Section~\ref{sec:conclusions} concludes the paper and states open problems in the context of Algorithm~\ref{alg:DCS}. Appendix~\ref{app:proofs_non_asymptotic} provides proofs for our theoretical results. Appendix~\ref{app:diffeo_convex} verifies the  assumptions for non-asymptotic mixing-time bounds  for a family of target distributions on $\R^d$, which includes Student-$t$ densities. 
Finally, Appendices~\ref{app:Algs} and~\ref{app:mobius_vi} provide, respectively, examples of algorithms for sampling on the ball $B(R)$ and details of the variational-inference procedure used in Algorithm~\ref{alg:DCS}. A short YouTube presentation~\cite{YouTube_talk} describes our \href{https://youtu.be/cOlmRLxeQM0}{\underline{results}}.

\section{Diffeomorphic Markov Chain Monte Carlo Framework}
\label{sec:radial_contraction}

To efficiently sample from an arbitrary (possibly heavy-tailed)  distribution on $\R^d$, we construct a diffeomorphism $\CP_{\text{total}}: B(R) \to \R^d$ that maps a bounded sampling domain $B(R)$ (open ball of radius $R$) to the unbounded target space. We build this map in two stages: (I) a canonical radial contraction that handles the heavy (polynomial) tails is applied (Subsection~\ref{subsec:contraction}); (II) the contraction is composed with an  automorphism on $B(R)$ that re-centres and adjusts the target geometry (Subsection~\ref{subsec:automorphism}). Correctly tuned, Step~(I) ensures uniform ergodicity for arbitrary heavy-tailed targets. Step~(II) enables non-asymptotic analysis of the convergence and leads to rapid mixing in applications.  

\subsection{Diffeomorphic Radial Contraction}
\label{subsec:contraction}
Fix a state space $\R^d$, denote by $|x|$ the standard Euclidean norm on $\R^d$ for any $x\in\R^d$.  
Let $B(R)\coloneqq \{x\in\R^d:|x|<R\}$ denote the ball centred around the origin with radius $R>0$. The foundational component of our framework, which we now define, is the diffeomorphism $\CP_\beta: B(1)\to \R^d$ (for some $\beta\in(0,\infty)$) whose inverse acts as a non-uniform diffeomorphic radial contraction.

Let $\phi_\beta:[0,1)\to[0,\infty)$ be strictly increasing and differentiable, 
satisfying 
\begin{equation}
\label{eq:beta_condition_for_phi}
(1-t)\phi_\beta(t)^\beta \to c\quad\&\quad (1-t)\frac{\phi_\beta'(t)}{\phi_\beta(t)}\to c' \qquad\text{as $t\uparrow1$ for some $c,c'\in(0,\infty)$.}  
\end{equation}
Assume  $\CP_\beta(y) := \phi_\beta(|y|) y$ is a diffeomorphism with Jacobian
$J_{\CP_\beta}(y)\coloneqq |\det D(\CP_\beta)(y)|$
equal to $\phi_\beta\left(|y|\right)^{d-1}
\left(
\phi_\beta(|y|\right)
+\phi_\beta'\left(|y|\right)|y|)$,
$y \in B(1)$. 
It is easily seen that~\eqref{eq:beta_condition_for_phi} implies the following limit:
\begin{equation}
\label{eq:Jacobian_asymptotic_beta}\frac{J_{\CP_\beta}(y)}{|\CP_\beta(y)|^{d+\beta}}\to\frac{c'}{c}\in(0,\infty)\qquad\text{as $|y|\uparrow 1$.}
\end{equation}
The  motivation for our choice of the diffeomorphism $\CP_\beta$ is that the singularity of the Jacobian $J_{\CP_\beta}$ at the boundary of the ball $B(1)$ is, by the limit in~\eqref{eq:Jacobian_asymptotic_beta},
of order $(d+\beta)$. By picking $\beta\in(0,\infty)$ smaller than the 
polynomial rate of decay of the target distribution, limit~\eqref{eq:Jacobian_asymptotic_beta} implies the uniform ergodicity of Algorithm~\ref{alg:DCS} (see Theorem~\ref{thm:Uniform_ergodicity_unified} below  and Section~\ref{subsec:literature} for further discussion of the motivation and advantages of this transformation).



A simple choice for $\phi_\beta$ satisfying~\eqref{eq:beta_condition_for_phi}, used in all our examples (see Section~\ref{sec:sim} below), is given by
$$\phi_{\beta}(t)\coloneqq (1-t^\beta)^{-1/\beta}\quad \text{ for $t\in[0,1)$ and some  $\beta\in(0,\infty)$.}
$$
In this case the diffeomorphism
$\CP_\beta:B(1)\to\R^d$ is given by the formula
\begin{equation}
\label{eq:compact_projection_first}
    \CP_\beta(y) = \frac{y}{  \left(1 - |y|^\beta\right)^{1/\beta}} \quad \text{with Jacobian} \quad J_{\CP_\beta}(y) = \left( 1 - |y|^\beta \right)^{-(1 + d/\beta)},\quad y \in B(1).
\end{equation}


\subsection{Automorphism of the ball and the total diffeomorphism}
\label{subsec:automorphism}
While the base radial contraction provides a robust mechanism for handling heavy tails, its strict symmetry implicitly assumes that the probability mass is nearly isotropic and centred around the origin. For highly asymmetric targets, or targets with modes located far from the origin, a purely symmetric radial expansion is inefficient. Contracting a target distribution with modes far from the origin, into the ball may map most of its mass into regions of extremely high curvature of the Jacobian $J_{\CP_\beta}$ near the boundary. This would lead to poor mixing and ill-conditioned geometries for  sampling algorithms.

To address this issue, we introduce composition of diffeomorphisms 
$\CP_{\text{total}} : B(R) \to \R^d$, parametrised by $\theta = (\mu, R, \omega)$,
that geometrically pre-conditions the pull-back of the target distribution on the state space $B(R)$. To maintain analytical simplicity while preserving the physical sampling domain as $B(R)$, we decouple the geometric scale $R$ from the non-linear transformation of the state space given on the unit ball $B(1)$. 
More precisely, for any $z \in B(R)$, we normalize it to the unit ball $z/R \in B(1)$, apply a geometry-altering   diffeomorphism $\mathcal{T}_\omega: B(1) \to B(1)$ parametrised by $\omega$, map it to the unbounded space $\R^d$ using the canonical radial expansion $\CP_\beta:B(1)\to\R^d$ and finally apply rescaling and recentring in $\R^d$:
\begin{equation}
\label{eq:total_projection}
\tag{$\RCtot$}
    \CP_{\text{total}}(z) := \mu + R \cdot \CP_\beta(\mathcal{T}_\omega(z/R)),
    \qquad z\in B(R).
\end{equation}
\refstepcounter{equation}
The Jacobian $J_{\CP_{\text{total}}}(z)=|\det D(\CP_{\mathrm{total}})(z)|$ of $\CP_{\mathrm{total}}$ equals
\begin{equation}
\label{eq:total_jacobian}
J_{\CP_{\text{total}}}(z) = 
J_{\CP_\beta}\left(\mathcal{T}_\omega(z/R)\right)
J_{\mathcal{T}_\omega}(z/R)
\qquad\text{for  $z\in B(R)$.}
\end{equation}
The inverse of $\CP_{\text{total}}$ at  $x=\CP_{\text{total}}(z)$ is a diffeomorphism  $\CP_{\text{total}}^{-1}:\R^d\to B(R)$ given by 
\begin{equation}
\label{eq:total_inverse}
    \CP_{\text{total}}^{-1}(x) = R \cdot \mathcal{T}_\omega^{-1}\circ \CP_\beta^{-1}((x-\mu)/R), 
\end{equation}
If the diffeomorphism $\CP_\beta$ is given by~\eqref{eq:compact_projection_first}, its inverse equals 
 $\CP_\beta^{-1}(x')=x'/(1+|x'|^\beta)^{1/\beta}$, $ x'\in \R^d$.

Crucially, to preserve the boundary behaviour required by uniform ergodicity (Section~\ref{subsec:uniform_ergodicity} below), we assume $\mathcal{T}_\omega$ extends to a diffeomorphism on the closed ball $\overline{B}(1):=\{x\in\R^d: |x|\leq1\}$. This requirement guarantees that the Jacobian  $J_{\mathcal{T}_\omega}(y)$ is strictly bounded away from zero and infinity as $|y|\to1$, leaving the asymptotic tail mappings strictly intact. 

\subsection{Diffeomorphic Contraction Sampler}
Let $\pi:\R^d\to(0,\infty)$ be proportional to a probability density function of a measure $\nu$ on the Borel sets of $\R^d$. Consider the pull-back measure $\nu_B:=\nu(\CP_{\text{total}}(\cdot))$ on Borel sets in $B(R)$ (see~Proposition~\ref{prop:diffeo_invariance} below for formal definition). Let $\pi_B:B(R)\to(0,\infty)$ be a function proportional to the density $\nu_B:B(R)\to(0,\infty)$ of the probability measure 
$\nu_B$ (the symbol $\nu_B$ is overloaded throughout). The change-of-variable formula implies that  
\begin{equation}
\label{eq:total_target}
    \pi_{B}(z) := \pi(\CP_{\text{total}}(z)) \cdot  J_{\CP_{\text{total}}}(z), \quad z \in B(R),
\end{equation}
is proportional to the density of $\nu_B$ on $B(R)$.
If the density of $\nu$ satisfies the tail condition $\limsup_{|x|\to\infty}\pi(x)|x|^{d+\beta}<\infty$, the function $\pi_{B}$ (and hence the density of $\nu_B$) is bounded above by a constant on $B(R)$.

The core idea behind the framework is very simple: rather than exploring the heavy-tailed target $\pi$ on $\R^d$, we employ an efficient sampling algorithm (based on a Markov kernel $Q:B(R)\times\mathcal{B}(B(R))\to[0,1]$) to target the well-conditioned density $\pi_{B}$ on the bounded state space $B(R)$. The generated samples are then pushed forward via $\CP_{\text{total}}$ to obtain samples from the target distribution $\nu$. 

\begin{namedalgorithm}{DCS}{Diffeomorphic Contraction Sampler}
\begin{algorithmic}[1]
\Statex \textbf{Input:} $\pi$ proportional to density on $\R^d$, the map  in~\eqref{eq:total_projection}, Markov kernel $Q$ targeting $\nu_B$.
\State Current state: $X_n = x\in\R^d$
\State Simulate the increment $\hat{\mathbf{Z}}\in B(R)$ of the $Q$-chain:\label{alg:DCS:step2}
\begin{itemize}
    \item Map to the ball: $\mathbf{z} := \CP_{\text{total}}^{-1}(x)$ \Comment{Current state in $B(R)$ via~\eqref{eq:total_inverse}}
    \item Sample $\hat{\mathbf{Z}}$ from the Markov kernel $Q(\mathbf{z}, \cdot)$
\end{itemize}
\State Set $X_{n+1} = \CP_{\text{total}}(\hat{\mathbf{Z}})$
\end{algorithmic}
\end{namedalgorithm}

The Markov kernel $Q$ on $B(R)$ in Algorithm~\ref{alg:DCS} will be given by the widely used algorithms for sampling on convex bodies in~$\R^d$: Ball walk~\ref{alg:DCS-BW}, Hit-and-Run~\ref{alg:DCS-HnR}, Random Walk Metropolis~\ref{alg:DCS-RWM}, Langevin Monte Carlo~\ref{alg:DCS-LMC}  and Dikin walk~\ref{alg:DCS-DW} (see Appendix~\ref{app:Algs} below).  

\subsubsection{Hyperparameter Optimization in Algorithm~\ref{alg:DCS} via Variational Inference}
\label{sec:optimization}

In the simplest setting, when the target distribution $\pi$ is approximately centred and isotropic, Algorithm~\ref{alg:DCS} is effective with 
$\CP_\text{total}$ equal to the canonical radial contraction $\CP_{\beta}$ in~\eqref{eq:compact_projection_first} with a trivial  automorphism (e.g., $\mathcal{T}_\omega = \mathrm{Id}$) and without centring (i.e., $\mu=0$). In practice, however, target distributions are rarely symmetric: they may be anisotropic and/or exhibit more complicated geometries. To obtain the fast-mixing behaviour guaranteed by our non-asymptotic theory, we thus need to learn a suitable geometry-correcting diffeomorphism of the ball, $\mathcal{T}_\omega:B(1)\to B(1)$, together with the overall scale parameter $R$ and the location parameter $\mu$.

To fully leverage the flexibility of the parametrized diffeomorphism $\CP_{\text{total}}$,  hyperparameter $\theta = (\mu, R, \omega)$ should be tuned/optimised. A good choice of $\theta$ is one for which the pull-back of $\pi$ onto $B(R)$ is as close to the uniform measure as possible, yielding very fast mixing (see Section~\ref{subsec:non_asymptotic} below for the theoretical results). Inspired by~\cite{hoffman2019neutra,parno2018transport}, we approach this tuning problem by casting the choice of $\theta$ as a Variational Inference task of minimising the KL-divergence of the target with respect to an appropriate ``nicer'' density.  In~\cite{hoffman2019neutra,parno2018transport} it was shown that  VI  can significantly improve the performance of a sampling algorithm. In our case, the KL-divergence is minimised with respect to the uniform measure on $B(R)$, since this distributions offers rapid non-asymptotic mixing as discussed in Section~\ref{subsec:non_asymptotic} below. The VI procedure used in some numerical examples of Section~\ref{sec:sim} is described in Appendix~\ref{app:mobius_vi} below.

\subsubsection{M\"obius transformation} A simple, powerful and natural  automorphism $\mathcal{T}_\omega$ is the \textit{M\"obius transformation} on the unit ball $B(1)$ (also used in normalizing flows~\cite{rezende2020normalizing}). 
The parameter $\omega$ is a  vector $\omega = \delta \in B(1)$ (i.e., $|\delta| < 1$) and the transformation $\mathcal{M}_\delta : B(1) \to B(1)$ is given by  
\begin{equation}
\label{eq:mobius_unit}
y\mapsto \mathcal{M}_\delta(y) := \frac{(1 - |\delta|^2)y + (1 + 2\langle y, \delta \rangle + |y|^2)\delta}{1 + 2\langle y, \delta \rangle + |\delta|^2|y|^2},\qquad y\in B(1).
\end{equation}
Note that the M\"obius transformation $\mathcal{M}_\delta$ satisfies the assumptions on the diffeomorphism $\mathcal{T}_\omega$ required by our theory (see Theorem~\ref{thm:Uniform_ergodicity_unified} below).
In the examples of Section~\ref{sec:sim} below, the automorphism of the ball in~\eqref{eq:total_projection} of Section~\ref{subsec:automorphism} above in Algorithm~\ref{alg:DCS} uses the automorphism $\mathcal{M}_\delta$ in~\eqref{eq:mobius_unit}. Note that $\mathcal{M}_0$ is the identity on the unit ball $B(1)$. The Jacobian formula and the stochastic optimisation scheme for
this M\"obius family are given in
Appendix~\ref{app:mobius_vi} below.

\section{Convergence theory for the Diffeomorphic Contraction Sampler}
\label{sec:theoretical_results}

\subsection{Uniform Ergodicity}
\label{subsec:uniform_ergodicity}
In this section we show that Algorithm~\ref{alg:DCS}, based on the Markov kernels in Appendix~\ref{app:Algs},  is uniformly ergodic under the tail decay condition in~\eqref{eq:unif_ergodicity_pi_condition} below. In contrast to the non-asymptotic bounds in Theorem~\ref{thm:dcs_bounds} below,  requiring finer properties of the target (see~\ref{assump:convex}, \ref{assump:smooth}, \ref{assump:warm}, \ref{assump:isotropic},
\ref{assump:lipschitz}, \ref{assump:cold} in Section~\ref{subsec:non_asymptotic} below),
uniform ergodicity of Algorithm~\ref{alg:DCS} holds under a polynomial tail-decay Assumption~\eqref{eq:unif_ergodicity_pi_condition} only. 
The proof of our uniform ergodicity result, Theorem~\ref{thm:Uniform_ergodicity_unified}, is in Appendix~\ref{app:uniform_ergodicity_proofs} below.

\begin{thm}[Uniform Ergodicity of Algorithm~\ref{alg:DCS}]
\label{thm:Uniform_ergodicity_unified}
Let $\beta > 0$, $R > 0$, $\mu \in \R^d$. 
Assume that  strictly increasing, differentiable $\phi_\beta:[0,1)\to[0,\infty)$ satisfies~\eqref{eq:beta_condition_for_phi} and  $\CP_\beta:B(1)\to\R^d$ is a $C^1$-diffeomorphism.
Let $\mathcal{T}_\omega:\overline{B}(1)\to\overline{B}(1)$ be a $C^1$-diffeomorphism of the closed unit ball in $\R^d$ and consider the diffeomorphism $\RCtot:B(R)\to\R^d$ defined in~\eqref{eq:total_projection}. Let a continuous $\pi:\R^d\to(0,\infty)$  
satisfy the tail-decay condition:
\begin{equation}
\label{eq:unif_ergodicity_pi_condition}
\tag{A-$\beta$}
\limsup_{|x|\to\infty} \pi(x)|x|^{d+\beta} <\infty.
\end{equation}
\refstepcounter{equation}Consider 
the Markov chain $X=(X_n)_{n\in\N}$ with increments given by Algorithm~\ref{alg:DCS}
targeting a probability measure $\nu$ on $\R^d$ with density proportional to $\pi$. 
If the proposal kernel $Q$ in 
Algorithm~\ref{alg:DCS} is given by  any of the following algorithms,
\ref{alg:DCS-HnR}, \ref{alg:DCS-BW}, \ref{alg:DCS-RWM}  or \ref{alg:DCS-DW} with covariance-floored proposal
$\Sigma_\lambda$ in~\eqref{eq:regularised:Dikin_H} for some
$\lambda>0$,
then  there exist constants $C>0$ and $\rho\in(0,1)$ such that 
$$ \| \P_x(X_n \in \cdot) - \nu \|_{\mathrm{TV}}\leq C\rho^{n}\quad\text{for all $x\in\R^d$ and $n\in\N$,} $$
making the chain $X$ \textbf{uniformly ergodic} on $\R^d$.
\end{thm}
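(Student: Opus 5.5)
The chain $X$ is the push-forward under $\RCtot$ of the $Q$-chain $Z_n=\RCtot^{-1}(X_n)$ on $B(R)$. Total variation is invariant under measurable bijections, so the plan is to reduce everything to $B(R)$:
$$\|\P_x(X_n\in\cdot)-\nu\|_{\TV}=\|Q^n(z,\cdot)-\nu_B\|_{\TV},\qquad z=\RCtot^{-1}(x),$$
by Proposition~\ref{prop:diffeo_invariance}. It then suffices to show that $Q$ is uniformly ergodic on $B(R)$ with respect to $\nu_B$. For this I will prove a Doeblin minorisation: there exist $m\in\N$, $\epsilon>0$ and a probability measure $\eta$ on $B(R)$ such that
$$Q^m(z,\cdot)\ge\epsilon\,\eta(\cdot)\qquad\text{for all }z\in B(R).$$
The standard coupling bound then gives $C\rho^n$ with $\rho=(1-\epsilon)^{1/m}$.

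\textbf{Step 1: $\pi_B$ is bounded above and below on compacts.} Write $y=\mathcal{T}_\omega(z/R)$. By~\eqref{eq:total_target} and~\eqref{eq:total_jacobian}, $\pi_B(z)=\pi(\mu+R\,\CP_\beta(y))\,J_{\CP_\beta}(y)\,J_{\mathcal{T}_\omega}(z/R)$. Because $\mathcal{T}_\omega$ is a $C^1$-diffeomorphism of the closed ball, $J_{\mathcal{T}_\omega}$ is continuous and positive on $\overline B(1)$, hence bounded between two positive constants. Moreover $|y|\to1$ exactly when $|z|\to R$.

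Near the boundary, \eqref{eq:Jacobian_asymptotic_beta} gives $J_{\CP_\beta}(y)\asymp|\CP_\beta(y)|^{d+\beta}$. Since $|\mu+R\,\CP_\beta(y)|\sim R|\CP_\beta(y)|\to\infty$, assumption~\eqref{eq:unif_ergodicity_pi_condition} gives $\pi(\mu+R\,\CP_\beta(y))\lesssim|\CP_\beta(y)|^{-(d+\beta)}$. Combining these, $\pi_B$ stays bounded near $\partial B(R)$. On compact subsets of $B(R)$, continuity and positivity of $\pi$ and of the Jacobians show that $\pi_B$ is bounded above and bounded away from zero. Hence
$$M:=\sup_{B(R)}\pi_B<\infty,$$
and for some $r<R$ we have $m_r:=\inf_{B(r)}\pi_B>0$. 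Note that no lower bound near the boundary is needed, and the tail condition is used only for the upper bound.

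\textbf{Step 2: kernel-specific minorisation.} This step is where boundedness of $\pi_B$ enters, through lower bounds on Metropolis acceptance probabilities of the form
$$\min\Bigl(1,\frac{\pi_B(w)}{\pi_B(z)}\Bigr)\ge\frac{\pi_B(w)}{M}.$$

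For \ref{alg:DCS-BW} and \ref{alg:DCS-RWM}, I first handle the move from an arbitrary $z\in B(R)$ into $B(r)$. In the Ball Walk, any $z$ can reach $B(r)$ in a bounded number $k\approx 2R/\delta$ of steps, each of which lands with positive probability in a region where $\pi_B\ge m_{r'}$. For this I use a chain of balls along the segment from $z$ towards the origin. A cleaner alternative is to first show that a single step from any $z$ puts mass at least $c>0$ into a fixed-size sub-ball shifted inward. Then, from any point of $B(r)$, one step has density at least
$$\frac{m_{r}}{M}\times(\text{proposal density})$$
on a common ball, which gives $\eta$ proportional to Lebesgue measure on that ball. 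For RWM with Gaussian proposals on $B(R)$ (proposals outside the ball rejected), one step suffices: the proposal density on $B(r)$ is bounded below uniformly in $z\in B(R)$ because $B(R)$ is bounded, so $Q(z,\cdot)\ge c\,\mathrm{Leb}|_{B(r)}$ directly.

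For \ref{alg:DCS-HnR}, the one-step density from $z$ at $w$ is proportional to $\pi_B(w)\big/\bigl(|z-w|^{d-1}\int_{\ell_{z,w}}\pi_B\bigr)$. Since the chord satisfies $\int_{\ell}\pi_B\le 2RM$ and $|z-w|\le 2R$, this density is at least $c\,\pi_B(w)\ge c\,m_r$ on $B(r)$.

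For \ref{alg:DCS-DW} with the floored covariance $\Sigma_\lambda$, the proposal is Gaussian with covariance bounded below by $\lambda$ and bounded above on the bounded domain. The Metropolis ratio involves determinant ratios that are bounded by the same floors. So a one-step minorisation on $B(r)$ follows exactly as in the RWM case.

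\textbf{Main obstacle.} I expect the main obstacle to be the Ball Walk, and more generally uniformity as $z\to\partial B(R)$. Near the boundary, $\pi_B$ may tend to zero; for instance, if the tail of $\pi$ is strictly lighter than $|x|^{-(d+\beta)}$, then $\pi_B(z)\to0$ as $|z|\to R$. In that regime the chain may also reject often. For the Ball Walk, however, being near the boundary actually helps, because the ratio $\pi_B(w)/\pi_B(z)$ is large when $\pi_B(z)$ is small. What must be checked is that the inward-moving fraction of the $\delta$-ball around $z$ is bounded below uniformly in $z$, and that $\pi_B$ is bounded below on that fraction. I would handle this by splitting $B(R)$ into the compact core $\overline B(r)$, where Step 1 gives two-sided bounds, and the annulus $B(R)\setminus B(r)$. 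On the annulus, I would use $\pi_B(w)/M$ as the acceptance lower bound and aim for a minorisation of $Q^k$ by Lebesgue measure on $B(r)$.
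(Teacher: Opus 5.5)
Your proposal is correct and follows essentially the same route as the paper. The shared steps are: TV invariance under $\RCtot$ (Corollary~\ref{cor:convergence_equivalence}), the two-sided bounds on $\pi_B$ of Lemma~\ref{lem:density_bounds} (you also note that no lower bound near $\partial B(R)$ is needed), a one-step Doeblin minorisation for Hit-and-Run, RWM and the floored Dikin walk, a chain of balls towards the origin for the Ball Walk using the acceptance bound $\pi_B(w)/\sup\pi_B$, and then the standard Doeblin contraction. For the Ball Walk, make sure the number of steps is the same for every starting point $z$; the paper does this with $m=\lceil 4R/\gamma\rceil$ equally spaced centres $c_i=z(1-i/m)$ and intermediate balls of radius $\min(\gamma/4,R/(4m))$.
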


\begin{rem}
Theorem~\ref{thm:Uniform_ergodicity_unified} shows that uniform ergodicity of Algorithm~\ref{alg:DCS} is determined solely by the choice of the parameter $\beta$. Equivalently, whenever condition~\eqref{eq:unif_ergodicity_pi_condition} is satisfied, Algorithm~\ref{alg:DCS} is uniformly ergodic for every admissible choice of the remaining parameters in~\eqref{eq:total_projection}.
\end{rem}

\begin{rem}
\label{rem:dikin_dichotomy}
The parameter $\lambda$ in Algorithm~\ref{alg:DCS-DW} distinguishes the
classical Dikin walk studied in~\cite{kook2024gaussian} from the covariance-floored version used in the uniform
ergodicity theorem. When $\lambda=0$ and $H=\nabla^2\varphi$, where
$\varphi$ is the logarithmic barrier in~\eqref{eq:local_metric}, we recover
the classical Dikin walk, for which rapid warm-start non-asymptotic bounds are
available on the ball.\footnote{For any $C^2$-function $f$ on a domain in $\R^d$, the Hessian of $f$ is a matrix-valued function, denoted by $\nabla^2 f$, consisting of the second partial derivatives of $f$.} When $\lambda>0$,
the proposal covariance
       $ \Sigma_\lambda(z)
        =
        \left(H(z)^{-1}+\lambda I_d\right)\gamma^2/d$
is uniformly bounded from below on $B(R)$. This removes the singularity of the variance of
the classical Dikin covariance near the boundary and is the version used in
Theorem~\ref{thm:Uniform_ergodicity_unified}.
\end{rem}

\begin{rem}
\label{rem:uniform_langevin}
Under mild assumptions, Theorem~\ref{thm:Uniform_ergodicity_unified} establishes uniform ergodicity for the class of  samplers not relying on the gradient information of the target. In contrast, establishing uniform ergodicity for Algorithm~\ref{alg:DCS} with the gradient-based Markov kernel $Q$ in  Algorithm~\ref{alg:DCS-LMC} would require stronger regularity assumptions on $\nabla \log \pi_B$. Answering these important questions would entail a careful analysis of the interaction between the diffeomorphism $\RCtot$ and the gradient structure of the target density proportional to $\pi$. A detailed theoretical and  computational study of Algorithm~\ref{alg:DCS} with the Langevin-type Markov kernel in  Algorithm~\ref{alg:DCS-LMC} is left for the future research.
\end{rem}

\subsection{Non-asymptotic bounds for Algorithm~\ref{alg:DCS}}
\label{subsec:non_asymptotic}
This section utilises the properties of the transformed process along with the established theory~\cite{lovasz2004hit,MR2309621,jiang2024regularized,MR3802303,kook2024gaussian} 
on the total-variation mixing times of Markov chains given by the algorithms in Appendix~\ref{app:Algs} below, to deduce non-asymptotic convergence bounds for the corresponding instances of the Diffeomorphic Contraction Sampler~\ref{alg:DCS}.
The strategy is simple:
non-asymptotic convergence bounds for 
the Markov chain $(X_n)_{n\in\N}$ in $\R^d$  with increments given by Algorithm~\ref{alg:DCS}
can be obtained by studying the 
transformed chain $(\RCtot^{-1}(X_n))_{n\in\N}$ for the diffeomorphism $\RCtot^{-1}$ in~\eqref{eq:total_inverse} above.

The non-asymptotic theoretical results require more structure (e.g. log-concavity of the transformed target measure $\nu_B$) than their asymptotic counterparts in Theorem~\ref{thm:Uniform_ergodicity_unified} above, which hold under comparatively milder asymptotic conditions. We begin by stating the key assumptions on the fully preconditioned target $\pi_{B}$, which allow us to obtain the non-asymptotic convergence results on the projected sampling algorithms.

Recall that $\nu_B$ is the probability
measure on $B(R)$ (given by the pull-back of the target $\nu$ on $\R^d$) with density proportional to $\pi_B:B(R)\to(0,\infty)$. 
Let $U_B \coloneqq -\log \pi_{B}$ be the $C^2$-potential of $\pi_B$ and, for any $a\in(0,\infty)$, denote by $\mathcal{L}(a) \coloneqq \{z\in B(R): \pi_B(z)\ge a\}$ the level sets of $\pi_B$. Consider the following properties the measure $\nu_{B}$ might possess.

\begin{description}[leftmargin=2cm, style=standard]

    \item[\customlabel{assump:convex}{\textcolor{red}{(Conv)}}] \textbf{Convexity.} 
    The function $U_B$ is convex: 
    \[ U_B(y) \ge U_B(z) + \langle \nabla U_B(z), y - z \rangle\quad\text{for all $z,y\in B(R)$.} \]

    \item[\customlabel{assump:smooth}{\textcolor{red}{(Smooth)}}] \textbf{$M$-Smoothness.} 
    The gradient $\nabla U_B$ is $M$-Lipschitz continuous:
    \[ |\nabla U_B(z) - \nabla U_B(y)| \le  M|z - y|\quad\text{for all $z,y\in B(R)$.}  \]

    \item[\customlabel{assump:lipschitz}{\textcolor{red}{(Lip)}}] \textbf{$L$-Lipschitz Continuity.} 
    The function $U_B$ is $L$-Lipschitz continuous:
    \[ |U_B(z) - U_B(y)| \le L |z - y|\quad\text{for all $z,y\in B(R)$.} \]

\item[\customlabel{assump:isotropic}{\textcolor{red}{(Iso)}}]
\textbf{$C$-isotropy.}
The target $\nu_B$ is $C$-isotropic for some constant  $C\in(0,\infty)$ if
\[
        C^{-1}\leq 
        \int_{B(R)}\langle u, z\rangle^2 \nu_B(z) dz
        \leq C, \qquad \text{for every unit vector $u\in \mathbb {S}^{d-1}\subset\R^d$.}
\]
    \item[\customlabel{assump:warm}{\textcolor{red}{(Warm)}}] \textbf{Warm Start.} 
    The initial distribution $P_0$ is sufficiently close to $\nu_{B}$: there exists a constant $H_0>0$ satisfying
    \[ P_0(A) \leq H_0 \nu_{B}(A) \quad\text{for all $A\in \mathcal{B}(B(R))$.} \]
 \item[\customlabel{assump:cold}{\textcolor{red}{(Cold)}}] \textbf{Cold Start.} For constants $\underline{r},\overline{r}>0$ we have: $\int_{B(R)} |z'|^2 \nu_B(z')dz'\leq \overline r^2$ and if $\nu_B(\mathcal{L}(c))\geq 1/8$ for some $c>0$ then $\mathcal{L}(c)\supset  z'+ B(\underline r)$ for some $z'\in B(R)$.
    The starting point $z\in B(R)$ (i.e.,  $P_0 = \delta_z$) is at distance at least $m>0$  from boundary of the level set $\mathcal{L}(\pi_B(z)/2)$ and satisfies $\pi_B(z)\geq \varrho^d\sup_{z'\in B(R)}\pi_B(z')$ for some constant $\varrho\in(0,1]$. 
\end{description}

\begin{rem}
\label{rem:non_asympt_assumpions}
Since $U_B \in C^2$, Assumption~\ref{assump:convex} is equivalent to the Hessian being positive semi-definite: $\nabla^2 U_B(z) \succeq 0$.\footnote{By definition, for matrices $A,B\in\R^{d\times d}$ we have $A\succeq B$ if $A-B$ is a non-negative definite matrix.} Assumption~\ref{assump:smooth} implies an upper bound on the curvature: $\nabla^2 U_B(z) \preceq M \mathbf{I}$, while Assumption~\ref{assump:lipschitz} implies the norm of the gradient is bounded: $|\nabla U_B(z)| \le L$. Assumption~\ref{assump:isotropic} quantifies how far $\nu_{B}$ is  from an isotropic measure.  Assumption~\ref{assump:cold} guarantees that $\nu_B$ is sufficiently well ``spread out'' and that the initial point lies in the part of $B(R)$ with sufficient $\nu_B$-mass.  
\end{rem}

For any $\epsilon\in(0,1)$, define the
\textit{total-variation mixing time} of the~\ref{alg:DCS} chain $X$, started at an initial distribution $\varsigma$ on $\R^d$, by
\begin{equation}
\label{eq:mixing_def}
        \tau_X(\epsilon,\varsigma)
        :=
        \inf\left\{
        n\ge0:
        \left\| \P_{\varsigma}(X_n\in \cdot)-\nu\right\|_{\TV}\le\epsilon
        \right\}.
\end{equation}
The results in~\cite{MR2309621,lovasz2004hit,MR3802303,kook2024gaussian} and Corollary~\ref{cor:convergence_equivalence} (see Appendix~\ref{app:proofs_non_asymptotic} below) imply the following non-asymptotic bounds for Algorithm~\ref{alg:DCS}. 
The proof of Theorem~\ref{thm:dcs_bounds} is also in Appendix~\ref{app:mixing_proofs} below.

\begin{thm}[Non-Asymptotic Bounds for~\ref{alg:DCS}]
\label{thm:dcs_bounds}
Recall $\nu$ is  target distribution on $\mathbb{R}^d$  with density proportional to $\pi:\R^d\to(0,\infty)$, $X=(X_n)_{n \in \mathbb{N}}$ a Markov chain generated by Algorithm~\ref{alg:DCS} (with
initial distribution $\varsigma$ on $\R^d$)
targeting $\pi$ and $\pi_B:B(R)\to(0,\infty)$ is given in in~\eqref{eq:total_target}. 
Then for any $\epsilon\in(0,1)$ the following non-asymptotic bounds hold. 

\begin{enumerate}
    \item[(a)] \textbf{\ref{alg:DCS-BW}} Assume that $\pi_B$ satisfies Assumptions~\ref{assump:convex}, \ref{assump:isotropic}, \ref{assump:warm}  hold,  and that the Ball-Walk step size satisfies $ 0<\gamma
        \le \epsilon^2/(2^{10}eH_0^2\sqrt{Cd})$.  Then the mixing time of the chain $X$ on $\R^d$, started at $\varsigma=P_0(\RCtot^{-1}(\cdot))$,  satisfies
   $$ \tau_X(\epsilon,\varsigma)\leq
        10^{10}
        Cd^2\gamma^{-2}
        \log\frac{2H_0}{e\sqrt{C}\epsilon}.
    $$
    \item[(b)] \textbf{\ref{alg:DCS-HnR} (Warm start)} If $\pi_B$ satisfies Assumptions \ref{assump:convex}, \ref{assump:isotropic}, and~\ref{assump:warm}, then the mixing time of the chain $X$ in $\R^d$, started at $\varsigma=P_0(\RCtot^{-1}(\cdot))$, satisfies 
    $$
    \tau_X(\epsilon,\varsigma)\leq
        10^{30}C^4 H_0^4 d^3(\log (2H_0/\eps))^3\eps^{-4}.
    $$
 \item[(c)] \textbf{\ref{alg:DCS-HnR} (Cold start)} Fix $x\in\R^d$ and 
let $\pi_B$ satisfy Assumptions~\ref{assump:convex} and
\ref{assump:cold} (with some $z\in B(R)$). 
Then, the mixing time of $X$, started at $\varsigma=\delta_{x}$ with $x=\RCtot(z)$, satisfies
\[
\tau_X\left(\epsilon,
        \varsigma
        \right)
        \le
        10^{31}d^3\frac{\overline r^2}{\underline{r}^2}\left(\log\left(\frac{d\overline r^2}{\eps \underline r m\varrho}\right)\right)^5.
\]
    \item[(d)] \textbf{\ref{alg:DCS-LMC}} If $\pi_B$ satisfies Assumptions \ref{assump:convex}, \ref{assump:smooth}, and \ref{assump:lipschitz} and define $f := \max\{d,R\cdot M,R\cdot L\}/\eps$. Then, there exist constants $C_\textrm{LMC},c_{\textrm{LMC}},C_{\mathrm{step}},c_{\mathrm{step}}>0$, independent of $d,L,R,M,\eps$, such that if the step size in Algorithm~\ref{alg:DCS-LMC} equals $\eta = C_{\mathrm{step}}(\log f)^{c_{\mathrm{step}}} R^2/f^{12}$, and we have   
  $$ N\geq  
C_{\mathrm{LMC}}R^6f^{12}(\log f)^{c_{\mathrm{LMC}}},$$
the chain started from $X_0=x=\RCtot(0)$ satisfies
$\|\mathcal L_x(X_N)-\nu_B
\|_{\mathrm{TV}}
\leq\epsilon.$

    \item[(e)] \textbf{\ref{alg:DCS-DW}} Assume that the Markov kernel in Algorithm~\ref{alg:DCS-DW} uses
$\lambda=0$ and local metric $H=d \nabla^2\varphi$, where
$\varphi$ is defined in~\eqref{eq:local_metric}. Let $\pi_B$ satisfy Assumptions~\ref{assump:convex}, \ref{assump:smooth} and \ref{assump:warm} and the step size equal $\gamma= \min\{1,2d/(MR^2)\}^{1/2}$.
Then there exists as constant $C_\textrm{DW}>0$, independent of $d,R,M,H_0,\epsilon$, such that the mixing time of $X$ started at $\varsigma=P_0(\RCtot^{-1}(\cdot))$ satisfies
        $$\tau_X(\epsilon,\varsigma)
        \le
        C_{\mathrm{DW}}\,
        d^2\max\{
        1,MR^2/(2d)\}
        \log\left(H_0/\epsilon\right).$$
\end{enumerate}
\end{thm}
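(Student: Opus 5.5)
The plan is to reduce every item of Theorem~\ref{thm:dcs_bounds} to a known mixing-time bound for the chain on the ball $B(R)$, using the fact that $\RCtot$ is a bijection. The first step is a transfer lemma (this is the role of Corollary~\ref{cor:convergence_equivalence}). Set $Z_n:=\RCtot^{-1}(X_n)$. By construction of Algorithm~\ref{alg:DCS}, $(Z_n)$ is exactly the Markov chain with kernel $Q$ started at $\varsigma\circ\RCtot$, i.e.\ at $P_0$ or at $\delta_z$. Since $\nu_B$ is the pull-back of $\nu$ (Proposition~\ref{prop:diffeo_invariance}), for every Borel $A\subset\R^d$ we have
\[
\P_\varsigma(X_n\in A)-\nu(A)=\P(Z_n\in\RCtot^{-1}(A))-\nu_B(\RCtot^{-1}(A)).
\]
Taking the supremum over $A$, which runs over all Borel subsets of $B(R)$ because $\RCtot$ is a homeomorphism, gives $\|\P_\varsigma(X_n\in\cdot)-\nu\|_{\TV}=\|\P(Z_n\in\cdot)-\nu_B\|_{\TV}$. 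Hence $\tau_X(\epsilon,\varsigma)=\tau_Z(\epsilon,\varsigma\circ\RCtot)$. The same identity shows that the warm-start and cold-start hypotheses on $P_0$ and $z$ are exactly the hypotheses of the ball results.

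The second step applies the literature item by item to the log-concave target $\nu_B$ on the convex body $B(R)$.

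For (a), I would use the Ball Walk conductance bound of Kannan--Lov\'asz--Simonovits / Lov\'asz--Vempala~\cite{MR2309621}. It gives $s$-conductance $\gtrsim \gamma/(\sqrt{d}\,\sqrt{C})$ for a $C$-isotropic log-concave density once the step size satisfies the stated restriction. The Lov\'asz--Simonovits inequality with warmth $H_0$ then gives $\|\cdot\|_{\TV}\le H_0 s+H_0 s^{-1}(1-\Phi_s^2/2)^n$. Choosing $s\asymp \epsilon/H_0$ produces the displayed $Cd^2\gamma^{-2}\log(\cdot)$ bound.

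For (b) and (c), I would quote the Hit-and-Run theorems of Lov\'asz--Vempala~\cite{lovasz2004hit} for log-concave functions. The warm-start $d^3$ bound has polynomial dependence on $H_0/\epsilon$ under near-isotropy. The cold-start bound is stated in terms of the second moment $\overline r^2$, an inner ball of radius $\underline r$ in the level set of measure $\ge1/8$, and a start point at distance $m$ from the boundary of the half-level set with $\pi_B(z)\ge\varrho^d\max\pi_B$. This is precisely assumption \ref{assump:cold}.

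For (d), I would invoke the constrained/projected Langevin result for log-concave, smooth, Lipschitz potentials on a convex body of diameter $2R$~\cite{MR3802303}. Substituting $\mathrm{diam}=2R$ into its polynomial bound and absorbing constants gives the form involving $f$.

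For (e), I would use the Gaussian Dikin walk bound of~\cite{kook2024gaussian} for $M$-smooth log-concave targets on a polytope/convex body with a self-concordant barrier. For the ball, the barrier $\varphi$ has parameter of order one, so the metric $d\nabla^2\varphi$ plays the role of the scaled barrier. With the step size $\gamma$ chosen there, the mixing time is $d^2\max\{1,MR^2/(2d)\}\log(H_0/\epsilon)$.

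The main obstacle will be bookkeeping rather than new mathematics. Each cited theorem is stated with its own normalisation: Lov\'asz--Vempala use a conductance or profile bound in a specific isotropic form, while the Dikin result is stated for the barrier parameter of a polytope. I expect most care to be needed in checking that $C$-isotropy (rather than exact isotropy) and the ball's barrier constants yield the explicit numerical constants $10^{10}$, $10^{30}$ and $10^{31}$. For (b) I would first rescale so that $\nu_B$ becomes isotropic via a linear map with condition number at most $C$, and then track how $C$ enters the conductance. This is where the factor $C^4$ should come from.
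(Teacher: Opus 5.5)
Your overall route is the paper's: Corollary~\ref{cor:convergence_equivalence} gives $\tau_X(\eps,\varsigma)=\tau_Z(\eps,\varsigma\circ\RCtot)$, and each item is then a citation for the log-concave $\nu_B$ on $B(R)$. Your transfer step is correct, and (c), (d) match the paper; for (c) the result actually used is \cite[Cor.~1.2]{lovasz2006fast}, whose conditions (LS1)--(LS2) are exactly \ref{assump:cold}. However, the item-level steps contain errors in (a) and (b) and a missing step in (e).

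\textbf{(a).} The conductance you quote, $\gamma/(\sqrt d\sqrt C)$, is not what \cite[Thm~2.2]{MR2309621} provides, and it is inconsistent with your own conclusion. Inserted into the Lov\'asz--Simonovits inequality it would give $Cd\gamma^{-2}$ steps, not $Cd^2\gamma^{-2}$. That theorem is stated in terms of the second-moment radius, and $C$-isotropy gives $\int|z-m_B|^2\nu_B(\ud z)\le\int|z|^2\nu_B(\ud z)\le Cd$ (sum over an orthonormal basis). This yields $10^{10}d\cdot Cd\cdot\gamma^{-2}\log(\cdot)$. You also never say where the step-size restriction comes from. The paper uses \cite[Lem.~5.13]{MR2309621} to show $\nu_B$ is $a$-rounded with $a=(e\sqrt C)^{-1}$. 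It is this inner radius that turns the step-size hypothesis of \cite[Thm~2.2]{MR2309621} into $\gamma\le\eps^2/(2^{10}eH_0^2\sqrt{Cd})$.

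\textbf{(b).} Rescaling $\nu_B$ to isotropic position does not transfer the chain. Hit-and-Run with uniformly distributed directions is not equivariant under non-orthogonal linear maps, so the rescaled chain is a different chain. The pushed-forward direction law differs from the uniform one by density ratios that can be exponential in $d$. No rescaling is needed: the paper applies \cite[Thm~2.3]{MR2309621} directly in the original coordinates, since it is stated for a $C$-isotropic log-concave density with an $H_0$-warm start.

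\textbf{(e).} The factor $\max\{1,MR^2/(2d)\}$ and the step size do not follow from the barrier parameter alone. The reason is that \cite[Thm~3.1]{kook2024gaussian} requires smoothness of $U_B$ relative to the metric $H$, not Euclidean $M$-smoothness. The missing bridge is
$\nabla^2\varphi(z)=\frac{2}{R^2-|z|^2}I_d+\frac{4zz^\top}{(R^2-|z|^2)^2}\succeq\frac{2}{R^2}I_d$.
Hence $H\succeq\frac{2d}{R^2}I_d$ and $0\preceq\nabla^2U_B\preceq MI_d\preceq\frac{MR^2}{2d}H$, i.e.\ $U_B$ has relative smoothness $\beta_{\mathrm{DW}}=MR^2/(2d)$. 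Combined with $\overline\nu\le c_0d$ from \cite[Thm~3.5]{kook2024gaussian}, the bound $d\max\{1,\beta_{\mathrm{DW}}\}\overline\nu\log(H_0/\eps)$ becomes the stated one, with step size of order $\min\{1,\beta_{\mathrm{DW}}^{-1/2}\}$.
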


\begin{rem}
\label{rem:non_asymp_constant_dependence}
The assumptions of Theorem~\ref{thm:dcs_bounds} are formulated in terms of the transformed target $\pi_B$ on $B(R)$. Thus, the constants appearing in the non-asymptotic bounds depend on the parameters of the transformation in~\eqref{eq:total_projection}, namely  $\mu\in\R^d$, $R>0$, $\beta>0$ and the automorphism $\mathcal{T}_\omega$. 
Typically parameters $\beta,\mu$ and $\mathcal{T}_\omega$ do not exhibit a significant dimension dependence. In contrast, the radius of the ball $B(R)$ grows with dimension. 

The parameter $R$ appears in the bounds of Theorem~\ref{thm:dcs_bounds} both explicitly (Algorithms~\ref{alg:DCS-LMC} and~\ref{alg:DCS-DW}) and implicitly (through the constants in the assumptions on $\pi_B$). In a specific example it is possible to track this dependence on dimension through $R$.
For instance, setting $R^2=v+d$ in Example~\ref{ex:student_mixing} below yields a $C$-isotropic transformed target with $C$ proportional to a constant and hence a mixing-time bound proportional to $d^3$ for the warm start Algorithm~\ref{alg:DCS-HnR}. In contrast, choosing $R^2=v$, so that $R$ remains fixed as the dimension grows, yields a $C$-isotropic transformed target $C$ proportional to the dimension $d$, yielding a mixing-time bound of order $d^7$.
\end{rem}

\begin{rem}
The development of non-asymptotic bounds for the convergence of sampling algorithms on convex domains was originally driven by the problem of estimating the volume of a convex body~\cite{MR1608200,lovasz2004hit}. This topic continues to attract significant attention~\cite{lee2024eldan,MR5009943}. Most results are expressed in terms of the dimension $d$ and the ratio $R_K/r_K$ of the radii  of the circumscribed and inscribed balls ($R_K$ and $r_K$, respectively) of the convex body $K\subset \R^d$. For a Euclidean ball, the ratio $R_K/r_K=1$ is clearly minimised  over all convex bodies.

For general convex bodies, the quotient $R_K/r_K$ is typically proportional to a polynomial of the dimension $d$.
Using stochastic localization,   recent work~\cite{MR5009943} shows that the   mixing bound for the uniform distribution on a convex body
depends on $R_K/r_K$ through the factor $\log(R_K/r_K)$  (contributing only a logarithmic factor in dimension $d$). However, for non-uniform  densities on a convex body, the general best known upper bound on $R_K/r_K$ is quadratic~\cite[Thm~1.1]{lovasz2004hit} in dimension $d$. Even though Algorithm~\ref{alg:DCS} could be deployed on a convex body different from a Euclidean ball, these results suggest that a transformation to a Euclidean ball  is the most suitable for achieving rapid mixing.
\end{rem}

\begin{rem}
\label{rem:dikin_non_asymp}
The assumption $\lambda=0$ in the case  Algorithm~\ref{alg:DCS-DW} (based on  a Dikin Walk) in Theorem~\ref{thm:dcs_bounds} appears not to be necessary as suggested by analogous non-asymptotic bounds for regularised Dikin Walks given in~\cite{jiang2024regularized}. The result in~\cite[Thm~3.1]{jiang2024regularized} is not directly applicable in our setting as it uses a slightly different regularisation than the one in Algorithm~\ref{alg:DCS-DW} and  covers distributions truncated on polytopes with finitely many faces. 
\end{rem}

\begin{rem}
\label{rem:langevin_comment_eldan}
Despite having worse theoretical mixing bounds than the other algorithms, Langevin Monte Carlo algorithm~\ref{alg:DCS-LMC} empirically exhibits mixing times comparable to Hit-and-Run with cold start~\cite{MR3802303} (see also~\cite{brosse2017sampling} for related results on Langevin-based sampling).
While strict non-asymptotic bounds are not currently available for the standard RWM with Gaussian proposals on a convex set (cf.,~\ref{alg:DCS-RWM}), the recent paper~\cite{kook2024and}, introducing a modification of the Gaussian proposals,  proves that its non-asymptotic bounds match those of the Ball Walk (cf.,~\ref{alg:DCS-BW}).
\end{rem}

\subsubsection{Which targets $\pi$ satisfy non-asymptotic assumptions after transformation~\eqref{eq:total_projection}?}
\label{subsec:geometric_properties}
It is natural to enquire which transformed target densities  
$\pi_B$ in~\eqref{eq:total_target} satisfy the geometric assumptions required for the non-asymptotic mixing rates in Theorem~\ref{thm:dcs_bounds} above. With this in mind,  we set
$U\coloneqq -\log \pi$ on $\R^d$
and analyse how the geometry of the potential $U$ is affected by  transformation~\eqref{eq:total_projection} to the bounded domain $B(R)$. 
In order to keep the discussion focused, in this section we work with $\CP_\beta$ given in~\eqref{eq:compact_projection_first}.
Using identities~\eqref{eq:total_jacobian},~\eqref{eq:total_target} and
\[
1-|w|^\beta=(1+|v|^\beta)^{-1},\quad\text{where }v=(\RCtot(z)-\mu)/R,
\quad
w=\mathcal T_\omega(z/R)\quad\text{for $z\in B(R)$,}
\]
the transformed potential $U_B= -\log \pi_B$ on $B(R)$ admits the  representation $U_B(z)=\tilde U_B(z)-\log J_{\mathcal T_\omega}(z/R)$  with
\begin{equation}
\label{eq:transformed_potential}
\tilde U_B(z)
\coloneqq
U(\RCtot(z))
-
\left(\frac{d+\beta}{\beta}\right)
\log\left(1+\left|\frac{\RCtot(z)-\mu}{R}\right|^\beta\right).
\end{equation}
Crucially, \(\mathcal T_\omega\) is a \(C^1\)-diffeomorphism with bounded Jacobian on the closed unit ball \(\overline{B}(1)\), the asymptotic behaviour  of $U_B$ as $|z|\uparrow R$ is governed by the corresponding asymptotics of \(\tilde U_B\) in~\eqref{eq:transformed_potential}. 

Consider Assumption~\ref{assump:convex} for $\pi_B$.
As  $\CP_{\text{total}}(z)= \mu + R \cdot \CP_\beta(\mathcal{T}_\omega(z/R))\to\infty$ when $|z|\uparrow R$, the relation in~\eqref{eq:transformed_potential} implies  that the asymptotic convexity of $U_B=-\log \pi_B$ requires the inequality
\begin{equation}
\label{eq:beta_decay_condition}
U(x) = -\log \pi(x)\geq (d+\beta)\log |x| \quad\text{for all large $|x|$.}
\end{equation}
Thus the necessary condition for convexity  near the boundary of $B(R)$ for any target $\pi$ is to decay faster than $|x|^{-(d+\beta)}$ as $|x|\to\infty$. In particular, all targets $\pi$, such that\footnote{By definition, for positive functions $f,g$ we have $f\asymp g$ on a neighbourhood of  $x_0$ if there exist constants $c,C>0$ such that $c\leq f(x)/g(x)\leq C$   for all $x$ sufficiently close to $x_0$.}  $\pi(x) \asymp |x|^{-(d+p)}$ with $p\geq \beta$ as $|x|\to\infty$ will be transformed to the density $\pi_B$ satisfying~\ref{assump:convex} as $|z|\uparrow R$. We stress that the condition in~\eqref{eq:beta_decay_condition}  coincides (up to an additive constant) with the sufficient condition in~\eqref{eq:unif_ergodicity_pi_condition} for the uniform ergodicity of Algorithm~\ref{alg:DCS} in Theorem~\ref{thm:Uniform_ergodicity_unified} above.

 Global convexity of $\pi_B$ on $B(R)$ in~\eqref{eq:student_def} induced by the diffeomorphism $\RCtot$    with $\mathcal{T}_\omega=\Id$ holds for well-behaved unimodal target distributions, such as symmetric student-$t$ distributions (for the definition of the Student-$t$ see~\eqref{eq:student_def} below). However, by employing a suitable transformation $\RCtot$ with a non-trivial automorphism $\mathcal{T}_\omega:B(1)\to B(1)$, complex multimodal distributions can be mapped onto log-concave targets (see Appendix~\ref{app:diffeo_convex} below for general conditions on targets and diffeomorphisms that exhibit this property).

Satisfying the remaining assumptions is less critical, since they do not appear in the non-asymptotic bounds for all samplers. Assumptions~\ref{assump:smooth} and~\ref{assump:lipschitz} are the most restrictive: they require the target density $\pi(x)$ to have heavy polynomial tails that \emph{precisely match} the asymptotic volume expansion of the mapping up to second-order accuracy. In particular,
\begin{equation}
\label{eq:asym_lip}
\pi(x)\,|x|^{d+\beta} \to a_0 \in (0,\infty) 
\qquad \text{ as } |x|\to\infty .
\end{equation}
If~\eqref{eq:asym_lip} does not hold, then $\tilde U_B(z)\to\infty$ or $\tilde U_B(z)\to -\infty$ as $|z|\to R$, violating the $M$-smoothness and $L$-Lipschitz conditions. On the ball $B(R)$, failure of~\eqref{eq:asym_lip} implies that either $\pi_B(z)\to 0$ or $\pi_B(z)\to\infty$ as $|z|\to R$. By contrast, Assumption~\ref{assump:isotropic} imposes no specific tail requirement on $\pi_B$. It is satisfied with small $C$ for isotropic or nearly isotropic targets. For highly skewed targets (e.g., the Rosenbrock ``banana'' density~\cite{MR1828504,rosenbrock1960automatic}), $\mathcal{T}_\omega$ may be used to balance its mass within the unit ball, thereby increasing the lower bound $1/C$.

\begin{rem}{Summary of the geometric trade-offs.}
\label{rem:trade_off}
The diffeomorphic framework establishes a mathematical trichotomy based on the original target measure.
\begin{itemize}
\item \textbf{Tails of $\pi$ with decay faster than $|x|^{-(d+\beta)}$:} this choice of $\beta>0$ yields $\pi_B$ satisfying~\ref{assump:convex} and~\ref{assump:isotropic}. The transformation $\RCtot$ maps these lighter tails into an infinitely steep potential $U_B$ at the boundary.  This property precludes global bounds in Assumptions~~\ref{assump:lipschitz} and~\ref{assump:smooth}.

\item \textbf{Tails of $\pi$ with decay  $|x|^{-(d+\beta)}$:} if $\pi$ asymptotically matches the mapping's expansion in the tails, the transformation $\RCtot$ makes the potential $U_B$ bounded on $B(R)$. This perfectly balances the geometry in the tails, ensuring global bounds required in Assumptions~\ref{assump:smooth} and~\ref{assump:lipschitz}.

\item \textbf{Tails of $\pi$ with decay slower than $|x|^{-(d+\beta)}$:} if the transformation parameter $\beta$ is chosen such that the target tails are heavier than the mapping's expansion, Assumption~\ref{assump:convex} is violated (see~\eqref{eq:transformed_potential} and the discussion that follows it). In this regime, uniform ergodicity fails, resulting in poor mixing performance.
\end{itemize} 

In summary, one should ideally choose $\beta$ to perfectly match the decay rate of $x \mapsto \pi(x)|x|^d$ as $|x|\to\infty$. However, since an exact match is rarely feasible,  it is always preferable to choose $\beta$ that is smaller, rather than larger, compared to the decay rate of  $\pi$. This conservative choice ensures uniform ergodicity and the asymptotic convexity necessary for fast mixing. Conversely, choosing $\beta$ too large  results in the lack of asymptotic convexity and the failure of uniform ergodicity (cf., Theorem~\ref{thm:Uniform_ergodicity_unified} below).
\end{rem}

\subsection{Related literature in the context of our contributions}
\label{subsec:literature}

Sampling from heavy-tailed distributions is notoriously difficult due to the compounding effects of unbounded state spaces, high dimensionality, and vanishing gradients in the tails. As a result, standard MCMC samplers perform poorly in this regime, a limitation both empirically observed~\cite{Roberts07,MR1796485} and theoretically demonstrated~\cite{livingstone2019geometric,roberts1996geometric,brevsar2025central}.

The first step towards a solution to the heavy-tailed sampling problem in~\cite{MR3097969} proposes an automorphism of $\R^d$ that transforms a heavy-tailed $\pi$ to a target with super-exponential tail decay and then runs a RWM algorithm on the transformed distribution. While~\cite{MR3097969} yields a geometrically ergodic algorithm, it fails to achieve uniform ergodicity. Moreover, the framework in~\cite{MR3097969} cannot accommodate substituting RWM with algorithms with better scaling in dimension: if the state space transformation makes the tails of $\pi$ lighter than Gaussian, the resulting gradients would cause the unadjusted Langevin algorithm to diverge~\cite{roberts1996geometric} and MALA/HMC not to be geometrically ergodic. The idea in~\cite{MR3097969}  was employed in~\cite{MR3911112} to obtain exponential convergence rate of the Bouncy Particle Sampler to heavy-tailed targets. However, despite its strong theoretical properties, to the best of our knowledge the algorithm in~\cite{MR3911112} lacks an efficient implementation.

The next important step~\cite{Yang24} in heavy-tailed sampling transforms the density $\pi$ on $\mathbb{R}^d$ to a density on the punctured unit sphere $\mathbb{S}^d\setminus \{N\}$ in $\R^{d+1}$ (where $N$ denotes the North Pole)  via the stereographic projection and runs a RWM on $\mathbb{S}^d$. When $\pi$  possesses $d$ finite moments, this approach yields a uniformly ergodic algorithm, improving upon the geometric ergodicity in~\cite{MR3097969}. However, an important feature of the stereographic projection map is that when the target $\pi$ has fewer than $d$ finite moments (e.g. $\liminf_{|x|\to\infty}\pi(x)|x|^{d+d}=\infty$), the transformed density is unbounded around the North Pole  $N\in\mathbb{S}^d$. Consequently, if the target distribution has fewer than $d-1$ moments, the algorithm in~\cite{Yang24} exhibits slower polynomial convergence than the standard RWM~\cite[p.~6, Table~1]{brevsar2025central}. This pathology was partially remedied in~\cite{grazzi2026sub} through a generalized stereographic projection that maps the Euclidean space to a spherical cap of a hypersphere in $\R^{d+1}$ (rather than the entire punctured sphere $\mathbb{S}^d\setminus \{N\}$). This modification yields a uniformly ergodic sampler  for any target with at least one finite moment. The cost of this adjustment, however, is that the sampler must navigate a curved manifold with a boundary, resulting in a higher numerical cost per sample (see Section~\ref{subsec:comp_stereo} below for simulation tests and comparison with Algorithm~\ref{alg:DCS}).

Our method builds on the ideas in~\cite{brevsar2025central,Yang24,MR3097969} by introducing a novel transformation of the state space, inspired by the theory of sampling algorithms on convex bodies~\cite{jiang2024regularized,MR2309621,lovasz2004hit}. Specifically, we map the unbounded Euclidean space to a bounded ball $B(R)$ in $\R^d$ via a  diffeomorphism. This mapping allows us to leverage powerful algorithms, notably the Ball Walk and Hit-and-Run~\cite{MR2309621,lovasz2004hit} and others, which are known for their fast mixing and numerical efficiency on a round ball $B(R)$~\cite{MR5009943,MR2309621}. Crucially, our map $\RCtot$ in~\eqref{eq:total_projection} is composed of a radial contraction and an automorphism of the ball $B(R)$. The radial contraction yields uniform ergodicity for target distributions with arbitrarily heavy polynomial tails,  bypassing the moment conditions of existing stereographic approaches. The automorphism of $B(R)$ functions as a  flexible preconditioning tool, allowing the map $\RCtot$ to flatten highly skewed and multimodal targets while preserving the uniform ergodicity of Algorithm~\ref{alg:DCS}.
Summarising, Algorithm~\ref{alg:DCS} improves the state-of-the-art in heavy-tailed sampling in the following four ways.
\smallskip

\noindent \textbf{Breaking the heavy-tail barrier.} Algorithm~\ref{alg:DCS} is uniformly ergodic  for targets with arbitrarily heavy polynomial tails.

\smallskip

\noindent \textbf{High numerical efficiency and fast mixing.} Maintains a low computational overhead per sample with fast non-asymptotic mixing guarantees due to the simple geometry of the ball $B(R)$.

\smallskip

\noindent \textbf{Structural flexibility.} Easily accommodates a wide variety of Markov kernels, including RWM, gradient-based Markov Chains, and slice sampling via Hit-and-Run.

\smallskip

\noindent \textbf{Outperforms state-of-the-art.} Numerical examples in Section~\ref{sec:sim} indicate that Algorithm~\ref{alg:DCS} outperforms the No-U-Turn Sampler on real-world target laws with polynomial tails and complex funnel geometry as well as spherical projection samplers~\cite{Yang24,grazzi2026sub} on standard benchmarks for heavy-tailed sampling.

\section{Numerical performance of Algorithm~\ref{alg:DCS}}
\label{sec:sim}

In this section we test the various instances of Algorithm~\ref{alg:DCS} against the state-of-the-art samplers. In Section~\ref{subsec:art_data} and~\ref{subsec:data_real} we compare the~\ref{alg:DCS-BW} and~\ref{alg:DCS-HnR} against the No-U-Turn-Sampler, respectively. In Section~\ref{subsec:art_data} the target is a skewed student-t distribution~\cite{azzalini2003distributions}. Section~\ref{subsec:data_real} compares the samplers on real-world posterior distributions from PosteriorDB database~\cite{posteriordb} exhibiting both heavy tails and funnel phenomena. In Section~\ref{subsec:comp_stereo} we compare~\ref{alg:DCS-BW} and~\ref{alg:DCS-RWM} against the recent Stereographic sampler~\cite{grazzi2026sub} by replicating the experiments from~\cite[Sec~4]{grazzi2026sub}.
Throughout Section~\ref{sec:sim} we work with $\CP_\beta$ given by the formula in~\eqref{eq:compact_projection_first} above.

\subsection{Skewed student-$t$ targets: comparison of~\ref{alg:DCS-BW} and No-U-Turn Sampler}
\label{subsec:art_data}

As a first test, we investigate whether the Ball-Walk Algorithm~\ref{alg:DCS-BW} effectively mitigates the geometric limitations that hinder traditional MCMC methods in heavy-tailed settings~\cite{brevsar2025central}. We benchmark Algorithm~\ref{alg:DCS-BW} against the No-U-Turn Sampler (NUTS)~\cite{hoffman2014no} on the task of targeting a high-dimensional skewed Student-$t$ distribution. Following~\cite{azzalini2003distributions}, consider the symmetric Student-$t$ distribution skewed by the Gaussian distribution function,
\begin{equation}
\label{eq:skewed_student_t}
 \pi(x) \propto \left( 1 + \frac{|x|^2}{v} \right)^{-\frac{v+d}{2}} \Phi\left( \langle\alpha, x\rangle \sqrt{\frac{v+d}{v + |x|^2}} \right),\quad\text{where $\Phi(\cdot)\coloneqq\int_{-\infty}^\cdot\exp(-u^2/2)\ud u/\sqrt{2\pi}$,}
\end{equation}
$x \in \mathbb{R}^d$,  $d=200$, $v = 3$ degrees of freedom and $\alpha\in \mathbb{R}^d$ dictates the skewness direction, with first and second coordinates $20$ and $-30$, respectively, and all remaining coordinates equal to zero.

By exhibiting both heavy tails and multidimensional asymmetry, the target distribution in~\eqref{eq:skewed_student_t} by definition violates the symmetric and light-tailed geometric assumptions under which many sampling algorithms are known to exhibit fast mixing (see e.g,~\cite{chewi2025log}). Both Algorithm~\ref{alg:DCS-BW} and NUTS~\cite{hoffman2014no} first undergo a parameter-tuning stage. Algorithm~\ref{alg:DCS-BW} is optimised by algorithm  in Appendix~\ref{app:opt_alg} below, while NUTS utilises JAX implementation for parameter tuning. Once tuned, both algorithms are run with the same time budget on the same CPU hardware.

\begin{figure}[htbp]
    \centering
    \includegraphics[width=\textwidth]{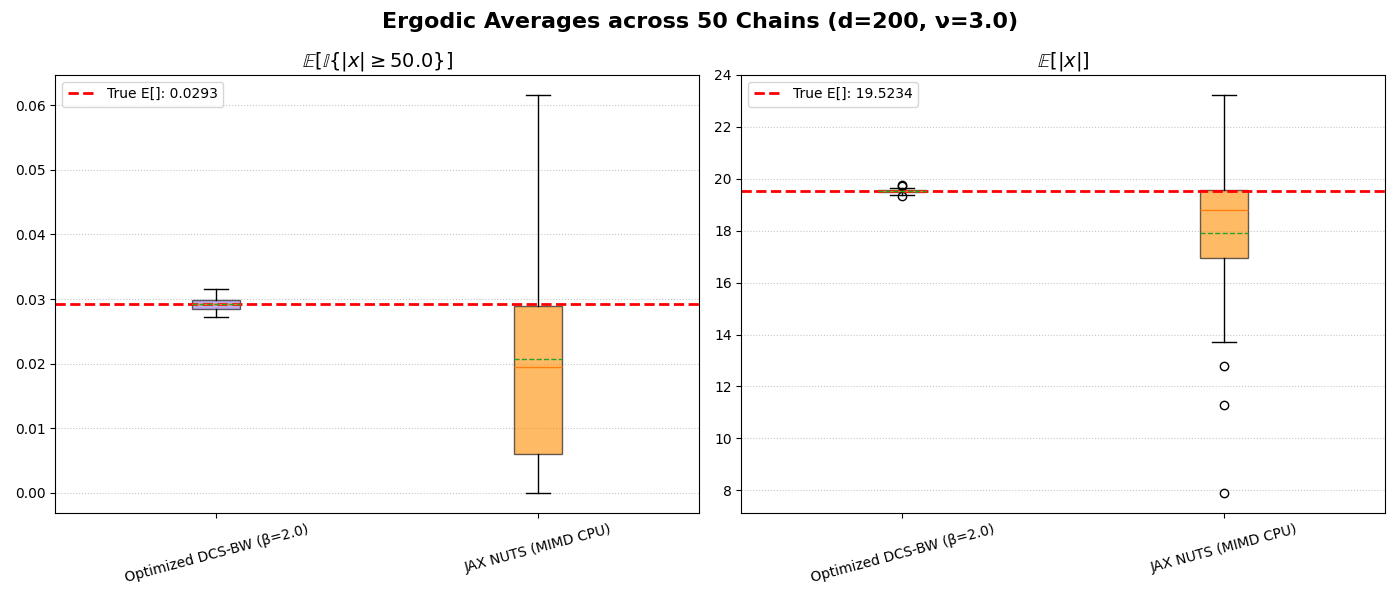}
    \caption{Boxplots of ergodic averages across 50 independent chains for a 200-dimensional skewed Student-$t$ target with $v=3$. The left (resp. right) panel shows the estimated tail probability $\mathbb{E}[\mathbb{I}\{|X| \geq 50.0\}]$ (resp. expected norm $\mathbb{E}[|X|]$). Red dashed lines indicate the exact analytical ground-truth values of 0.0293 and 19.5234, respectively. The results compare the optimised Algorithm~\ref{alg:DCS-BW} with $\beta=2$ (in~\eqref{eq:total_projection} and~\eqref{eq:total_inverse} of Algorithm~\ref{alg:DCS}) and the JAX implementation of NUTS~\cite{hoffman2014no, phan2019composable, jax2018github} sampler, with a 20-second CPU budget for each. Boxplot elements denote the median (solid orange line), mean (dashed green line), interquartile range (box), non-outlier range (whiskers), and outliers (white circles).}
    \label{fig:performance_NUTST}
\end{figure}

In this test, Algorithm~\ref{alg:DCS-BW} achieved substantially lower mean squared error (MSE) than NUTS, see Figure~\ref{fig:performance_NUTST}. These results suggest that Algorithm~\ref{alg:DCS-BW} efficiently samples from a high-dimensional skewed heavy-tailed distributions, succeeding in a setting where a widely used state-of-the-art sampler exhibits structural failures. 

To provide further context for the numerical results in Figure~\ref{fig:performance_NUTST},  we conduct an additional experiment targeting distribution~\eqref{eq:skewed_student_t} with the same parameters as in Figure~\ref{fig:performance_NUTST}. This time, we include Random Walk Metropolis (RWM) samplers utilizing both Gaussian and symmetric $\alpha$-stable ($\alpha=0.5$) proposal distributions.  For heavy-tailed target distributions, the RWM with infinite-variance proposal has been shown to perform better than the  RWM with Gaussian proposal~\cite{brevsar2025central}. The results demonstrate that while the highly optimized JAX implementation of NUTS outperforms both RWM methods, Algorithm~\ref{alg:DCS-BW} with a naive implementation considerably outperforms all of them. Figure~\ref{fig:performance_RWM} highlights the strength of NUTS~\cite{hoffman2014no} as the strongest standard baseline in this context, 
significantly outperforming RWM. This demonstrates that the computation of gradients in the tails is not necessarily harmful for the performance of the algorithm. Figure~\ref{fig:performance_RWM} also demonstrates  the significant potential of Algorithm~\ref{alg:DCS} given its suboptimal implementation (in contrast to the JAX implementation of NUTS~\cite{hoffman2014no, phan2019composable, jax2018github}).

\begin{figure}[htbp]
    \centering

    \includegraphics[width=\textwidth]{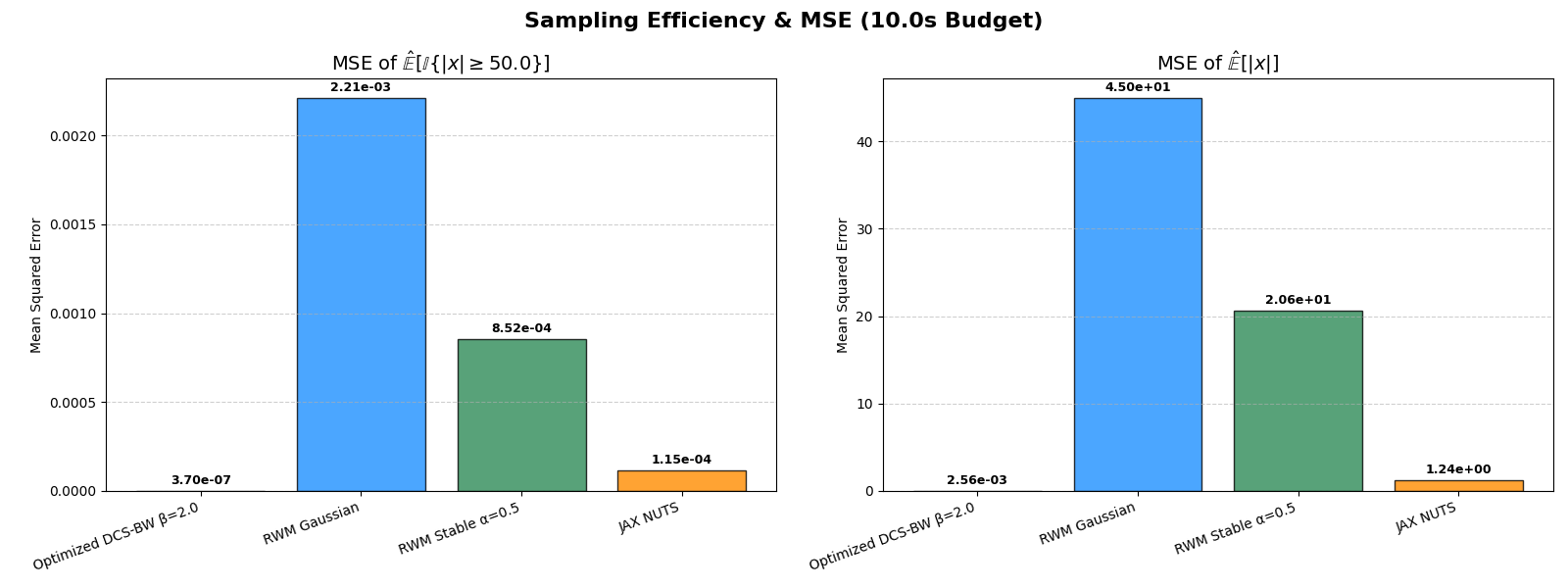}
    \caption{Comparison of performance and mean squared error across 50 independent chains (10 seconds of CPU time each) for the sampling from the heavy-tailed target~\eqref{eq:skewed_student_t} (a 200-dimensional skewed Student-$t$ target with $v=3$). NUTS clearly outperforms both RWM 
    algorithms suggesting that, in spite of its dependence on the gradient (which in this case vanishes in the tails), it remains the state-of-the-art among base-line algorithms in STAN. However, a naive implementation of Algorithm~\ref{alg:DCS-BW} outperforms JAX implementation of NUTS target in~\eqref{eq:skewed_student_t}.}
    \label{fig:performance_RWM}
\end{figure}

\subsection{Comparison of Algorithm~\ref{alg:DCS-HnR} and NUTS on real-world data: PosteriorDB}
\label{subsec:data_real}

In order to evaluate Algorithm~\ref{alg:DCS} on difficult real-world posterior distributions, we consider two benchmark models from PosteriorDB~\cite{magnusson2024posteriordb,posteriordb} with heavy-tailed features. The datasets, model specifications, and posterior reference values used as ground truth are all taken from PosteriorDB. Both
posterior distributions considered in this section contain
parameters with constrained support and therefore formally fall outside the
full-space setting assumed by our theoretical results. The numerical results presented in this subsection demonstrate the robustness of Algorithm~\ref{alg:DCS} beyond the regime
covered by our theory. We now describe the two models.

\begin{enumerate}
    \item[(I)]
    The Centred Eight Schools model studies the effects of SAT programs across eight schools. In the centred parametrisation, we have
    \[
        y_i \mid \theta_i \sim \mathcal{N}(\theta_i,\sigma_i^2),
        \qquad
        \theta_i \mid \mu,\tau \sim \mathcal{N}(\mu,\tau^2),
        \qquad
        \mu \sim \mathcal{N}(0,5^2),
        \qquad
        \tau \sim \mathrm{Cauchy}^+(0,5),
    \]
    where $\mathrm{Cauchy}^+(0,5)$  is a half-Cauchy prior  with density proportional to $u\mapsto 1/(25+u^2)$ for $u\in\R_+$ (i.e., a modulus of Cauchy distribution  centred at $0$ with scale parameter equal to $5$) and $\mathcal{N}(a,b)$ is a Gaussian distribution in $\R$ with mean $a\in\R$ and variance $b\in(0,\infty)$. 
    The parameters $\sigma_i$ are standard deviations of $y_i$, which  are fixed and given in the dataset, see PosteriorDB~\cite{magnusson2024posteriordb}.
    The posterior distribution, for parameters $\mu\in \R$, $\tau\in \R_+$ and $\theta=(\theta_1,\dots,\theta_8)\in\R^8$, therefore satisfies
    \[
        \pi(\mu,\tau,\theta \mid y)
        \propto
        \left[
        \prod_{i=1}^8
        \mathcal{N}(y_i \mid \theta_i,\sigma_i^2)
        \mathcal{N}(\theta_i \mid \mu,\tau^2)
        \right]
        \mathcal{N}(\mu \mid 0,5^2)
        \mathrm{Cauchy}^+(\tau \mid 0,5).
    \]
    Sampling from the posterior is difficult for two  reasons. First, the half-Cauchy prior is heavy-tailed, with density  decaying asymptotically as $\tau^{-2}$ as $\tau \to \infty$,
    leading to a heavy-tailed posterior. Second, the centred Gaussian hierarchy contributes the factor
    \[
        \prod_{i=1}^8 \mathcal{N}(\theta_i \mid \mu,\tau^2)
        \propto
        \tau^{-8}
        \exp\left\{
        -\frac{1}{2\tau^2}
        \sum_{i=1}^8(\theta_i-\mu)^2
        \right\},
    \]
    which forces $\theta_i \approx \mu$ for $\tau$ small (recall that $\tau\sim \mathrm{Cauchy}^+(0,5)$ takes small values often since the mode of the Cauchy density equals zero). Thus, the posterior combines heavy-tailed behaviour in with  funnel geometry~\cite{neal2003slice}: heavy-tailed for $\tau$ large and a singular behaviour for $\tau$ near $0$.

    \item[(II)]
The GARCH(1,1) model describes time-varying volatility in financial returns $r_t$:
\[
    r_t\mid h_t\sim\mathcal N(0,h_t),
    \qquad
    h_t=\omega+\alpha r_{t-1}^2+\beta h_{t-1},
    \qquad
    \omega>0,\quad \alpha,\beta\geq0,\quad \alpha+\beta<1 .
\]
The PosteriorDB model uses a flat prior on the constrained parameter space:
\[
    \pi(\omega,\alpha,\beta\mid r_{1:T})
    \propto
    \prod_{t=1}^T
    h_t^{-1/2}
    \exp\left\{-\frac{r_t^2}{2h_t}\right\}
    \mathbb I\{\omega>0,\alpha,\beta\geq0,\alpha+\beta<1\}.
\]
The main difficulty is near the stationarity boundary. Writing
\[
    H=\frac{\omega}{1-\alpha-\beta},
\]
the unconditional variance scale $H$ diverges as $\alpha+\beta\uparrow1$. Along this large-volatility ridge we have  $h_t=O(H)$ as $\alpha+\beta\uparrow1$. Thus the posterior satisfies
\[
    \pi(\omega,\alpha,\beta)\asymp H^{-T/2},
\]
making the tail polynomial of order $T/2$. The challenge is therefore a long heavy-tailed ridge for $(\omega, \alpha, \beta)$ with $\alpha+\beta$ close to $1$, together with the hard constraint $\alpha+\beta<1$.
\end{enumerate}

To test the robustness of our approach, we benchmarked Algorithm~\ref{alg:DCS-HnR} (without the parameter tuning optimisation step in Section~\ref{sec:optimization} above) against the  JAX~\cite{hoffman2014no, phan2019composable, jax2018github} implementation of NUTS. In these experiments, Step~4 of
Algorithm~\ref{alg:DCS-HnR} was implemented using a valid
univariate slice-sampling transition on the feasible chord rather
than an exact draw from the one-dimensional conditional
distribution. For each model, we ran NUTS for 150,000 post-warmup sampling steps and recorded the total execution time. We then allocated the exact same time budget to Algorithm~\ref{alg:DCS-HnR}. The accuracy of the algorithms was measured by calculating the Root Mean Square Error (RMSE) of the estimators of the first and second moments of the posterior across 50 independent chains  (ground-truth values are given in PosteriorDB~\cite{magnusson2024posteriordb}).

\begin{table}[ht]
\centering
\begin{tabular}{l|cc|cc}
\toprule
\multirow{2}{*}{\textbf{Model}} & \multicolumn{2}{c|}{\textbf{1st Moment ($\mathbb{E}[X]$) RMSE}} & \multicolumn{2}{c}{\textbf{2nd Moment ($\mathbb{E}[X^2]$) RMSE}} \\
\cmidrule{2-5}
 & \textbf{NUTS} & \ref{alg:DCS-HnR} & \textbf{NUTS} & \ref{alg:DCS-HnR}\\
\midrule
Centred Eight Schools & 0.117 & \textbf{0.038} & 1.814 & \textbf{0.354} \\
GARCH(1,1)             & 0.0428  & \textbf{0.0125} & 0.1069 & \textbf{0.0348} \\
\bottomrule
\end{tabular} 
\caption{Root Mean Square Error (RMSE) of the first and second moments for NUTS and~\ref{alg:DCS-HnR}. Samplers were allocated equal time budgets across 50 independent chains. Ground truth values are sourced from PosteriorDB~\cite{magnusson2024posteriordb}.}
\label{tab:posteriordb_rmse}
\end{table}

The summary metrics in Table~\ref{tab:posteriordb_rmse} demonstrate that even a simple, unoptimized implementation of ~\ref{alg:DCS-HnR} significantly outperforms NUTS on these pathological geometries. Because~\ref{alg:DCS-HnR} utilises a slice sampling mechanism, it  adapts to  the narrow funnel neck and the strict GARCH boundaries without requiring a globally tuned step size. Moreover, the contraction mapping in Algorithm~\ref{alg:DCS} mitigates the adverse effect of heavy-tails. Consequently, Algorithm~\ref{alg:DCS-HnR} achieves approximately a three-to-five-fold reduction in RMSE across both models, successfully recovering heavy-tailed variances without being trapped by hierarchical conditioning or dynamic constraints.

\subsection{Comparison with Stereographic samplers}
\label{subsec:comp_stereo}
Stereographic samplers~\cite{Yang24,grazzi2026sub}, designed for heavy-tailed target distributions, fall within the class of diffeomorphic contraction samplers defined above with the diffeomorphism $\CP_\beta:B(1)\stackrel{A}{\to}\mathbb{S}^{d}\setminus\{N\} \stackrel{B}{\to}\R^d$ factorising through the punctured sphere $\mathbb{S}^{d}\setminus\{N\}$ in $\R^{d+1}$ ($N$ denotes the North Pole in $\mathbb{S}^{d}$).
Here the function  $A$ maps radially the open disc $B(1)\subset \R^d$ onto a subset of  $\mathbb{S}^{d}$ in $\R^{d+1}$ (more specifically,  onto $\mathbb{S}^{d}\setminus\{N\}$ in~\cite{Yang24} and a spherical cap in  $\mathbb{S}^{d}$ in~\cite{grazzi2026sub}) and $B$ denotes the stereographic projection from $\mathbb{S}^{d}\setminus\{N\}$ onto $\R^d$.
Since both $A$ and $B$ only stretch the radial component of its input, it is evident that an appropriate scalar function $\phi_\beta:[0,1)\to[0,\infty)$ satisfying~\eqref{eq:beta_condition_for_phi} can be chosen so that $\CP_\beta$
equals $B\circ A$.
Once this identification has been made, the algorithms in~\cite{Yang24,grazzi2026sub} are given by Algorithm~\ref{alg:DCS} for a suitable choice of 
the automorphism $\mathcal{T}_\omega$ in~\eqref{eq:total_projection} and Markov kernel $Q$. Differently put, the numerical experiments in the present section can be cast as a comparison between two instances of Algorithm~\ref{alg:DCS}.

\subsubsection{Skewed student-t distributions}
\label{subsubsec:skewed_student_t}
In order to compare the performance of Algorithm~\ref{alg:DCS} with the recent Sub-Cauchy Sampler (SCS)~\cite{grazzi2026sub}, we replicate the experiment from paper~\cite[Sec~4.1]{grazzi2026sub}, where SCS showed a significant advantage over Gibbs sampling and HMC at targeting heavy-tailed distributions. The target chosen in\cite[Sec~4.1]{grazzi2026sub} is a skewed student-$t$ distribution~\eqref{eq:skewed_student_t}  in $d=100$ dimensions with $v=3$ degrees of freedom and  skewness parameter $\alpha\in\R^d$ with the first component equal to $100$, the second one to $-100$ and the rest equal to zero. To ensure a fair comparison, we employ an optimized sampling approach where the hyperparameters for both algorithms are calibrated prior to sampling via variational inference using the Adam optimizer~\cite{kingma2015adam}.

We fix the tail-decay parameter at $\beta=2.0$ for Algorithm~\ref{alg:DCS} and, following the SCS implementation in~\cite[Sec.~4.1]{grazzi2026sub}, set the latitude parameter to $l_0=1.10$. In each case, the parameter is chosen so that the corresponding algorithm is uniformly ergodic for the tail behaviour of the target distribution.

As described in Section~\ref{sec:optimization}, for Algorithm~\ref{alg:DCS} we minimize a Kullback--Leibler divergence to tune the location $\mu$, radius $R$, and the parameter $\delta$ of the M\"obius automorphism of the ball; see also~\eqref{eq:total_projection}. The resulting parameters $(\mu,R,\delta)$ are obtained using Algorithm~\ref{alg:VI}. The scale $R$, location $\mu$ and the longitude parameter $h_o$ in Algorithm~(SCS) are optimised using Adam following the procedure described in~\cite[Sec~4.1]{grazzi2026sub}.
To keep the comparison fair, in the class of Algorithms~\ref{alg:DCS} we sample using Algorithms~\ref{alg:DCS-RWM} and~\ref{alg:DCS-BW} as 
they are based on symmetric RWM kernels, a property shared by the SCS sampler in~\cite{grazzi2026sub}.

 Figure~\ref{fig:performance_mse} indicates that
Algorithms~\ref{alg:DCS-RWM} and~\ref{alg:DCS-BW} outperform SCS, achieving
lower mean squared error with respect to the ground truth for both the tail
probability and the expected norm. Algorithms~\ref{alg:DCS-RWM} and~\ref{alg:DCS-BW} are  more efficient than SCS on a
per-sample basis as each iteration is computationally less expensive because no
spherical moves are required. More precisely, by operating directly in the ball $B(R)$ in $\R^d$, as opposed to a subset of a sphere in $\R^{d+1}$, 
Algorithms~\ref{alg:DCS-RWM} and~\ref{alg:DCS-BW} avoid the trigonometric computations and boundary
stepping-out procedures on the sphere required by SCS in~\cite{grazzi2026sub}.

\begin{figure}[htbp]
    \centering
    \includegraphics[width=\textwidth]{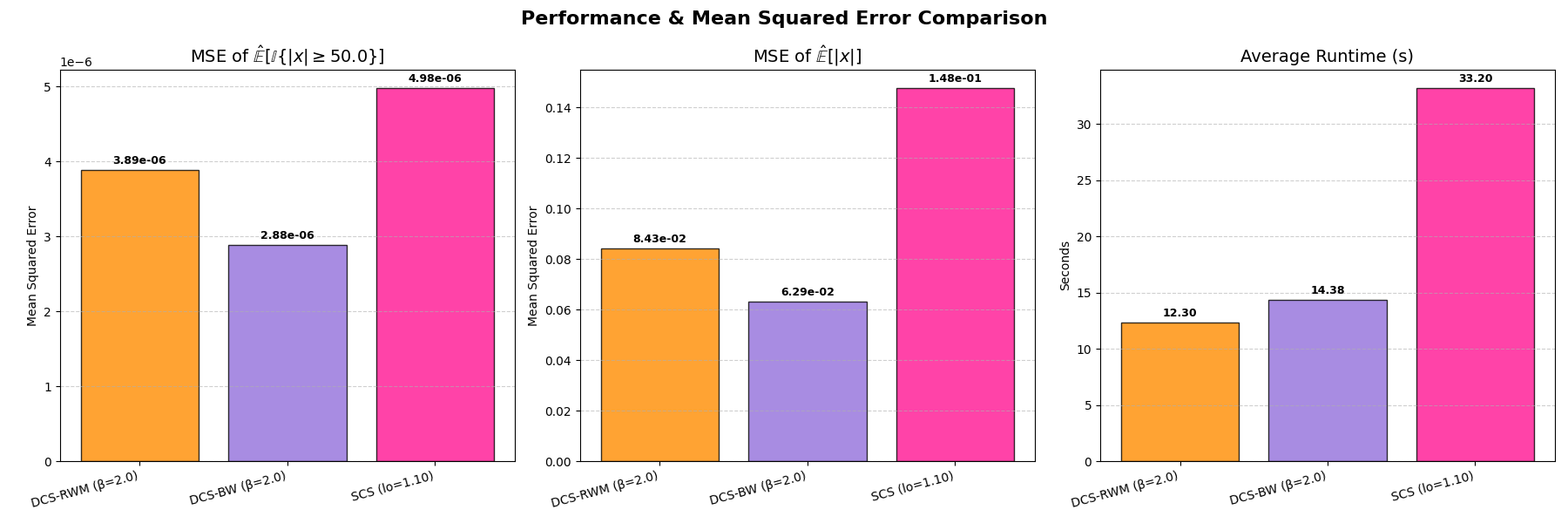}
    \caption{Comparison of average runtimes and mean squared errors across 20 independent chains (500,000 iterations each) for the sampling from the heavy-tailed skewed Student-$t$ target~\eqref{eq:skewed_student_t} (100-dimensional skewed Student-$t$ target with $v=3$ and skewness parameter $\alpha\in\R^d$ with the first component equal to $100$, the second one to $-100$ and the rest equal to zero). Samplers~\ref{alg:DCS-RWM} and~\ref{alg:DCS-BW}  achieve lower mean squared errors for  tail indicators and expected norms, while drastically reducing average runtime (in seconds) compared to the SCS algorithm~\cite{grazzi2026sub}.}
    \label{fig:performance_mse}
\end{figure}
\subsubsection{Sampling from targets with infinite first moment}
\label{subsubsec:infinite_first_moment}

Another feature of Algorithm~\ref{alg:DCS}  is its capacity to target distributions with arbitrarily heavy tails. To illustrate this, we design an experiment targeting a super-Cauchy multivariate  $t$-distribution with $v=0.5$ degrees of freedom (dimension $d=50$ and skewness parameter $\alpha\in\R^d$ is equal to zero). This yields a polynomial tail decay of approximately $\nu(|x| \ge r) \approx r^{-0.5}$. 

For this experiment, we run 50 independent chains for each algorithm, allocating a strict computational budget of 10.0 seconds per chain on the CPU. The~\ref{alg:DCS-BW} and~\ref{alg:DCS-RWM} use a tail parameter of $\beta=0.2$ to safely accommodate the extreme tails, while we deploy SCS sampler with latitude $l_0 = 1.1$ as a baseline, since this value has been observed in~\cite{grazzi2026sub} to yield best performance. Since the target $\pi$ is symmetric we choose  the location parameter to zero, fix longitude parameter to the origin in SCS~\cite{grazzi2026sub} and automorphism $\mathcal{T}_\omega:B(R)\to B(R)$ of the~\ref{alg:DCS} equal to the identity. 

The results, depicted in Figures~\ref{fig:superheavy_mse}, highlight a stark contrast in performance. Algorithms~\ref{alg:DCS-BW} and~\ref{alg:DCS-RWM} maintain uniform ergodicity for very heavy tailed distributions. Consequently, yielding mean squared errors an order of magnitude lower than SCS. On the other hand, as predicted by~\cite{grazzi2026sub}, uniform ergodicity breaks for SCS leading to poor performance for targets with infinite first moment. Finally, operating in the ball $B(R)$ (as opposed to a subset of a sphere in $\R^{d+1}$) allows~\ref{alg:DCS-BW} and~\ref{alg:DCS-RWM} to generate significantly more iterations within the fixed 10-second budget.

\begin{figure}[htbp]
    \centering
    \includegraphics[width=\textwidth]{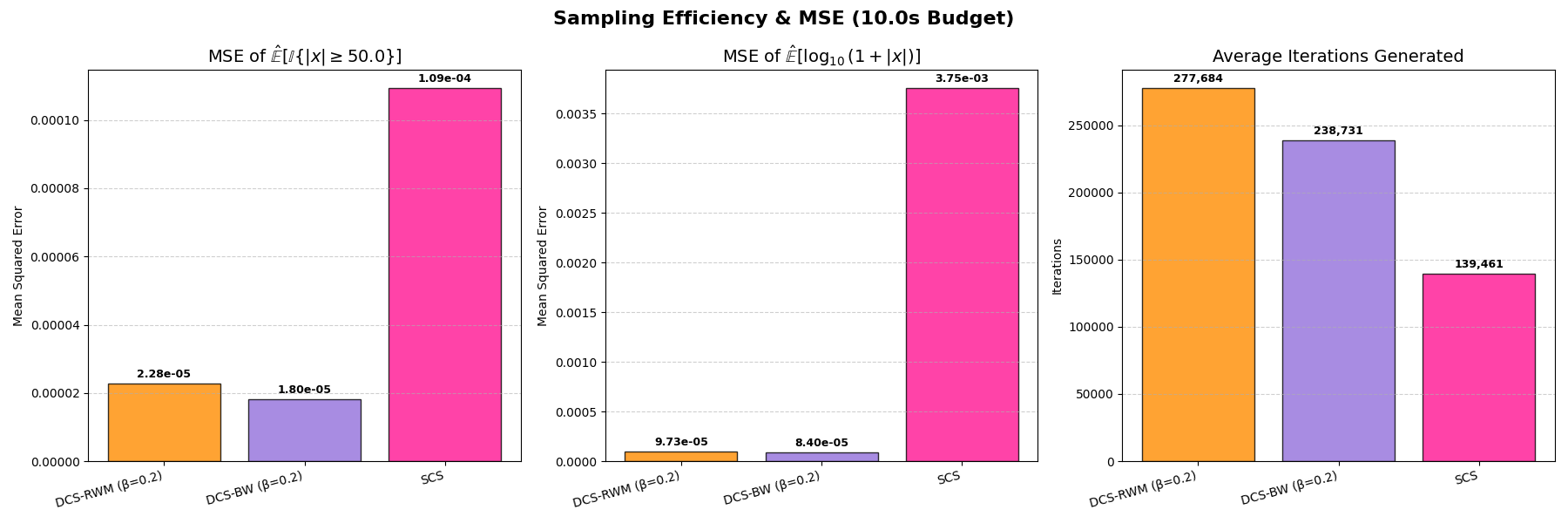}
    \caption{Average number of generated iterations and Mean Squared Error comparison across 50 independent chains (10.0 seconds time budget per chain). The properly tuned projected samplers ($\beta=0.2$) drastically outperform vanilla SCS in both estimation accuracy and computational throughput.}
    \label{fig:superheavy_mse}
\end{figure}

\subsubsection{Robust Bayesian binary regression: comparison of projected samplers}

In this subsection, we follow the experimental design of~\cite[Section~4.2]{grazzi2026sub} to analyse a Bayesian binary regression problem with perfectly separable data. When combined with weak prior information, separable data  induce heavy-tailed posteriors that typically cause standard Markov chain Monte Carlo algorithms to mix slowly or fail entirely. We compare the SCS and~\ref{alg:DCS-BW}. The tuning of the SCS parameters replicates exactly the variational methodology detailed in \cite[Section~4.2]{grazzi2026sub}. To ensure a fair computational comparison, we simulate 20 independent chains per sampler, allocating a fixed wall-clock time budget of 30 seconds per chain.

The Q-Q plots in Figure~\ref{fig:qq_plots} clearly illustrate the comparative performance of~\ref{alg:DCS-BW}. The CLT holds for the heavier prior in the case of~\ref{alg:DCS-BW}, while it struggles with SCS. On the other hand in the case of lighter prior, the confidence interval is thinner for the~\ref{alg:DCS-BW}.

\begin{figure}[htbp]
    \centering
    \includegraphics[width=13cm]{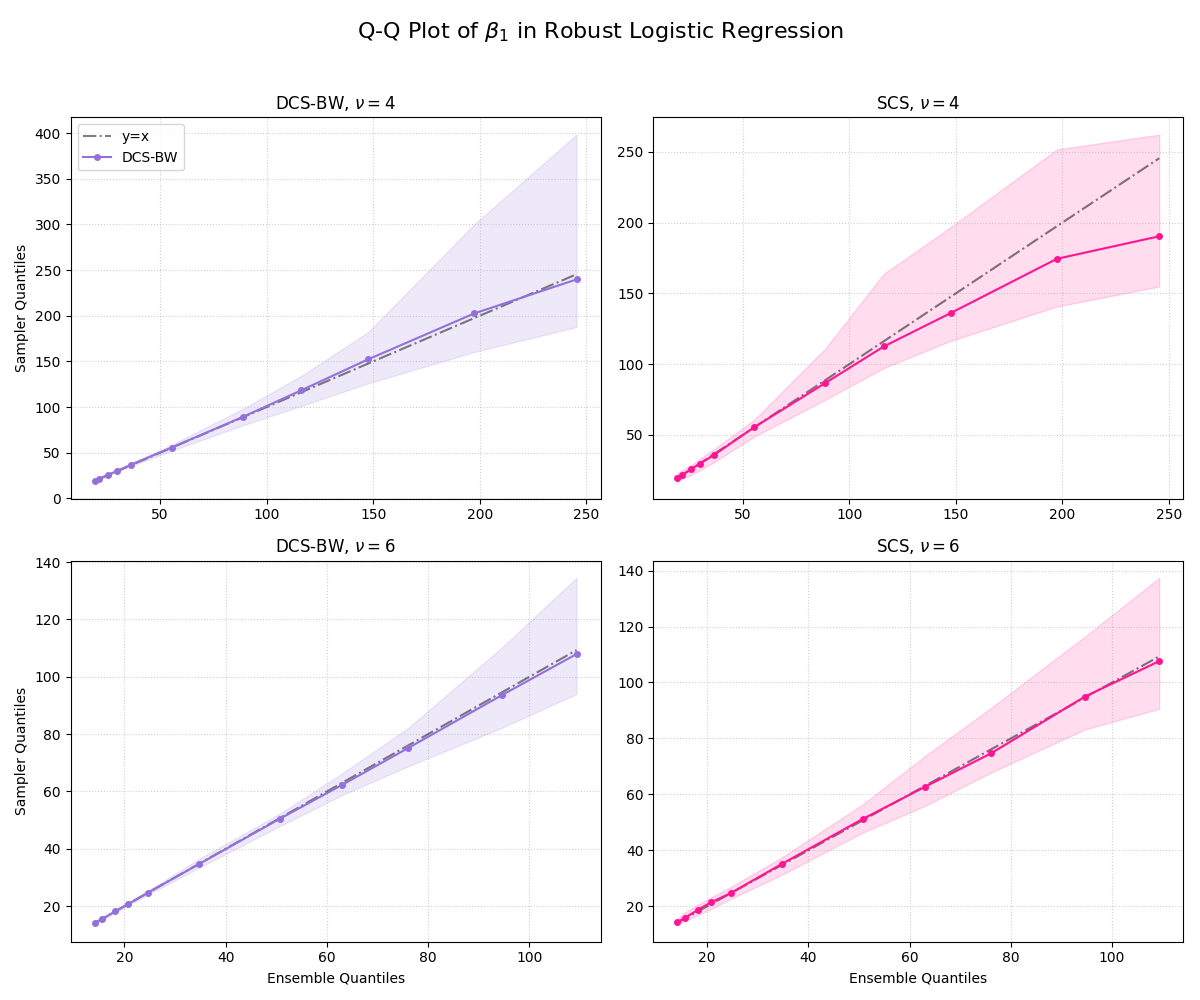}
    \caption{Q-Q plot of $\beta_1$ in robust logistic regression obtained from samples of algorithms~\ref{alg:DCS-BW} (left column) and SCS~\cite{grazzi2026sub} (right column). The prior is the Student-$t$ distribution with degrees of freedom equal to 4 (top), and 6 (bottom). Shaded regions show variation across 20 independent chains.}
    \label{fig:qq_plots}
\end{figure}

\section{Conclusions and open problems}
\label{sec:conclusions}

We introduced the Diffeomorphic Contraction Sampler~\ref{alg:DCS} for heavy-tailed target
distributions on $\R^d$. Algorithm~\ref{alg:DCS} pulls back the target density $\pi$ via the diffeomorphism~\eqref{eq:total_projection} to a density
$\pi_B$ on the Euclidean ball $B(R)$ of radius $R>0$.
The radial part of transformation~\eqref{eq:total_projection} is governed by the parameter $\beta\in(0,\infty)$.
We provide a simple sufficient tail-decay condition~\eqref{eq:unif_ergodicity_pi_condition} on $\pi$ in terms of $\beta$,
under which the resulting 
Algorithm~\ref{alg:DCS} 
is uniformly ergodic (see Theorem~\ref{thm:Uniform_ergodicity_unified} above) for standard
Markov kernels  on the ball $B(R)$ (see Appendix~\ref{app:Algs} below for  definitions of such kernels).

Under stronger geometric assumptions on the landscape induced by $\pi_B$, stated
in Section~\ref{subsec:non_asymptotic} above, the known mixing estimates for Markov chains on convex bodies yield  non-asymptotic bounds for Algorithm~\ref{alg:DCS} (see
Theorem~\ref{thm:dcs_bounds}).  The geometric assumptions of Section~\ref{subsec:non_asymptotic} can be satisfied for skewed and multi-modal initial targets $\pi$ on $\R^d$ by employing a suitable automorphism $\mathcal{T}_\omega:B(R)\to B(R)$ in~\eqref{eq:total_projection} (see Examples~\ref{ex:student_mixing} and~\ref{ex:multimodal_student_mixing} in Appendix~\ref{app:diffeo_convex} below), yielding first polynomial (in dimension) non-asymptotic mixing guarantees from cold start  (see
Theorem~\ref{thm:dcs_bounds}). 
Numerical results suggest that a rudimentary implementation of Algorithm~\ref{alg:DCS} outperforms state-of-the-art implementations of samplers such as NUTS~\cite{hoffman2014no, jax2018github}, for heavy tailed target distributions (see Section~\ref{sec:sim} above).

\smallskip

\noindent \underline{Open problems}. 
\textbf{(I) Gradient-based algorithms.} In the regime where Algorithm~\ref{alg:DCS} is uniformly ergodic, the transformed density $\pi_B$ may vanish at the boundary of the ball $B(R)$.  This may create difficulties for sampling algorithms on $B(R)$ such as ULA~\cite{MR3802303,brosse2017sampling}, HMC~\cite{livingstone2019geometric,MR4350970}, or the bouncy particle sampler~\cite{MR3911112} due to the unboundedness of the gradient. A natural open problem is to develop gradient-informed samplers on $B(R)$ that can handle this boundary singularity while retaining non-asymptotic convergence guarantees. Such methods could further improve the practical performance of Algorithm~\ref{alg:DCS} in regimes where random-walk based proposals are too diffusive.

\noindent \textbf{(II) Which automorphism $\mathcal T_\omega$ of $B(1)$ should one choose, and how?}
Examples~\ref{ex:student_mixing} and~\ref{ex:multimodal_student_mixing} show that a suitable automorphism $\mathcal T_\omega:B(1)\to B(1)$ can make the transformed target $\pi_B$ log-concave, yielding fast mixing of Algorithm~\ref{alg:DCS}. This leads to three tasks.

\noindent \textbf{(IIa)} Quantify theoretically the performance gain of Algorithm~\ref{alg:DCS} under a near-optimal automorphism $\mathcal T_\omega^{(o)}$. Natural criteria, motivated by Section~\ref{subsec:non_asymptotic}, include making $\pi_B$ close to a constant density on $B(R)$ in a suitable metric.

\noindent \textbf{(IIb)} Identify a tractable class of automorphisms of $B(1)$, possibly parametrised by deep-learning architectures, that approximates $\mathcal T_\omega^{(o)}$. The M\"obius transformation in Section~\ref{subsubsec:skewed_student_t} shows that this choice can be decisive. A richer class of computationally tractable automorphisms is needed for off-the-shelf use of Algorithm~\ref{alg:DCS} (note that only the existence of the automorphism used in Example~\ref{ex:multimodal_student_mixing} below is guaranteed by~\cite{bogachev2005triangular}).

\noindent \textbf{(IIc)} Prove non-asymptotic bounds for tuning parameters of the automorphism $\mathcal T_\omega$ within this class identified in (IIb). As Section~\ref{sec:optimization} explains, this is a natural transport-map problem for VI methods such as Adam~\cite{kingma2015adam}. Theoretical guarantees for this pre-processing step are therefore strongly motivated and highly desirable for demonstrating the robustness of Algorithm~\ref{alg:DCS}.

\noindent\textbf{(III) Ball-specific non-asymptotic bounds.}
The non-asymptotic bounds used in Section~\ref{subsec:non_asymptotic} are inherited from results for log-concave, or otherwise regular, densities on general convex bodies~\cite{lovasz2004hit,MR2309621,kook2024gaussian,jiang2024regularized}. In the present setting, however, the state space after contraction is the Euclidean ball $B(R)$ in $\R^d$, while the transformed density $\pi_B$ need not be close to uniform and may fail the regularity assumptions used in literature. It is thus natural to develop  non-asymptotic mixing bounds for sampling algorithms in $B(R)$. 
For instance, it is well known that Hit-and-Run possesses non-asymptotic mixing bounds from a cold start for any convex body, while Ball Walk does not~\cite{lovasz2004hit}. However, since $B(R)$ has no corners, fast mixing for a Ball-Walk algorithm from a cold start is known to hold for the uniform density on $B(R)$ by the conductance bound in~\cite{MR1608200}.

\section*{Funding}
This work was started while MB was a postdoc at Warwick Statistics, funded by the 
EPSRC grant EP/V009478/1 and is currently supported by CUHK-SZ start-up UDF01004230. AM was also supported by EP/V009478/1 and is supported by the EPSRC grant EP/W006227/1.

\bibliographystyle{amsplain}
\bibliography{lower.bib}

\appendix

\section{Proofs: convergence theory for Algorithm~\ref{alg:DCS}}
\label{app:proofs_non_asymptotic}
Recall the definition of the total variation distance of two probability measures $\nu_1,\nu_2$ on Borel sets $\mathcal{B}(U)$ for any metric set $U$:
\begin{equation}
\label{eq:TV_def}
    \| \nu_1 - \nu_2 \|_{\mathrm{TV}} \coloneqq \sup_{A \in \mathcal{B}(U)} |\nu_1(A) - \nu_2(A)|.
\end{equation}

We start with the following simple results, crucial for understanding of the performance of Algorithm~\ref{alg:DCS}.
\begin{prop}
\label{prop:diffeo_invariance}
Let $U, V \subseteq \mathbb{R}^d$ be open sets and let $\Phi: U \to V$ be a bi-measurable bijection. Let $\nu_1$ and $\nu_2$ be probability measures on $(U, \mathcal{B}(U))$. Then, the total variation distance is invariant under the push-forward by $\Phi$:
\begin{equation}
    \| \nu_1 - \nu_2 \|_{\mathrm{TV}} = \| \Phi_*\nu_1 - \Phi_*\nu_2 \|_{\mathrm{TV}},
\end{equation}
where the push-forward measure is defined  by $(\Phi_*\nu)(A) = \nu(\Phi^{-1}(A))$ for all Borel sets $A\in \mathcal{B}(V)$.
\end{prop}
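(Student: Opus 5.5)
The plan is to show directly that the push-forward by $\Phi$ sets up a bijection between the Borel $\sigma$-algebras $\mathcal{B}(U)$ and $\mathcal{B}(V)$. The supremum in~\eqref{eq:TV_def} then runs over exactly the same collection of numbers on both sides. Since $\Phi$ is a bi-measurable bijection, both $\Phi$ and $\Phi^{-1}$ are Borel measurable. Hence the map $A\mapsto \Phi^{-1}(A)$ sends $\mathcal{B}(V)$ into $\mathcal{B}(U)$, and $E\mapsto \Phi(E)=(\Phi^{-1})^{-1}(E)$ sends $\mathcal{B}(U)$ into $\mathcal{B}(V)$. Because $\Phi$ is a bijection, these two set maps are mutually inverse: $\Phi^{-1}(\Phi(E))=E$ and $\Phi(\Phi^{-1}(A))=A$.

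Next I will write out the total variation of the push-forwards using the definition of $\Phi_*$:
\[
\|\Phi_*\nu_1-\Phi_*\nu_2\|_{\mathrm{TV}}=\sup_{A\in\mathcal{B}(V)}\bigl|\nu_1(\Phi^{-1}(A))-\nu_2(\Phi^{-1}(A))\bigr|.
\]
As $A$ ranges over $\mathcal{B}(V)$, the set $E=\Phi^{-1}(A)$ ranges over all of $\mathcal{B}(U)$, by surjectivity of the set map established above. The right-hand side therefore equals $\sup_{E\in\mathcal{B}(U)}|\nu_1(E)-\nu_2(E)|=\|\nu_1-\nu_2\|_{\mathrm{TV}}$.

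It is cleaner to present this as two inequalities. For the inequality "$\le$", each $A$ gives a Borel $E=\Phi^{-1}(A)$, so every term on the push-forward side appears on the original side. For "$\ge$", each $E\in\mathcal{B}(U)$ equals $\Phi^{-1}(\Phi(E))$ with $\Phi(E)\in\mathcal{B}(V)$, so every term on the original side appears on the push-forward side.

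There is no real obstacle. The only point needing care is that $\Phi(E)$ is Borel in $V$, which is exactly where bi-measurability (rather than mere measurability of $\Phi$) is used. For the intended application, $\Phi=\RCtot$ or its inverse, this is automatic since a diffeomorphism is a homeomorphism and hence Borel in both directions. I would also remark that the same argument gives the corollary used later: the law of $X_n$ is the push-forward under $\RCtot$ of the law of the $Q$-chain, and $\nu=(\RCtot)_*\nu_B$. Mixing times of $X$ and of $(\RCtot^{-1}(X_n))_{n\in\N}$ therefore coincide.
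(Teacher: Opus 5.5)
Your proof is correct and follows the same route as the paper: you rewrite the push-forward total variation through $\Phi^{-1}$ and observe that $A\mapsto\Phi^{-1}(A)$ is a bijection from $\mathcal{B}(V)$ onto $\mathcal{B}(U)$, so both suprema run over the same values. Your version is slightly more careful than the paper's, which attributes the bijection of $\sigma$-algebras to bijectivity of $\Phi$ alone, whereas you make explicit that bi-measurability is what guarantees $\Phi(E)\in\mathcal{B}(V)$.
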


\begin{proof}
The TV distance between the push-forward measures on $V$ is defined as:
\begin{equation*}
    \| \Phi_*\nu_1 - \Phi_*\nu_2 \|_{\mathrm{TV}} = \sup_{A \in \mathcal{B}(V)} |(\Phi_*\nu_1)(A) - (\Phi_*\nu_2)(A)|.
\end{equation*}
By the definition of the push-forward measure, $(\Phi_*\nu_i)(A) = \nu_i(\Phi^{-1}(A))$. Substituting this into the equation above yields:
\begin{equation*}
    \| \Phi_*\nu_1 - \Phi_*\nu_2 \|_{\mathrm{TV}} = \sup_{A \in \mathcal{B}(V)} |\nu_1(\Phi^{-1}(A)) - \nu_2(\Phi^{-1}(A))|.
\end{equation*}
Since $\Phi$ is a bijection, the mapping $A \mapsto \Phi^{-1}(A)$ is a bijection between the Borel $\sigma$-algebra $\mathcal{B}(V)$ and $\mathcal{B}(U)$. Therefore, taking the supremum over all sets $A \in \mathcal{B}(V)$ is equivalent to taking the supremum over all sets $E = \Phi^{-1}(A) \in \mathcal{B}(U)$.
\begin{align*}
    \sup_{A \in \mathcal{B}(V)} |\nu_1(\Phi^{-1}(A)) - \nu_2(\Phi^{-1}(A))| 
    &= \sup_{E \in \mathcal{B}(U)} |\nu_1(E) - \nu_2(E)| \\
    &= \| \nu_1 - \nu_2 \|_{\mathrm{TV}}.\qquad\qedhere
\end{align*}
\end{proof}

\begin{rem}
\label{rem_transform_application}
The result in Proposition~\ref{prop:diffeo_invariance} can be applied in our case since the composite map $\CP_{\text{total}}:B(R)\to \mathbb{R}^d$ defined in \eqref{eq:total_projection} is a diffeomorphism, and consequently a bijection, thereby preserving the TV distance. 
It is worth noting that while the Jacobian $J_{\CP_{\text{total}}}$ is required when transforming probability \textit{densities} (to ensure the integral sums to 1), it does not appear in the TV distance formula. 
\end{rem}

Proposition~\ref{prop:diffeo_invariance} implies Corollary~\ref{cor:convergence_equivalence} (with $\Phi=\RCtot$ in~\eqref{eq:total_projection}),  which in turn allows us to deduce the convergence bounds for Algorithms~\ref{alg:DCS} from the convergence of the chain on the ball $B(R)\subset \R^d$.

\begin{cor}
\label{cor:convergence_equivalence}
Let $(X_n)_{n \in \mathbb{N}}$ be a Markov chain on $\mathbb{R}^d$ with invariant measure $\nu$ with density proportional to $\pi:\R^d\to(0,\infty)$. Let $\RCtot: B(R) \to \R^d$ be the diffeomorphism defined in~\eqref{eq:total_projection} and let $(Z_n)_{n \in \mathbb{N}}$ be the transformed chain defined by $Z_n \coloneqq \RCtot^{-1}(X_n)$. Then  $(Z_n)_{n \in \mathbb{N}}$ is a Markov chain on $B(R)$ with invariant measure $\nu_B$ and density proportional to $\pi_{B}:B(R)\to(0,\infty)$ in~\eqref{eq:total_target}.
Moreover, for any $n \in\N$, we have
    \begin{equation}
    \label{eq:TV_equivalence_tranf}
        \| \P_x(X_n \in \cdot) - \nu \|_{\mathrm{TV}} = \| \P_{\RCtot^{-1}(x)}(Z_n \in \cdot) - \nu_{B} \|_{\mathrm{TV}}.
    \end{equation}
    Hence $(X_n)_{n \in \mathbb{N}}$  converges to $\nu$ in the total variation (TV) distance (see~\eqref{eq:TV_def} below for definition) if and only if the transformed process $(Z_n)_{n \in \mathbb{N}}$ converges to $\nu_{B}$.
\end{cor}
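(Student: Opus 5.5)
The proof has two parts. First we show that $(Z_n)$ is a Markov chain on $B(R)$ with invariant law $\nu_B$. Then we transfer the total variation distance at each time $n$ through Proposition~\ref{prop:diffeo_invariance}. Throughout, write $\Phi\coloneqq\RCtot$. It is a $C^1$-diffeomorphism $B(R)\to\R^d$, so $\Phi$ and $\Phi^{-1}$ are continuous, hence Borel measurable, and $\Phi$ is a bi-measurable bijection.

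For the Markov property, let $P$ be the transition kernel of $X$ and define
\[
Q(z,A)\coloneqq P(\Phi(z),\Phi(A)),\qquad z\in B(R),\ A\in\mathcal B(B(R)).
\]
Bi-measurability ensures that $\Phi(A)$ is Borel in $\R^d$ and that $z\mapsto Q(z,A)$ is measurable, so $Q$ is a Markov kernel. The events $\{Z_{n+1}\in A\}$ and $\{X_{n+1}\in\Phi(A)\}$ coincide, and $\sigma(Z_0,\dots,Z_n)=\sigma(X_0,\dots,X_n)$ because $\Phi$ is a bi-measurable bijection. The Markov property of $X$ therefore gives
\[
\P(Z_{n+1}\in A\mid Z_0,\dots,Z_n)=P(X_n,\Phi(A))=Q(Z_n,A).
\]
In Algorithm~\ref{alg:DCS} this $Q$ is precisely the kernel on the ball.

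For invariance, observe that $\nu_B=(\Phi^{-1})_*\nu$, i.e.\ $\nu_B(A)=\nu(\Phi(A))$. Then
\[
\int Q(z,A)\,\nu_B(\ud z)=\int P(\Phi(z),\Phi(A))\,\nu_B(\ud z)=\int P(x,\Phi(A))\,\nu(\ud x)=\nu(\Phi(A))=\nu_B(A),
\]
where the second equality is the change of variables $x=\Phi(z)$ for the push-forward measure and the third uses invariance of $\nu$ under $P$. The change-of-variables formula for Lebesgue integrals under the $C^1$-diffeomorphism $\Phi$ shows that $\nu_B$ has density proportional to $\pi(\Phi(z))J_{\Phi}(z)=\pi_B(z)$, as stated in~\eqref{eq:total_target}.

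For~\eqref{eq:TV_equivalence_tranf}, fix $x\in\R^d$ and set $z=\Phi^{-1}(x)$. Under $\P_x$ we have $Z_0=z$, and since $X_n=\Phi(Z_n)$,
\[
\P_x(X_n\in\cdot)=\Phi_*\bigl(\P_{z}(Z_n\in\cdot)\bigr),\qquad \nu=\Phi_*\nu_B .
\]
Proposition~\ref{prop:diffeo_invariance}, applied with $U=B(R)$ and $V=\R^d$, gives equality of the two total variation distances for every $n$. Letting $n\to\infty$ yields the stated equivalence of convergence. Since $z\mapsto\Phi(z)$ is a bijection, every starting point on one side corresponds to a starting point on the other, so the equivalence holds for every starting point.

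The only step needing care is the invariance computation, namely the push-forward change of variables and the identification of the density through the Jacobian. These are standard, so I expect no real obstacle.
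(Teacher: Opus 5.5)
Your proposal is correct and takes essentially the paper's route: the paper obtains \eqref{eq:TV_equivalence_tranf} simply by applying Proposition~\ref{prop:diffeo_invariance} with $\Phi=\RCtot$. You additionally verify three points the paper leaves implicit: the Markov property via the conjugated kernel $Q(z,A)=P(\Phi(z),\Phi(A))$, the invariance of $\nu_B=(\Phi^{-1})_*\nu$, and the identification of its density $\pi_B$ through the change-of-variables formula.
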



Corollary~\ref{cor:convergence_equivalence} offers a simple yet crucial insight into non-asymptotic theoretical analysis, as it provides a direct bridge between the convergence analysis of the process on the ball and that of the projected sampler. By mapping $\mathbb{R}^d$ to the ball~$B(R)$, we can leverage existing ergodicity results developed for samplers on convex sets (see e.g.,~\cite{lovasz2006fast,MR2309621, jiang2024regularized}).

\subsection{Proofs of non-asymptotic bounds} \label{app:mixing_proofs}In this subsection we prove our main results on non-asymptotic bounds (Theorem~\ref{thm:dcs_bounds}) by combining 
Corollary~\ref{cor:convergence_equivalence} with the non-asymptotic theory for sampling from convex bodies.

\begin{proof}[Proof of Theorem~\ref{thm:dcs_bounds}] By Corollary~\ref{cor:convergence_equivalence}, 
 the total-variation mixing time $\tau_X(\epsilon,\varsigma)$ in~\eqref{eq:mixing_def} of the  Markov chain  $X=(X_n)_{n \in \mathbb{N}}$, generated by Algorithm~\ref{alg:DCS} and started at $\varsigma$, is the same as the corresponding mixing time of the corresponding chain $Z = \RCtot^{-1}(X)$ on the ball $B(R)$. We now apply the non-asymptotic bounds  in~\cite{MR2309621,lovasz2004hit,MR3802303,kook2024gaussian} to the chain $Z$ on the convex set
 $B(R)$.
 
\noindent\underline{(a) Algorithm~\ref{alg:DCS-BW}} Assumption~\ref{assump:convex} implies that the density proportional to
$\pi_B$, extended by zero outside $B(R)$, is log-concave. Set $m_B\coloneqq \int_{B(R)}z\nu_B(z)dz$. By
Assumption~\ref{assump:isotropic}, we get
\begin{equation}
\label{eq:Var_bound_BW}
        \int_{B(R)}
        \left|z-m_B\right|^2\nu_B(z)dz\le \int_{B(R)}
        \left|z\right|^2\nu_B(z)dz 
        \leq Cd.
\end{equation}
Thus~\cite[Lem.~5.13]{MR2309621} implies that the density $\nu_B$ is
$a$-rounded with $a=(e\sqrt C)^{-1}$. The condition 
$\gamma\leq \frac{\eps^2}{2^{10}eH_0^2\sqrt{Cd}}$ on the step size $\gamma$ in Algorithm~\ref{alg:DCS-BW}
implies the step-size assumption in~\cite[Thm~2.2]{MR2309621}. Thus, by Assumption~\ref{assump:warm}, the conclusion of~\cite[Thm~2.2]{MR2309621} holds,  yielding (together with~\eqref{eq:Var_bound_BW}) the following inequalities:  
\begin{align*}       
\tau_X(\eps,\varsigma)&=
        \tau_Z(\eps,P_0)\leq
        10^{10}d
        \left(
        \int_{B(R)}
        |z-m_B|^2\nu_B(dz)
        \right)
        \gamma^{-2}
        \log\left(\frac{2H_0}{e\sqrt{C}\eps}\right)\\
        &\leq
        10^{10}Cd^2\gamma^{-2}
        \log\left(\frac{2H_0}{e\sqrt{C}\eps}\right).
\end{align*}

\noindent\underline{(b) Algorithm~\ref{alg:DCS-HnR}} (Warm start)
Assumptions~\ref{assump:convex} and~\ref{assump:isotropic} imply that  $\nu_B$ is a $C$-isotropic log-concave
density. Assumption~\ref{assump:warm} ensures
$P_0(A)\leq H_0\nu_B(A)$ for any Borel set $A\subset B(R)$. We can thus directly apply~\cite[Thm~2.3]{MR2309621} to obtain
\[
        \tau_X(\eps,\varsigma)
        \leq
        10^{30}C^4H_0^4d^3\eps^{-4}
        \log^3\left(\frac{2H_0}{\eps}\right).
\]
\noindent\underline{(c) Algorithm~\ref{alg:DCS-HnR}} (Cold start) 
Assumption~\ref{assump:convex} implies that $\pi_B$ is log-concave.
Assumption~\ref{assump:cold} verifies condition~\cite[(LS1)]{lovasz2006fast} with a second-moment bound $\overline r^2$ and
level-set radius $\underline r$. Moreover,~\ref{assump:cold} also  verifies condition~\cite[(LS2)]{lovasz2006fast}
with starting point $z$, distance-to-the-boundary constant $m$, and
approximate-mode constant $\varrho$. Consequently,
\cite[Cor.~1.2]{lovasz2006fast} gives
\[
        \tau_X(\eps,\varsigma)
        \leq
        10^{31}d^3
        \frac{\overline r^2}{\underline r^2}
        \left(\log\left(
        \frac{d\overline r^2}
        {\eps\underline r m\varrho}
        \right)\right)^5.
\]
\noindent\underline{(d) Algorithm~\ref{alg:DCS-LMC}} Assumptions~\ref{assump:convex}, \ref{assump:smooth}, and
\ref{assump:lipschitz} imply that the potential $U_B$ is convex, has
$M$-Lipschitz gradient, and satisfies $|\nabla U_B|\leq L$ on $B(R)$.
Thus the assumptions of~\cite[Thm~1]{MR3802303} hold for the projected
Langevin chain on $B(R)$, initialized at $Z_0=0$. By~\cite[Thm~1]{MR3802303} there exist universal constants
$C_{\textrm{LMC}},c_{\textrm{LMC}},C_{\textrm{step}},c_{\textrm{step}}>0$, such that Algorithm~\ref{alg:DCS-LMC} with step size $\eta = C_{\mathrm{step}}(\log f)^{c_{\mathrm{step}}} R^2/f^{12}$ generates a chain $X$ started at $0$ with mixing time bounded above by
\[\tau_X(\eps,\varsigma)
        \leq
        C_{\textrm{LMC}}R^6f^{12}
        (\log
        \left(R^6f^{12}\right))^{c_{\textrm{LMC}}},
        \qquad
        \text{where $f=\frac{\max\{d,RM,RL\}}{\eps}.$}
\]
\noindent\underline{(e) Algorithm~\ref{alg:DCS-DW}} 
Recall that the proposal covariance in
Algorithm~\ref{alg:DCS-DW} for $\lambda=0$ takes the form 
$\Sigma_0(z)
        =
        \frac{\gamma^2}{d}H(z)^{-1}$, where $H=d\nabla^2\varphi$ and $\varphi$ is the distance to the boundary of $B(R)$ given in~\eqref{eq:local_metric}.
By~\cite[Thm~3.5]{kook2024gaussian}, the metric $H=d\nabla^2\varphi$ is $\overline\nu$-Dikin-amenable for some $\overline\nu>0$ satisfying 
$\overline\nu\le c_0d$ for a constant $c_0>0$ independent of dimension (see~\cite[Defs~2.1 and 2.2]{kook2024gaussian} for the definition of $\overline\nu$-Dikin-amenable metric). Moreover,
\[
        \nabla^2\varphi(z)
        =
        \frac{2}{R^2-|z|^2}I_d
        +
        \frac{4zz^\top}{(R^2-|z|^2)^2}
        \succeq
        \frac{2}{R^2}I_d,\quad\text{implying $H(z)
        \succeq
        \frac{2d}{R^2}I_d$.}
\]
Assumptions~\ref{assump:convex} and~\ref{assump:smooth} imply
     $   0
        \preceq
        \nabla^2U_B(z)
        \preceq
        MI_d
        \preceq
        \frac{MR^2}{2d}H(z)$.
Hence, the potential $U_B$ is $0$-relatively strongly convex and
$\beta_{\textrm{DW}}$-relatively smooth in the metric $H$ (see~\cite[Sec.~C.1]{kook2024gaussian} for definitions of strong convexity and relative smoothness), where
$\beta_{\textrm{DW}} =MR^2/(2d)$.
Setting the step size in Algorithm~\ref{alg:DCS-DW}  at
\[
        \gamma
        =
        c_{\textrm{step}}
        \min\left\{
        1,
        \sqrt{\frac{2d}{MR^2}}
        \right\}
\]
for a sufficiently small constant
$c_{\textrm{step}}>0$ satisfies the  assumption in~\cite[Thm~3.1]{kook2024gaussian} requiring the step size to be proportional to
$1\wedge\beta_{\textrm{DW}}^{-1/2}$.

Finally, by Assumption~\ref{assump:warm} we have 
$dP_0/d\nu_B\le H_0$, implying that the bound on the starting distribution in~\cite[Thm~3.1]{kook2024gaussian} is at most $H_0$. Applying~\cite[Thm~3.1]{kook2024gaussian} 
with $\alpha=0$, $\beta=\beta_{\textrm{DW}}$ and
$\overline\nu\le c_0d$ gives
\[        \tau_X(\epsilon,\varsigma)\le
        C_{\textrm{DW}}'\,
        d\max\{1,\beta_{\textrm{DW}}\}
        \overline\nu
        \log\left(\frac{H_0}{\epsilon}\right)\le
        C_{\textrm{DW}}\,
        d^2
        \max\left\{
        1,\frac{MR^2}{2d}
        \right\}
        \log\left(\frac{H_0}{\epsilon}\right),
\]
for some constants 
$C_{\textrm{DW}}, C_{\textrm{DW}}'>0$ independent of dimension $d$ and  parameters $M,H_0,\epsilon,R$.
\end{proof}

\subsection{Proof of uniform ergodicity}
\label{app:uniform_ergodicity_proofs}
The proof of Theorem~\ref{thm:Uniform_ergodicity_unified} requires two-sided bounds on $\pi_B$, ensuring that the density is (a) bounded on $B(R)$ and (b) that it does not vanish away from the boundary of $B(R)$.  These properties of $\pi_B$ are used to control the acceptance probabilities in Algorithms~\ref{alg:DCS-HnR}, \ref{alg:DCS-BW}, \ref{alg:DCS-RWM}  and \ref{alg:DCS-DW}, 
implying that $\R^d$ is a small set for Algorithm~\ref{alg:DCS}, making it
 uniformly ergodic. 

By Lemma~\ref{lem:density_bounds}, properties (a) and (b) hold for any $\pi$ and $\beta>0$, such that  $x\mapsto \pi(x)|x|^{d+\beta}$ is a bounded function on $\R^d$. In the case of SCS algorithm~\cite{grazzi2026sub} (resp.~SPS sampler~\cite{Yang24}), the analogous assumption requires the function  $x\mapsto \pi(x)|x|^{d+1}$ (resp. $x\mapsto \pi(x)|x|^{2d}$) to be bounded.

\begin{lem}[Boundedness of the target density on $B(R)$]
\label{lem:density_bounds}
Let $\pi$ satisfy the condition in~\eqref{eq:unif_ergodicity_pi_condition} with $\beta>0$ and let $\pi_{B}$ in~\eqref{eq:total_target} be proportional to the density on the ball $B(R)$ induced by the pull-back via the $C^1$-diffeomorphism  $\RCtot(z) = \mu + R \cdot \CP_\beta(\mathcal{T}_\omega(z/R))$ given in~\eqref{eq:total_projection} for some parameters $\mu,R,\omega$. Then the following holds:\\
$\mathrm{(a)}$ There exists $M_1 < \infty$ such that $\pi_{B}(z) \le M_1$ for all $z \in B(R)$.\\
$\mathrm{(b)}$ For any $0<\ell < R$, there exists $m_\ell > 0$ such that $\pi_{B}(z) \ge m_\ell$ for $z \in B(\ell)=\{z'\in\R^d:|z'|<\ell\}$.
\end{lem}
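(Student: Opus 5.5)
The plan is to work with the factorisation $\pi_B(z)=\pi(\RCtot(z))\,J_{\CP_\beta}(\mathcal T_\omega(z/R))\,J_{\mathcal T_\omega}(z/R)$, which follows from~\eqref{eq:total_target} and~\eqref{eq:total_jacobian}. Note that~\eqref{eq:total_jacobian} omits the constant factor $R^d\cdot R^{-d}=1$, so no scaling constant appears. We then bound each factor separately on two regions: a compact inner region, where continuity suffices, and a boundary shell, where the asymptotics~\eqref{eq:Jacobian_asymptotic_beta} together with~\eqref{eq:unif_ergodicity_pi_condition} do the work.

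Part (b) is the easy one. Fix $\ell<R$. Since $\mathcal T_\omega$ is a diffeomorphism of $\overline B(1)$, the image $K:=\mathcal T_\omega(\overline B(\ell/R))$ is a compact subset of the open ball $B(1)$ (interior points map to interior points), so $K\subset \overline B(r)$ for some $r<1$. Each factor is then bounded below on $\overline B(\ell)$:
\begin{itemize}
\item $J_{\mathcal T_\omega}$ is continuous and nonvanishing on $\overline B(1)$, hence bounded below by a positive constant;
\item $J_{\CP_\beta}(y)=\phi_\beta(|y|)^{d-1}(\phi_\beta(|y|)+\phi_\beta'(|y|)|y|)$ is continuous and positive on $\overline B(r)$, because $\CP_\beta$ is a $C^1$-diffeomorphism, so it attains a positive minimum there;
\item $\RCtot(\overline B(\ell))$ is compact in $\R^d$, and $\pi$ is continuous and positive, so $\pi$ has a positive minimum on it.
\end{itemize}
Taking the product of the three minima gives $m_\ell$.

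For part (a), split $B(R)$ into $\overline B(\ell)$ and the shell $\ell\le|z|<R$, with $\ell$ to be chosen. On $\overline B(\ell)$ the same compactness argument gives an upper bound for each factor. The shell is the main step. Write $y=\mathcal T_\omega(z/R)$ and $x=\RCtot(z)$. Using~\eqref{eq:beta_condition_for_phi} and~\eqref{eq:Jacobian_asymptotic_beta}, choose $r_0<1$ such that $J_{\CP_\beta}(y)\le 2(c'/c)\,|\CP_\beta(y)|^{d+\beta}$ for $|y|>r_0$. Using~\eqref{eq:unif_ergodicity_pi_condition}, choose $\rho_0$ such that $\pi(x)\le A|x|^{-(d+\beta)}$ for $|x|>\rho_0$.

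To link $|y|$ close to $1$ with $|z|$ close to $R$, use that $\mathcal T_\omega$ maps the boundary sphere onto itself and $\overline B(1)$ onto itself. By uniform continuity of $\mathcal T_\omega^{-1}$ on the compact set $\overline B(1)$, the preimage of $\{|y|\le r_0\}$ is a compact subset of $B(1)$. Choosing $\ell$ close enough to $R$ therefore forces $|y|>r_0$ whenever $|z|\ge\ell$. Since $\phi_\beta(t)\to\infty$ as $t\uparrow1$ (from $(1-t)\phi_\beta^\beta\to c$), we can also make $|x|$ arbitrarily large by taking $r_0$ closer to $1$.

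On the shell we then have $|x|\ge R|\CP_\beta(y)|-|\mu|\ge \tfrac R2|\CP_\beta(y)|$ once $|\CP_\beta(y)|\ge 2|\mu|/R$. Combining the bounds gives
\[
\pi_B(z)\le A|x|^{-(d+\beta)}\cdot 2\tfrac{c'}{c}|\CP_\beta(y)|^{d+\beta}\cdot \sup_{\overline B(1)} J_{\mathcal T_\omega}\le A\,2\tfrac{c'}{c}(2/R)^{d+\beta}\sup J_{\mathcal T_\omega}<\infty.
\]
This bound is uniform on the shell, which completes (a).

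The only delicate points are the bookkeeping of the shift $\mu$ against $R\,\CP_\beta(y)$ and the claim that $\mathcal T_\omega$ sends neighbourhoods of the sphere to neighbourhoods of the sphere. Both are handled by the compactness and triangle-inequality arguments above.
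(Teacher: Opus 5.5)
Your proof is correct and follows essentially the same route as the paper. The paper also factors $\pi_B(z)=\pi(\RCtot(z))\,J_{\CP_\beta}(\mathcal T_\omega(z/R))\,J_{\mathcal T_\omega}(z/R)$ and uses continuity to reduce (a) to an annulus near $\partial B(R)$. There it bounds the three ratios $\pi(x)|x|^{d+\beta}$, $|\CP_\beta(y)|^{d+\beta}/|x|^{d+\beta}$ and $J_{\CP_\beta}(y)/|\CP_\beta(y)|^{d+\beta}$ exactly as you do, and (b) is the same positivity-plus-compactness argument.

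Your version is more explicit about two points the paper passes over quickly: absorbing the shift $\mu$, and why $|z|\uparrow R$ forces $|\mathcal T_\omega(z/R)|\uparrow 1$. For the latter, continuity of $\mathcal T_\omega^{-1}$ together with interior points mapping to interior points suffices; uniform continuity is not needed.
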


\begin{proof}
We first prove $\mathrm{(a)}$. The density $\pi_B$ on the ball  $B(R)$, given in~\eqref{eq:total_target} and~\eqref{eq:total_jacobian}, satisfies
\begin{equation}
 \label{eq:pi_bounds_lemma}
    \pi_{B}(z)= A(z)\cdot J_{\mathcal{T}_\omega}(z/R), \quad\text{where $A(z)\coloneqq \pi(\CP_{\text{total}}(z)) J_{\CP_\beta}(\mathcal{T}_\omega(z/R))$, $z\in B(R)$.}
\end{equation}
Since $\mathcal{T}_\omega$ is a $C^1$-diffeomorphism on the closure $\overline{B}(1)=\{y\in\R^d:|y|\leq1\}$, it maps the boundary of $\overline{B}(1)$ bijectively and smoothly  onto itself. Consequently, 
we have
\begin{equation}
\label{eq:tau_to_one}
    |\mathcal{T}_\omega(z/R)| \to 1\qquad\text{as $|z|\uparrow R$.}
\end{equation}
Furthermore, the Jacobian $J_{\mathcal{T}_\omega}(s)$ is continuous on the compact set $\overline{B}(1)$, making it bounded. It is by~\eqref{eq:pi_bounds_lemma} thus sufficient to prove that $A:B(R)\to(0,\infty)$ is bounded. By continuity of $A$, it is in fact sufficient to prove the boundedness of $A$ on the annulus $B(R)\setminus B(R')$ for some $R'\in(0,R)$.

Pick $R'\in(0,R)$ sufficiently close to $R$, so that 
$\left|\CP_{\mathrm{total}}(z)\right|\geq 1$ and
$\left|\CP_\beta\!\left(\mathcal T_\omega\!\left(\frac{z}{R}\right)\right)\right|\geq 1$ for $z\in B(R)\setminus B(R')$.
Express $A(z)=A_1(z)\cdot A_2(z)\cdot A_3(z)$, where $A_1(z)\coloneqq \pi(\CP_{\text{total}}(z)) \left|\CP_{\text{total}}(z)\right|^{d+\beta}$,  
$$
A_2(z)\coloneqq \frac{|\CP_\beta(\mathcal{T}_\omega(z/R))|^{d+\beta}}{\left|\CP_{\text{total}}(z)\right|^{d+\beta}}, \quad A_3(z)\coloneqq
 \frac{J_{\CP_\beta}(\mathcal{T}_\omega(z/R))}{|\CP_\beta(\mathcal{T}_\omega(z/R))|^{d+\beta}},\quad \text{for $z\in B(R)\setminus B(R')$.}$$
Since $|\CP_{\text{total}}(z)|\to\infty$ as $|z|\uparrow R$ (by definition $\CP_{\text{total}}(z)= \mu + R \cdot \CP_\beta(\mathcal{T}_\omega(z/R))$ in~\eqref{eq:total_projection}, assumption~\eqref{eq:beta_condition_for_phi} and property~\eqref{eq:tau_to_one}), the tail condition in~\eqref{eq:unif_ergodicity_pi_condition} and continuity of $\pi$ imply that $A_1$ is bounded on $B(R)$.
By~\eqref{eq:total_projection} and~\eqref{eq:tau_to_one}, the function $A_2$ is bounded on $B(R)\setminus B(R')$. By the limit in~\eqref{eq:Jacobian_asymptotic_beta} and the property in~\eqref{eq:tau_to_one}, 
$A_3$ is also bounded on $B(R)\setminus B(R')$. This completes the proof of part~$\mathrm{(a)}$.

We now prove part~$\mathrm{(b)}$.
Pick $\ell < R$. The closed ball $\overline{B}(\ell)$ is a compact subset of the open ball $B(R)$. By assumption, the original target density $\pi$ is strictly positive and continuous on the compact set $\RCtot(\overline{B}(\ell))$ in $\R^d$. Furthermore, the Jacobians $w \mapsto |J_{\CP_\beta}(w)|$ and $w \mapsto |J_{\mathcal{T}_\omega}(w)|$ are strictly positive on the open domain $w\in B(1)$. Since $\pi_{B}$ in~\eqref{eq:pi_bounds_lemma} is the product of strictly positive, continuous functions, it is itself strictly positive and continuous on $B(R)$, implying part~(b). 
\end{proof}

\begin{lem}[$m$-step minorization on $B(R)$]
\label{lem:geometric_overlap}
Let assumptions of Theorem~\ref{thm:Uniform_ergodicity_unified} hold. For the Markov kernel $Q$ given by any of the 
Algorithms~\ref{alg:DCS-HnR}, \ref{alg:DCS-BW}, \ref{alg:DCS-RWM}  and \ref{alg:DCS-DW}  with covariance-floored proposal
$\Sigma_\lambda$ in~\eqref{eq:regularised:Dikin_H} for some $\lambda>0$,
there exist an integer $m \ge 1$ and a constant $\varepsilon > 0$, such that for all $z \in B(R)$ and any measurable $A \subseteq B(R)$ we have $Q^m(z, A) \ge \varepsilon \overline\nu(A)$, where $\overline\nu$ is the law of a uniform random vector taking values in  a ball centred at the origin. 
\end{lem}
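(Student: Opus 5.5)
We will use Lemma~\ref{lem:density_bounds}: $\pi_B\le M_1$ on $B(R)$ and $\pi_B\ge m_\ell>0$ on $B(\ell)$ for every $\ell<R$. We fix $\ell=R/2$ and take the minorizing measure $\overline\nu$ to be uniform on a small ball $B(r_0)$ with $r_0\le \ell$. The plan is in two stages. First, show that from \emph{any} $z\in B(R)$ the chain reaches $B(\ell)$ within a bounded number of steps with probability bounded below uniformly in $z$. Second, show that one further step from any point of $B(\ell)$ dominates a multiple of Lebesgue measure on $B(r_0)$. Chaining these via Chapman--Kolmogorov gives $Q^m(z,A)\ge \varepsilon\,\overline\nu(A)$. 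Because $B(R)$ is bounded, only finitely many "inward moves" are needed.

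For the Metropolis-type kernels (\ref{alg:DCS-BW}, \ref{alg:DCS-RWM}, and \ref{alg:DCS-DW} with $\lambda>0$), the key observation is that the acceptance probability of a proposal $w$ is $\min\{1,\pi_B(w)q(w,z)/(\pi_B(z)q(z,w))\}\ge \min\{1, m_\ell\, c_q/M_1\}$ whenever $w\in B(\ell)$, where $c_q$ bounds the proposal-density ratio. For the Ball Walk and RWM the proposal is symmetric, so $c_q=1$. For the floored Dikin walk, $\Sigma_\lambda(z)\succeq \lambda\gamma^2 I_d/d$ from below. We also need an upper bound on $\Sigma_\lambda$, which I expect to follow from $H^{-1}$ being bounded on the bounded set $B(R)$ (indeed $\nabla^2\varphi\succeq 2I_d/R^2$). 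With two-sided covariance bounds, the Gaussian densities $q(z,\cdot)$ and $q(w,\cdot)$ are comparable up to constants on $B(R)$. Next, the proposal density is bounded below, say by $c_1>0$, on $B(z,\gamma)\cap B(R)$ in the Ball Walk case, and on all of $B(R)$ in the Gaussian cases. Then $Q(z,\cdot)\ge c_1 \min\{1,m_\ell c_q/M_1\}\,\mathrm{Leb}(\cdot\cap B(z,\gamma)\cap B(\ell))$. For the Gaussian kernels a single step already suffices with $m=1$. For the Ball Walk, move inward along the ray from $z$ to $0$ in hops of length $\gamma/2$. I will use balls $B(z_k,\gamma/4)$ of uniformly positive volume around intermediate points, each contained in $B(R)$. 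Each hop has probability bounded below, using the lower bound on $\pi_B$ on the compact $B(R-\gamma/8)$ via Lemma~\ref{lem:density_bounds}(b). Hence $m\approx 2R/\gamma+1$ steps reach a neighbourhood of the origin and minorize uniform measure on $B(r_0)$ with $r_0=\min\{\gamma/4,\ell\}$.

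For Hit-and-Run, one step from $z$ chooses a uniform direction and samples from $\pi_B$ restricted to the chord. The chord density is $\pi_B(w)/\int_{\text{chord}}\pi_B\ge \pi_B(w)/(2RM_1)$, since the chord has length at most $2R$. Converting to polar coordinates around $z$, the one-step kernel has density at least $c\,\pi_B(w)/(M_1|w-z|^{d-1})\ge c' m_\ell/(M_1 (2R)^{d-1})$ for $w\in B(\ell)$. So $m=1$ works with $\overline\nu$ uniform on $B(\ell)$.

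The main obstacle is uniformity near the boundary $|z|\uparrow R$. For the Ball Walk, points near $\partial B(R)$ have the accessible part of $B(z,\gamma)\cap B(R)$ still of volume $\ge c\gamma^d$, which holds by convexity of the ball. I also need the intermediate target balls to lie in a region where $\pi_B$ is bounded below. Lemma~\ref{lem:density_bounds}(b) only gives lower bounds on $B(\ell)$ for $\ell<R$, so the first hop from near the boundary must land in $B(R-\gamma/8)$. This is exactly where I expect to spend care: the acceptance ratio is $\ge \pi_B(w)/\pi_B(z)\ge m_{R-\gamma/8}/M_1$, which is fine because Lemma~\ref{lem:density_bounds}(a) bounds the denominator uniformly. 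For the Dikin walk, the analogous delicate point is bounding the proposal-density ratio near the boundary, where $H(z)^{-1}\to 0$. The floor $\lambda I_d$ is exactly what keeps $\Sigma_\lambda$ uniformly elliptic.
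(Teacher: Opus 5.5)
Your proposal is correct and follows essentially the paper's argument. The two-sided bounds of Lemma~\ref{lem:density_bounds} give a uniform lower bound on the acceptance probability (or on the chord-normalised density). For Hit-and-Run, RWM and the floored Dikin walk (via $c_{\min} I_d\preceq\Sigma_\lambda\preceq c_{\max}I_d$), this yields a one-step minorization by Lebesgue measure on $B(R/2)$; for the Ball Walk, a Chapman--Kolmogorov chain of small balls along the ray to the origin does the job.

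One detail needs tightening. Your hops of fixed length $\gamma/2$ produce a $z$-dependent number of steps, but the minorization needs a single $m$ for all $z$. The paper fixes this with the $m=\lceil 4R/\gamma\rceil$ equally spaced centres $z(1-i/m)$ and radius $\min(\gamma/4,R/(4m))$. Padding with extra steps inside the final small ball around the origin would work equally well.
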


\begin{proof}
By Lemma~\ref{lem:density_bounds}, for any $u\in(0,1)$, there exist constants $M_1 < \infty$ and  $m_{u} > 0$ such that 
\begin{equation}
\label{eq:z_1&z_2}
\text{$\pi_{B}(z_1) \le M_1$ and $\pi_{B}(z_2) \ge m_{u}$ for all $z_1 \in B(R)$ and $z_2 \in B(uR)$.}
\end{equation}

\noindent \underline{Proof for Algorithm~\ref{alg:DCS-HnR}.}
Let $\sigma$ denote the uniform probability measure on $\mathbb S^{d-1}$. For
$z\in B(R)$ and $\theta\in\mathbb S^{d-1}$, define $I(z,\theta):=\{t\in\mathbb R:\ z+t\theta\in B(R)\}$.
Then the transition kernel of the Hit-and-Run Markov chain equals 
\begin{equation}
\label{eq:HR-kernel-spherical}
Q(z_1,A)=\P_{z_1}(Z_1\in A)
=
\int_{\mathbb S^{d-1}}
\int_{I(z_1,\theta)}
\mathbf 1_A(z_1+t\theta)
\frac{\pi_B(z_1+t\theta)}
{\displaystyle \int_{I(z_1,\theta)} \pi_B(z_1+s\theta)\,\ud s}
\,\ud t\,\sigma(\ud\theta),
\end{equation}
for any $A\subseteq B(R)$ and $z_1\in B(R)$.
Now let $A\subseteq B(R/2)$. By~\eqref{eq:z_1&z_2} with $u=1/2$, we get
\[
\pi_B(z_1+t\theta)\ge m_{1/2}
\qquad\text{for all $t\in\R$ satisfying } z_1+t\theta\in A,
\]
while
\[
\int_{I(z_1,\theta)} \pi_B(z_1+s\theta)\,\ud s
\le 2R\,M_1,
\]
since the chord length is of length at most \(2R\) and \(\pi_B\le M_1\) on \(B(R)\). Hence
from~\eqref{eq:HR-kernel-spherical},
\[
\P_{z_1}(Z_1\in A)
\ge
\frac{m_{1/2}}{2RM_1}
\int_{\mathbb S^{d-1}}
\int_{I(z_1,\theta)}
\mathbf 1_A(z_1+t\theta)\,\ud t\,\sigma(\ud\theta).
\]

Since $B(R)=\{z_1+r\theta:\theta\in\mathbb{S}^{d-1}, r\in I(z_1,\theta)\cap\R_+\}$,
using polar coordinates $ (r,\theta)\mapsto z=z_1+r\theta$ centred at $z_1$ (with Jacobian $|z-z_1|^{d-1}$), for an appropriate constant $C_d>0$ we obtain
\[
\P_{z_1}(Z_1\in A)
\ge
C_d \frac{m_{1/2}}{2RM_1}
\int_A |z-z_1|^{1-d}\,\ud z.
\]
Since $z_1\in B(R)$ and $A\subseteq B(R/2)$, we have $|z-z_1|\le |z|+|z_1|\le 2R$ for all
\(z\in A\). Thus
\[
\P_{z_1}(Z_1\in A)
\ge
C_d (2R)^{-d}\frac{m_{1/2}}{M_1}\,\mathrm{Leb}(A)\quad\text{for all measurable $A\subset B(R/2)$,}
\]
implying the lemma (with $m=1$ and $\overline\nu$ equal to the uniform law on $B(R/2)$) for Algorithm~\ref{alg:DCS-HnR}.

\noindent \underline{Proof for Algorithm~\ref{alg:DCS-DW}.} In the case of Dikin Walk, the proposal density equals $q(z_1, z_2) = \mathcal{N}(z_2; z_1, \Sigma_\lambda(z_1))$, where the covariance matrix is defined as $\Sigma_\lambda(z) = \frac{\gamma^2}{d}(H(z)^{-1} + \lambda I_d)$. Since $H$ is a  positive-definite local metric on the ball $B(R)$ given by~\eqref{eq:local_metric}, it is bounded from below. Moreover since $\lambda > 0$, the spectrum of $\Sigma_\lambda(z)$ is contained in the interval $[c_{\min}, c_{\max}]$
for all $z\in B(R)$ and  $0 < \gamma^2\lambda/d=: c_{\min} < \gamma^2(\lambda + R^2/(2d))/d=:c_{\max} < \infty$.
Hence, $c_{\min} I_d \preceq \Sigma_\lambda(z) \preceq c_{\max} I_d$ for all $z \in B(R)$. Since $|z_1 - z_2| \le 2R$, the Gaussian proposal density is uniformly bounded from below by
\[
    q(z_1, z_2) \ge \frac{1}{(2\pi c_{\max})^{d/2}} \exp\left(-\frac{2R^2}{c_{\min}}\right) \eqqcolon \delta_q > 0.
\]
Accounting for the probability of $1/2$ associated with the lazification step in Algorithm~\ref{alg:DCS-DW}, 
for any measurable $A\subset B(R)$
the  one-step  transition probability $Q(z_1,\cdot)=\P_{z_1}(Z_1\in \cdot)$  satisfies
\[
  \P_{z_1}(Z_1\in A) \ge \frac{1}{2} \int_A \min\left(q(z_1, z_2), \frac{\pi_{B}(z_2)}{\pi_{B}(z_1)} q(z_2, z_1)\right)\ud z_2.
\]
Applying the uniform bounds $q(\cdot, \cdot) \ge \delta_q$ on $B(R)\times B(R)$ and  $\pi_B \le M_1<\infty$ on $B(R)$, $\pi_B \ge m_{1/2}>0$ on $B(R/2)$ from~\eqref{eq:z_1&z_2},  for all measurable $A\subset B(R/2)$ we obtain
\[
  \P_{z_1}(Z_1\in A) 
 \ge \frac{1}{2} \min\left(\delta_q, \frac{m_{1/2}}{M_1} \delta_q \right) \mathrm{Leb}(A),
\]
where  $\min(\delta_q, \frac{m_{1/2}}{M_1} \delta_q)>0$, concluding the proof for Algorithm~\ref{alg:DCS-DW} (with $m=1$ and $\overline\nu$ the uniform law on $B(R/2)$).

In the remainder of the proof,
denote  by
$\alpha(z_1, z_2)\coloneqq \min(1, \pi_{B}(z_2) / \pi_{B}(z_1))$  the acceptance probability   for any $z_1,z_2\in B(R)$. By~\eqref{eq:z_1&z_2}, for any $u\in(0,1)$, the acceptance probability is uniformly bounded below on the set $B(R)\times B(uR)$ by $\alpha\geq \alpha_{\min,u} \coloneqq m_{u} / M_1 > 0$.

\noindent \underline{Proof for Algorithm~\ref{alg:DCS-RWM}.} The proposal density $q(z_1, z_2)$ is Gaussian and hence   there exists $\delta_q > 0$ such that $q(z_1, z_2) \ge \delta_q$ for all $(z_1,z_2)\in B(R) \times B(R)$. It follows that
for any measurable $A\subset B(R/2)$, the one-step transition probability $Q(z_1,\cdot)=\P_{z_1}(Z_1\in \cdot)$ of the chain satisfies
\[
 \P_{z_1}(Z_1\in A) 
 \ge   \int_A q(z_1, z_2) \alpha(z_1, z_2) \ud z_2 \ge \delta_q \alpha_{\min,1/2} \mathrm{Leb}(A)
\quad\text{for any $z_1 \in B(R)$,}\]
implying the lemma (with $m=1$ and $\overline\nu$ equal to the uniform law on $B(R/2)$) for Algorithm~\ref{alg:DCS-RWM}.

\noindent \underline{Proof for Algorithm~\ref{alg:DCS-BW}.} Ball walk may need several steps to get from an arbitrary point $z\in B(R)$ to a neighbourhood of the origin. Recall that for a state $z \in B(R)$, the proposal distribution of the ball walk is uniform on the set $z+B(\gamma)$ for some fixed $\gamma > 0$. Since the maximum distance between any two points in $B(R)$ is $2R$, we consider paths of length $m \coloneqq \lceil 4R/\gamma \rceil$, where $\lceil x\rceil$ denotes the smallest integer greater or equal to $x\in\R$. 

Denote by $V_\gamma\coloneqq \textrm{vol}_d(B(\gamma))$ the volume of $B(\gamma)$. The one-step transition probability 
$Q(z,\cdot)=\P_z(Z_1\in \cdot)$
of the Ball-Walk chain satisfies
\begin{equation}
\label{eq:one_step_lower_bound_BW}
       \P_z(Z_1\in  A)\ge p_{\min,u} \mathrm{Leb}(A),\quad\text{where $p_{\min,u}\coloneqq \alpha_{\min,u}/V_\gamma$,}
\end{equation}
for any $z\in B(R)$, $u\in(0,1)$ and measurable $A\subset (z+B(\gamma))\cap B(uR)$ (recall $\alpha_{\min,u}=\frac{m_u}{M_1}$ in~\eqref{eq:z_1&z_2}).

 To establish the $m$-step minorization, consider 
 an arbitrary starting state $z\in B(R)$ and let $c_i\coloneqq  z(1-  \frac{i}{m})$ for $i = 1, \dots, m$, implying $|c_i|\leq R(1- 1/m)$ and that the distance between consecutive points is bounded above by $|c_i - c_{i-1}| = |z|/m \le\gamma/4$ (here $c_0\coloneqq z$). Around each centre $c_i$, for $i=1,\ldots,m$, we define an intermediate balls $B_i \coloneqq c_i+B(r)$ with radius $r \coloneqq\min(\gamma/4, R/(4m))$. For any state $v_i \in B_i$, we have  $|v_i| < |c_i| + r \leq R(1- 1/m)+R/(4m)  < R - R/(4m)$, implying $B_i\subset B(uR)$ for $u \coloneqq 1 - 1/(4m) \in (0, 1)$ 
and all $i=1,\ldots,m$.

The distance between any two points $v_i \in B_i$ and $v_{i+1} \in B_{i+1}$ (recall $v_0= z$ and $v_m \in B_M=B(r)$) satisfies
\[
    |v_i - v_{i+1}| \le |v_i - c_i| + |c_i - c_{i+1}| + |c_{i+1} - v_{i+1}| \le r + \frac{\gamma}{4} + r \le \frac{3\gamma}{4} < \gamma.
\]
Since  $B_{i+1}\subset v_i+B(\gamma)$, the transition from $v_i$ to  $B_{i+1}$ has positive probability. Since all $v_i$, for $i=1,\ldots,m$, are contained in $B(uR)$, by~\eqref{eq:one_step_lower_bound_BW} the one-step transition probability satisfies 
\begin{equation}
\label{eq:BW_transiton_one_step}
\P_{v_{i-1}}(Z_1\in B_i)\geq p_{\min,u}V_r\quad\text{for any $i=1,\ldots,m$, where $V_r\coloneqq \text{vol}_d(B(r)).$}
\end{equation}
The $m$-step transition kernel $Q^m(z,\cdot)=\P_z(Z_m\in \cdot)$ is by the Chapman-Kolmogorov equation,~\eqref{eq:one_step_lower_bound_BW} and~\eqref{eq:BW_transiton_one_step} bounded below by
\begin{align*}
\P_z(Z_m\in A)
    &\ge \int_{B_1\times\cdots \times B_{m-1}}  \P_{v_0}(Z_1\in \ud v_1)  \cdots \P_{v_{m-2}}(Z_1\in \ud v_{m-1} )
   \P_{v_{m-1}}(Z_1\in A\cap B(r))\\ 
   & \ge 
    p_{\min,u}^m  V_r^{m-1} \mathrm{Leb}(A\cap B(r)),
\end{align*}
for any measurable set $A\subset B(R)$, establishing the lemma for Algorithm~\ref{alg:DCS-BW} (with $\overline\nu$ the uniform law on $B(r)$ and $m= \lceil 4R/\gamma \rceil$ defined above). 
\end{proof}

\begin{proof}[Proof of Theorem~\ref{thm:Uniform_ergodicity_unified}]
Let $P$ denote the one-step transition kernel of the chain $X=(X_n)_{n\in\N}$. Put differently, we have
$P^n(x,\cdot)\coloneqq\P_x(X_n\in\cdot)$ for all $x\in\R^d$.
A set $C\subset \R^d$ is a \textit{small set} for a chain $X$ if there exist $m\in\N$, $\epsilon>0$ and a probability measure $\overline\mu$ on measurable sets in $\R^d$ such that $$P^m(x, \cdot)\geq \epsilon \overline\mu(\cdot)\quad\text{ for all $x\in C$.}$$
Since $X=\RCtot(Z)$, by Lemma~\ref{lem:geometric_overlap}, the push-forward onto $\R^d$ of the measure $\overline\nu$ on $B(R)$  (via $\RCtot:B(R)\to\R^d$) implies that 
under the assumptions of Theorem~\ref{thm:Uniform_ergodicity_unified}, the entire state space $C=\R^d$ is small for  the chain $X$ defined by
Algorithm~\ref{alg:DCS} for some $\epsilon>0$ and the push-forward probability measure $\overline\mu:=(\RCtot)_*\overline\nu$:
\begin{align*}
P^m(x,A)
&=
Q^m\left(
\RCtot^{-1}(x),
\RCtot^{-1}(A)
\right) 
\geq
\varepsilon\,
\overline\nu\left(\RCtot^{-1}(A)\right)
=
\varepsilon\,\overline\mu(A),\quad 
A\in\mathcal B(\R^d).
\end{align*}
After decreasing $\varepsilon$ if necessary, assume that
$\varepsilon\in(0,1)$ and define the kernel
$\widetilde P(x,A)
:=
\frac{P^m(x,A)-\varepsilon\overline\mu(A)}
     {1-\varepsilon}$.
The minorization implies that $\widetilde P$ is a Markov kernel and
$P^m(x,\cdot)
=
\varepsilon\overline\mu(\cdot)
+
(1-\varepsilon)\widetilde P(x,\cdot)$.
Consequently, for any probability measures $\eta_1$ and $\eta_2$,
\begin{align}
\label{eq:Doeblin_contraction}
\left\|\eta_1P^m-\eta_2P^m\right\|_{\mathrm{TV}}
=
(1-\varepsilon)
\|\eta_1\widetilde P-\eta_2\widetilde P\|_{\mathrm{TV}}
\leq
(1-\varepsilon)
\left\|\eta_1-\eta_2\right\|_{\mathrm{TV}},
\end{align}
since every Markov kernel  contracts total variation.

Since $\nu$ is invariant for $P$, iterating
\eqref{eq:Doeblin_contraction} yields
\[
\|P^{km}(x,\cdot)-\nu\|_{\mathrm{TV}}
\leq
(1-\varepsilon)^k
\left\|\delta_x-\nu\right\|_{\mathrm{TV}}
\leq
(1-\varepsilon)^k\quad\text{for all $k\in\N$.}
\]
For any $n\in\N$, write
$n=km+j$,
$k=\left\lfloor\frac{n}{m}\right\rfloor$
and
$0\leq j<m$.
Another application of total-variation contraction gives
\[
\left\|P^n(x,\cdot)-\nu\right\|_{\mathrm{TV}}
\leq
\|P^{km}(x,\cdot)-\nu\|_{\mathrm{TV}}
\leq
(1-\varepsilon)^{\lfloor n/m\rfloor}.
\]
Therefore, setting
$\rho:=(1-\varepsilon)^{1/m}\in(0,1)$ and
$C:=(1-\varepsilon)^{-1}$,
we obtain
\[
\left\|\P_x(X_n\in\cdot)-\nu\right\|_{\mathrm{TV}}
=
\left\|P^n(x,\cdot)-\nu\right\|_{\mathrm{TV}}
\leq
C\rho^n\quad\text{for every $x\in\R^d$ and $n\in\N$.}
\]
Hence $X$ is uniformly ergodic.
\end{proof}

\section{Skewed and multimodal Student-$t$ laws satisfy fast-mixing assumptions}
\label{app:diffeo_convex}

Theorem~\ref{thm:dcs_bounds} above gives non-asymptotic convergence bounds for
Algorithm~\ref{alg:DCS} under geometric assumptions on the transformed density
$\pi_B$ in~\eqref{eq:total_target}. The key assumption in the theorem is the
convexity of the transformed potential $U_B=-\log \pi_B$, namely
Assumption~\ref{assump:convex}. The purpose of this appendix is to provide examples of  targets $\pi$ on $\R^d$ that are transformed by~\eqref{eq:total_projection} into a
density $\pi_B$ with a convex potential. In this appendix we work with $\CP_\beta$ given in~\eqref{eq:compact_projection_first}. We start with Student-$t$ targets and then  extend the method to the  multimodal settings.

Recall that by definition of the Student-$t$ distribution has a density on $\R^d$ proportional to
\begin{equation}
\label{eq:student_def}
\pi(x)
=
\left(
1+\frac{1}{v}(x-\mu)^\top\Sigma^{-1}(x-\mu)
\right)^{-(v+d)/2},
\qquad x\in\R^d,
\end{equation}
where $v>0$ and $\Sigma\in \R^{d\times d}$ is a positive definite matrix.

\begin{example}
\label{ex:student_mixing}
For every $v\geq2$ and positive definite $\Sigma$, the Student-$t$ target in~\eqref{eq:student_def} admits a diffeomorphism
$\RCtot:B(R)\to \R^d$ with $R^2=d+v$ and $\beta = 2$ such that the transformed potential
$U_B=-\log\pi_B$ is convex on $B(R)$ and the transformed probability measure $\nu_B$ is $1$-isotropic (i.e.,~\ref{assump:isotropic} holds with $C=1$). Moreover, Assumption~\ref{assump:cold} holds for the initial point $z=0$
with
\begin{equation}
\label{eq:Cold_params}  
        \underline r=\frac{\sqrt d}{8},
        \qquad
        \overline r^2=d,
        \qquad
        m\geq1,
        \qquad
        \varrho=1.
\end{equation}
Consequently, by Theorem~\ref{thm:dcs_bounds} above, Algorithms~\ref{alg:DCS-BW} and~\ref{alg:DCS-HnR} mix in
$O(d^3)$ from a warm start, while
Algorithm~\ref{alg:DCS-HnR} mixes in $O(d^3)$ from the cold start.
\end{example}

\begin{rem}[Improved Ball-Walk bound under roundedness]
\label{rem:ball_walk_roundedness}
The Ball-Walk bound in Theorem~\ref{thm:dcs_bounds} is stated in terms of
$C$-isotropy in order to use a common and easily verifiable geometric
assumption throughout the paper. Alternatively, the Ball-Walk result may be
formulated using the notion of $a$-roundedness; see
\cite[p.~310]{lovasz2006fast}. In this formulation, for an isotropic target,
the mixing bound is of order
\[
        d^2\gamma^{-2}
        \log\left(\frac{H_0}{\epsilon}\right)
        \quad\text{for step size of order}\quad
        \gamma
        \asymp
        \frac{a\epsilon^2}{H_0^2\sqrt d}.
\]
For a general isotropic log-concave target, the roundedness parameter is only
guaranteed to be of constant order, which recovers the mixing time complexity of
$d^3$. In contrast, for the transformed Student-$t$ targets in
Example~\ref{ex:student_mixing}, the radial geometry of the level sets gives
\[
        a\ge\frac{\sqrt d}{8}.
\]
Consequently, $\gamma$ may be chosen independently of $d$, and the
warm-start Ball-Walk mixing bound improves to $d^2$.

We retain the $C$-isotropy formulation in
Theorem~\ref{thm:dcs_bounds} for ease of presentation and regard the
quadratic bound as a sharper, family-specific consequence of the stronger
roundedness property.
\end{rem}

\begin{proof}[Proof of Examples~\ref{ex:student_mixing}]
Recall the diffeomorphism $\CP_2: B(1)\to \R^d$ in~\eqref{eq:compact_projection_first} and set $R^2=d+v$.
Define the automorphism $\mathcal T_\omega: B(1)\to B(1)$ by
$$
\mathcal T_\omega(q)
=
\CP_2^{-1}\left(\frac{\sqrt{v}}{R}\Sigma^{1/2}\CP_2(q)\right),
\qquad q\in B(1).
$$
Equivalently,
$$
\mathcal T_\omega(q)
=
\frac{(\sqrt{v}/R)\Sigma^{1/2}q}
{\sqrt{1-|q|^2+\left|(\sqrt{v}/R)\Sigma^{1/2}q\right|^2}},
\qquad \text{with inverse} \qquad
\mathcal T_\omega^{-1}(y)
=
\frac{(R/\sqrt{v})\Sigma^{-1/2}y}
{\sqrt{1-|y|^2+\left|(R/\sqrt{v})\Sigma^{-1/2}y\right|^2}}.
$$
For $q=z/R$ (i.e., $z\in B(R)$), the total diffeomorphism in~\eqref{eq:total_projection} equals
$$
\RCtot(z)
=
\mu+\CP_2(\mathcal T_\omega(z/R))R
=
\mu+\sqrt{v}\,\Sigma^{1/2}\CP_2(z/R).
$$
Hence
$$
(\RCtot(z)-\mu)^\top\Sigma^{-1}(\RCtot(z)-\mu)
=
v|\CP_2(z/R)|^2.
$$
The definition of $\pi_B$ in~\eqref{eq:total_target} yields

$$
\pi_B(z)
=
\left(1+|\CP_2(z/R)|^2\right)^{-(v+d)/2}
\left| J_{\CP_2}(z/R)\right|.
$$
Using the identities
$$
1+|\CP_2(z/R)|^2
=
\frac{1}{1-|z|^2/R^2},
\qquad
\left| J_{\CP_2}(z/R)\right|
=
\left(1-\frac{|z|^2}{R^2}\right)^{-(1+d/2)},
$$
we obtain
\begin{equation}
\label{eq:tranformed_student_ex}
\pi_B(z)
=
\left(1-\frac{|z|^2}{R^2}\right)^{v/2-1}.
\end{equation}
Therefore
$$
U_B(z)
=
C_0-
\left(\frac{v}{2}-1\right)
\log\left(1-\frac{|z|^2}{R^2}\right)
 ~\&~
\nabla^2 U_B(z)
=
\frac{v-2}{R^2}
\left[
\frac{I}{1-|z|^2/R^2}
+
\frac{2zz^\top/R^2}{(1-|z|^2/R^2)^2}
\right]
\succeq0.
$$
Thus, Assumption~\ref{assump:convex} holds. Moreover, $\pi_B$ is exactly isotropic. Indeed, if $Z\sim \nu_B$, then
$\mathbb E |Z|^2=R^2\frac{d}{d+v}.$
Since $R^2=d+v$, we get $\mathbb E |Z|^2=d$. By radial symmetry, $\E[ZZ^\top] =  \E |Z|^2/dI_d  =  I_d.$
Thus, Assumption~\ref{assump:isotropic} holds with $C=1$.  

It remains to verify Assumption~\ref{assump:cold}. From~\eqref{eq:tranformed_student_ex}, we obtain
\[
        \nu_B(z)
        =
        \frac{\Gamma((d+v)/2)}
        {\pi^{d/2}R^d\Gamma(v/2)}
        \left(1-\frac{|z|^2}{R^2}\right)^{v/2-1},
        \qquad z\in B(R).
\]
Since $v\ge2$, the density $\nu_B$ is radially symmetric and non-increasing
in $|z|$. Hence every non-empty level set $\mathcal L(c)$ is a ball
centred at the origin. Since $\nu_B(z)\le \nu_B(0)$, we have
\[ \nu_B(B(\sqrt d/8)) \le
        \nu_B(0)\operatorname{vol}_d(B(\sqrt d/8))=
        \frac{\Gamma((d+v)/2)}
        {\Gamma(v/2)\Gamma(d/2+1)}
        \left(\frac{d}{64(d+v)}\right)^{d/2}\le
        \frac{(d/128)^{d/2}}{\Gamma(d/2+1)}
        <1/8,
\]
where the second inequality follows from Wendel's inequality $\Gamma(x+a)\leq \Gamma(x)(x+a)^a$ with $a= d/2$, $x=v/2$ (see e.g.~\cite[Eq.~2.2]{luo2012bounds}), and the 
the last inequality holds by  $\Gamma(d/2+1) \ge (d/(2e))^{d/2}$ for $d\ge2$ (see~\cite{batir2017bounds}).
Consequently, every level set
$\mathcal L(c)$ satisfying $\nu_B(\mathcal L(c))\ge 1/8$ contains $B(\sqrt d/8).$ Thus, the level-set condition in Assumption~\ref{assump:cold} holds with
$\underline r=\sqrt d/8$.
Moreover, by isotropy we already obtained 
$      \int_{B(R)}|z|^2\,\nu_B(z)dz=d$, implying $\overline r^2=d$.

The density $\nu_B$ attains its maximum at $0$. Hence
$\nu_B(0)=\sup_{z\in B(R)}\nu_B(z)$, implying $\varrho=1$.
For $v>2$, the level set of $\pi_B$ at the value $\pi_B(0)/2$ is by~\eqref{eq:tranformed_student_ex}  equal to
\[
       \mathcal L(\pi_B(0)/2)
        =
        B\left(
        R\sqrt{1-2^{-2/(v-2)}}
        \right).
\]
For $v=2$, the function $\pi_B$  is constant, implying
\[
        \mathcal L(\pi_B(0)/2)=B(R).
\]
Therefore the distance of the origin from the boundary of the level set $L(\nu_B(0)/2)$ equals
\[
       m= \operatorname{dist}
        \left(
        0,\partial\mathcal L(\nu_B(0)/2)
        \right)
        =
        \begin{cases}
        R, & v=2,\\[1mm]
        R\sqrt{1-2^{-2/(v-2)}}, & v>2.
        \end{cases}
\]
Since this distance is at least $1$ for every $d\ge1$ and $v\ge2$, we have
 $m\geq1$.
\end{proof}

\begin{example}[Multimodal Student-$t$ mixture]
\label{ex:multimodal_student_mixing}
Let $v\geq2$ and $a_1,,\ldots, a_m\in \R^d$. Consider the multimodal Student-$t$ mixture distribution
$$
\pi(x)
=
\sum_{j=1}^m
w_j c_{v,d}
\left(
1+\frac{|x-a_j|^2}{v}
\right)^{-(v+d)/2},
\qquad
w_j>0,
\qquad
\sum_{j=1}^m w_j=1.
$$
There exists a diffeomorphism
$\RCtot: B(R)\to\R^d$ such that the transformed potential
$U_B=-\log\pi_B$ is convex on $B(R)$. Consequently,
Algorithms~\ref{alg:DCS-HnR} and~\ref{alg:DCS-BW} mix in $O(d^3)$ from a warm
start, while Algorithm~\ref{alg:DCS-HnR} mixes in $O(d^3)$ from a cold start.
\end{example}

\begin{proof}[Proof of Example~\ref{ex:multimodal_student_mixing}]
Let $\pi_0$ be the centred symmetric Student-$t$ density with parameter
$v$ and $\Sigma = \Id$ in~\eqref{eq:student_def}. Since both $\pi$ and $\pi_0$ are smooth and strictly positive, the
Knothe--Rosenblatt transport theorem~\cite[Prop.~2.5]{bogachev2005triangular} yields a smooth triangular diffeomorphism
$\Psi:\R^d\to\R^d$ such that
$$
\pi(x)
=
\pi_0(\Psi(x))|J_\Psi(x)|.
$$
Take $R^2=d+v$, use $\CP_2$, and define
$$
\mathcal T_\omega(q)
=
\CP_2^{-1}\left(
\frac{1}{R}\Psi^{-1}(\sqrt{v}\,\CP_2(q))
\right),
\qquad q\in B(1).
$$
Then $\mathcal T_\omega$ is a diffeomorphism of $B(1)$ onto itself. Moreover,
for $q=z/R$,
$$
\RCtot(z)
=
\CP_2(\mathcal T_\omega(z/R))R
=
\Psi^{-1}(\sqrt{v}\,\CP_2(z/R)).
$$
Set $y=\sqrt{v}\,\CP_2(z/R)$. Then
$$
\begin{aligned}
\pi_B(z)
&=
\pi(\RCtot(z))|J_{\RCtot(z)}| 
=
\pi(\Psi^{-1}(y))
|J_{\Psi^{-1}}(y)|
\left|J_{\CP_2(z/R)R}\right| 
=
\pi_0(y)
\left|J_{\CP_2(z/R)R}\right|.
\end{aligned}
$$
 Since
$R^2=d+v$,
$$
\pi_B(z)
\propto
\left(1+|\CP_2(z/R)|^2\right)^{-(v+d)/2}
\left|J_{\CP_2(z/R)R}\right|.
$$
Using
$$
1+|\CP_2(z/R)|^2
=
\frac{1}{1-|z|^2/R^2},
\qquad
\left|J_{\CP_2(z/R)}\right|
=
\left(1-\frac{|z|^2}{R^2}\right)^{-(1+d/2)},
$$
we obtain
$$
\pi_B(z)
\propto
\left(1-\frac{|z|^2}{R^2}\right)^{v/2-1}.
$$
Therefore
$$
U_B(z)
=
C
-
\left(\frac{v}{2}-1\right)
\log\left(1-\frac{|z|^2}{R^2}\right),
$$
which is convex on $B(R)$ because $v\geq2$. Hence,
Assumption~\ref{assump:convex} holds. 

Note that $\pi_B$ is proportional to the target density in Example~\ref{ex:student_mixing}
with $R^2=d+v$. Thus Assumptions~\ref{assump:isotropic}  (with $C=1$)
and Assumption~\ref{assump:cold} (with parameters given in~\eqref{eq:Cold_params}) hold by the same arguments.
\end{proof}

\begin{rem}[Boundary regularity of the transport]
\label{rem:KR_boundary_regular}
The Knothe--Rosenblatt transport used in
Example~\ref{ex:multimodal_student_mixing} should be understood as an
open-domain transport. By itself, it does not guarantee that the $\mathcal{T}_\omega$  extends to a $C^1$-diffeomorphism of the closed ball
$\overline{B}(1)$. This boundary regularity is a separate condition.

The required closed-ball regularity may either be assumed directly, or obtained from compact-domain volume-form transport results. In particular,
Dacorogna--Moser type prescribed-Jacobian theorems, together with
support-control refinements, imply that if two smooth positive densities on
$B(R)$ have equal total mass and agree in a neighbourhood of
$\partial B(R)$, then the transport can be chosen to be the identity near
$\partial B(R)$~\cite[Thm~1]{teixeira2017dacorogna} (see also~\cite{dacorogna1990partial,teixeira2017dacorogna} for more details). Such a transport is a $C^1$-diffeomorphism of
$\overline{B}(R)$ with Jacobian bounded above and bounded away from zero.
\end{rem}

\section{Markov kernels for the Diffeomorphic Contraction Sampler}
\label{app:Algs}
In order to apply Algorithm~\ref{alg:DCS} in practice, it remains to specify the Markov kernel $Q$ on the ball $B(R)$ targeting the measure $\nu_B$ with density proportional to $\pi_B$ in~\eqref{eq:total_target}.
The fully preconditioned transformation $\RCtot$ provides a powerful mechanism for leveraging efficient, fast-mixing sampling algorithms originally designed for convex sets.  We now detail five well-studied algorithms for sampling from target laws on $B(R)$.

\subsection{Random Walk Metropolis (RWM)}
The first and most straightforward approach is the Random Walk Metropolis kernel with  Gaussian proposals. While it does not use the gradient information of the target, its each step is computationally inexpensive and highly robust. When restricted to $B(R)$, proposals that fall outside the ball are  rejected.

\begin{namedalgorithm}{DCS-RWM}{DCS with Gaussian Random Walk Metropolis}
\begin{algorithmic}[1]
\Statex \textbf{RWM kernel $Q$ (Line~\ref{alg:DCS:step2} of Algorithm~\ref{alg:DCS})}: requires target $\pi_B$ in~\eqref{eq:total_target} and $\sigma^2>0$
\State \textbf{Given:} current  state $\mathbf{z}$ in ball $B(R)$
\State Sample proposal $\hat{\mathbf{Y}} \sim \mathcal{N}(\mathbf{z}, \sigma^2 I_d)$
\If{$|\hat{\mathbf{Y}}| \ge R$}
    \State Set $\hat{\mathbf{Z}} = \mathbf{z}$ \Comment{Reject out-of-bounds proposals}
\Else
    \State With probability $1 \wedge \frac{\pi_{B}(\hat{\mathbf{Y}})}{\pi_{B}(\mathbf{z})}$, set $\hat{\mathbf{Z}} = \hat{\mathbf{Y}}$; otherwise, set $\hat{\mathbf{Z}} = \mathbf{z}$
\EndIf
\end{algorithmic}
\end{namedalgorithm}
The hyper-parameter $\sigma^2>0$, multiplying the identity matrix $I_d$ on $\R^d$, is specific to the kernel in~\ref{alg:DCS-RWM} and can be tuned using the standard criteria in~\cite{roberts1997weak}. 

\subsection{Hit-and-Run Sampling}
The second algorithm is the Hit-and-Run  sampler~\ref{alg:DCS-HnR}, which has been extensively studied due to its efficient exploration of convex bodies \cite{geman1984stochastic}. Most notably, the seminal work of Lov\'{a}sz and Vempala demonstrated its rapid mixing properties~\cite{lovasz2004hit, MR2309621}. Algorithm~\ref{alg:DCS-HnR} simulates a random direction $\theta$ in the unit sphere $\mathbb{S}^{d-1}\coloneqq \{x\in \R^d: |x|=1\}$  in $\R^d$ and then samples from the one-dimensional restriction of the target density along the chord in $B(R)$ in the direction $\theta$ (through the current point).

\begin{namedalgorithm}{DCS-HnR}{DCS with Hit-and-Run}
\begin{algorithmic}[1]
\Statex \textbf{HnR kernel $Q$ (Line~\ref{alg:DCS:step2} of Algorithm~\ref{alg:DCS})}: requires target $\pi_B$ in~\eqref{eq:total_target}
\State \textbf{Given:} current  state $\mathbf{z}$ in ball $B(R)$
\State Sample a random direction $\theta$ uniformly from the unit sphere $\mathbb{S}^{d-1}$
\State Determine the line segment $L = \{ t \in \R:\mathbf{z} + t\theta \in B(R)\}$
\State Sample $t^*$ from the 1D density proportional to $\pi_{B}(\mathbf{z} + t\theta)$ restricted to $L$
\State Set $\hat{\mathbf{Z}} = \mathbf{z} + t^*\theta$
\end{algorithmic}
\end{namedalgorithm}

Step 4 may be implemented using a valid univariate slice-sampling
transition targeting the density proportional to
$t\mapsto\pi_B(\mathbf z+t\theta)$ on the feasible chord. This
produces a different Markov kernel from ideal Hit-and-Run, which
samples exactly from the one-dimensional conditional distribution.
The slice-based kernel nevertheless leaves $\pi_B$ invariant and
therefore introduces no asymptotic bias. The Hit-and-Run convergence
bounds stated in this paper concern the ideal exact-conditional
kernel.

\subsection{Ball Walk}
The Ball-Walk algorithm~\ref{alg:DCS-BW} is conceptually similar to~\ref{alg:DCS-RWM}. Its proposals are drawn uniformly from a small local ball of radius $\gamma>0$ rather than a Gaussian distribution with variance $\sigma^2$.

\begin{namedalgorithm}{DCS-BW}{DCS with Ball Walk}
\begin{algorithmic}[1]
\Statex \textbf{BW kernel $Q$ (Line~\ref{alg:DCS:step2} of Algorithm~\ref{alg:DCS})}: requires target $\pi_B$ in~\eqref{eq:total_target} and $\gamma>0$
\State \textbf{Given:} current  state $\mathbf{z}$ in ball $B(R)$
\State Sample proposal $\hat{\mathbf{Y}}$ uniformly from $\mathbf{z} + \gamma B(1)$
\If{$|\hat{\mathbf{Y}}| \ge R$}
    \State Set $\hat{\mathbf{Z}} = \mathbf{z}$
\Else
    \State With probability $1 \wedge \frac{\pi_{B}(\hat{\mathbf{Y}})}{\pi_{B}(\mathbf{z})}$, set $\hat{\mathbf{Z}} = \hat{\mathbf{Y}}$; otherwise, set $\hat{\mathbf{Z}} = \mathbf{z}$
\EndIf
\end{algorithmic}
\end{namedalgorithm}

\subsection{Dikin Walk}
Improving the performance of Gaussian random walk~\ref{alg:DCS-RWM} and ball walk~\ref{alg:DCS-BW} requires the proposal distribution to adapt to the geometry of the underlying domain. The Dikin Walk Algorithm~\ref{alg:DCS-DW} generates proposals from the Dikin ellipsoid defined by the soft threshold or Lewis weights (see e.g.~\cite{jiang2024regularized,kook2024gaussian}), which led to recent breakthroughs demonstrating that these affine-invariant walks achieve remarkably fast sampling rates. At the heart of the algorithm is the bounded covariance function $H:B(R)\to \mathcal{S}_d^{+}$, mapping  $B(R)$ to a space of bounded positive semi-definite matrices $\mathcal{S}_d^{+}$. 

\begin{namedalgorithm}{DCS-DW}{DCS with Dikin Walk}
\begin{algorithmic}[1]
\Statex \textbf{DW kernel $Q$ (Line~\ref{alg:DCS:step2} of Algorithm~\ref{alg:DCS})}: requires target $\pi_B$ in~\eqref{eq:total_target}, a local metric $H:B(R)\to \mathcal{S}_d^{+}$, step-size $\gamma>0$, and covariance floor $\lambda\ge0$
\State \textbf{Given:} current  state $\mathbf{z}$ in ball $B(R)$
\State Draw $U \sim \text{Unif}(0, 1)$ 
\If{$U \ge 1/2$}
    \State Set $\hat{\mathbf{Z}} = \mathbf{z}$ \Comment{Lazification step: remain at current state with probability $1/2$}
\Else
\State Define local covariance: $\Sigma_\lambda(\mathbf z) =\frac{\gamma^2}{d}\left(H(\mathbf z)^{-1}+\lambda I_d\right).$

    \State Sample proposal $\hat{\mathbf{Y}}$ with density $\mathbf{y}\mapsto n(\mathbf{y};\mathbf{z}, \Sigma_\lambda(\mathbf{z}))$  \Comment{$n$ 
    denotes normal density on $\R^d$}
    \If{$|\hat{\mathbf{Y}}| \ge R$}
        \State Set $\hat{\mathbf{Z}} = \mathbf{z}$ \Comment{Reject out-of-bounds proposals}
    \Else
        \State Compute the acceptance probability 
        $\alpha = \min \left( 1, \frac{\pi_{B}(\hat{\mathbf{Y}})}{\pi_{B}(\mathbf{z})} \cdot \frac{n(\mathbf{z}; \hat{\mathbf{Y}}, \Sigma_\lambda(\hat{\mathbf{Y}}))}{n(\hat{\mathbf{Y}}; \mathbf{z}, \Sigma_\lambda(\mathbf{z}))} \right)$
        \State With probability $\alpha$, set $\hat{\mathbf{Z}} = \hat{\mathbf{Y}}$; otherwise, set $\hat{\mathbf{Z}} = \mathbf{z}$
    \EndIf
\EndIf
\end{algorithmic}
\end{namedalgorithm}
The key ingredient of Algorithm~\ref{alg:DCS-DW} is the choice of the local
covariance matrix $\Sigma_\lambda$. For the classical Dikin walk on
$B(R)$, one takes $\lambda=0$ and lets $H$ be the Hessian of the
logarithmic barrier:
\begin{equation}
\label{eq:local_metric}
H(z)=d\nabla^2 \varphi(z),\qquad\text{where}   \quad     \varphi(z)
        :=
        -\log(R^2-|z|^2)
\end{equation}
equals the logarithm of the distance of $z\in B(R)$ to the boundary of the ball $B(R)$, see~\cite{kook2024gaussian},
This choice gives fast mixing from a warm start, see Theorem~\ref{thm:dcs_bounds} above. However, since the Hessian of $\varphi$ degenerates near the boundary, the resulting chain may mix slowly when started from a poor initial distribution. To address this issue, we
use the covariance-floored Dikin proposal
\begin{equation}
\label{eq:regularised:Dikin_H}
        \Sigma_\lambda(z)
        :=
        \frac{\gamma^2}{d}
        \left(
        H(z)^{-1}+\lambda I_d
        \right),
        \qquad \lambda>0
\end{equation}
This is similar in spirit to the \textit{regularised Dikin walk} (see, e.g., \cite{jiang2024regularized}).

\subsection{Projected Langevin Monte Carlo}
Algorithm~\ref{alg:DCS-LMC}  incorporates the gradient information of the potential of $\pi_B$ in~\eqref{eq:total_target} to guide the proposal step \cite{MR3802303}. 
Let $\Pi_{\overline{B}(R)}:\R^d\to\overline{B}(R)$ denote the
Euclidean projection onto the closed ball,
\[
\Pi_{\overline{B}(R)}(y)
=
\begin{cases}
y, & |y|\leq R,\\[0.3em]
R\,y/|y|, & |y|>R.
\end{cases}
\]
Algorithm~\ref{alg:DCS-LMC} uses the projected Langevin update
of~\cite{MR3802303}. In contrast to rejection at the boundary,
an unconstrained proposal outside $B(R)$ is projected onto
$\partial B(R)$.

\begin{namedalgorithm}{DCS-LMC}{DCS with projected Langevin Monte Carlo}
\begin{algorithmic}[1]
\Statex \textbf{Projected-LMC kernel:}
requires the potential $U_B=-\log\pi_B$ and a step size $\eta>0$
\State \textbf{Given:} current state $\mathbf z\in\overline{B}(R)$
\State Sample $\xi\sim\mathcal N(0,I_d)$
\State Compute the unconstrained Langevin step
\[
\widetilde{\mathbf Y}
=
\mathbf z
-\frac{\eta}{2}\nabla U_B(\mathbf z)
+\sqrt{\eta}\,\xi
\]
\State Project the proposal onto the closed ball:
\[
\widehat{\mathbf Z}
=
\Pi_{\overline{B}(R)}(\widetilde{\mathbf Y})
\]
\end{algorithmic}
\end{namedalgorithm}

Projected LMC is an unadjusted finite-step sampling algorithm.
Its transition kernel need not have $\nu_B$ as its invariant
distribution. The result below therefore controls the
finite-time error
$\|\mathcal L(Z_N)-\nu_B\|_{\mathrm{TV}}$
rather than the mixing time to an invariant distribution.

\section{Variational optimisation of the automorphism of the ball}
\label{app:mobius_vi}

To formulate the variational objective using a parameter-independent
base distribution, let
$u_1(s)
=
\mathbf 1_{B(1)}(s)/\operatorname{vol}_d(B(1))$
denote the density of the uniform distribution on the unit ball $B(1)$ in $\R^d$
and let $S\sim u_1$. For
$\theta=(\mu,R,\omega)$, define
\begin{equation}
\label{eq:unit_ball_map}
F_\theta:B(1)\to\R^d,
\qquad
F_\theta(s)
:=
\CP_{\mathrm{total}}(Rs)
=
\mu+R\cdot\CP_\beta(\mathcal T_\omega(s)).
\end{equation}
Thus $F_\theta$ is the unit-ball parametrization of the total
diffeomorphism. Let
$q_\theta=(F_\theta)_*u_1$
be the push-forward of the fixed base distribution under
$F_\theta$.

For $x=F_\theta(s)$, the change-of-variables formula gives
$\log q_\theta(x)
=
\log u_1(s)
-
\log\left|\det DF_\theta(s)\right|$.
Let $\overline\pi$ denote the normalized target density
proportional to $\pi$. Then
\begin{align*}
D_{\mathrm{KL}}(q_\theta\|\overline\pi)
&=
\mathbb E_{S\sim u_1}
\left[
\log u_1(S)
-
\log\left|\det DF_\theta(S)\right|
-
\log\overline\pi(F_\theta(S))
\right].
\end{align*}
The terms $\log u_1(S)=-\log\operatorname{Vol}(B(1))$ and the
normalizing constant of $\pi$ do not depend on $\theta$.
Consequently, the variational parameters can be obtained by
minimizing
\begin{equation}
\label{eq:fixed_base_vi}
\theta^\star
\in
\arg\min_\theta
\mathcal E_{\mathrm{VI}}(\theta),
\qquad
\mathcal E_{\mathrm{VI}}(\theta)
:=
\mathbb E_{S\sim u_1}
\left[
-\log\pi(F_\theta(S))
-
\log\left|\det DF_\theta(S)\right|
\right].
\end{equation}

The Jacobian in~\eqref{eq:fixed_base_vi} decomposes as
$\left|\det DF_\theta(s)\right|
=
R^d
J_{\CP_\beta}(\mathcal T_\omega(s))
\cdot J_{\mathcal T_\omega}(s)$,
implying
\[
\log\left|\det DF_\theta(s)\right|
=
d\log R
+
\log J_{\CP_\beta}(\mathcal T_\omega(s))
+
\log J_{\mathcal T_\omega}(s).
\]
Equivalently, define the normalized pull-back density on $B(1)$ by
\[
\widetilde\pi_\theta(s)
=
\overline\pi(F_\theta(s))
\left|\det DF_\theta(s)\right|.
\]
Invariance of relative entropy under the diffeomorphism
$F_\theta$ gives
\[
D_{\mathrm{KL}}(q_\theta\|\overline\pi)
=
D_{\mathrm{KL}}(u_1\|\widetilde\pi_\theta).
\]
Thus the objective encourages the pull-back target on the unit ball
to resemble the uniform distribution in reverse KL. Whether this
also produces the log-concavity, smoothness, or isotropy required
by the non-asymptotic theory depends on the expressiveness of the
chosen family $\mathcal T_\omega$.
Crucially, the distribution of $S$ is independent of $\theta$.
Consequently, Monte Carlo estimates of~\eqref{eq:fixed_base_vi}
can be differentiated pathwise with respect to $\mu$, $R$, and
$\omega$ and optimized using Adam.

The following subsections specialise the variational objective in~\eqref{eq:fixed_base_vi} to the M\"obius automorphism $\mathcal M_\delta$ in~\eqref{eq:mobius_unit}, used in
our numerical experiments. In this setting,
$\omega=\delta\in B(1)$,
$\mathcal T_\omega=\mathcal M_\delta$
and the constrained transformation parameters are
$\theta=(\mu,R,\delta)
\in
\R^d\times(0,\infty)\times B(1)$.
Recall from~\eqref{eq:unit_ball_map} that the corresponding
unit-ball parametrisation is
$F_\theta(s)
=
\mu+R\cdot\CP_\beta(\mathcal M_\delta(s))$,
$s\in B(1)$.

\subsection{Jacobian of the M\"obius parametrisation}

For $s,\delta\in B(1)$, define
$a_\delta(s)
:=
1+2\langle s,\delta\rangle+|\delta|^2|s|^2$.
The M\"obius transformation in~\eqref{eq:mobius_unit} satisfies
$\left|\det D\mathcal M_\delta(s)\right|
=
\left(
(1-|\delta|^2)/(a_\delta(s))
\right)^d$.
For notational convenience, define the log-Jacobian of the
unit-ball map by
$\Lambda_\theta(s)
:=
\log\left|\det DF_\theta(s)\right|$.
The chain rule and the Jacobian formula for $\CP_\beta$ give
\begin{equation}
\label{eq:mobius_total_log_jacobian}
\begin{aligned}
\Lambda_\theta(s)
={}&
d\log R
+d\log(1-|\delta|^2)
-d\log a_\delta(s)
-\left(1+\frac{d}{\beta}\right)
\log\left(
1-|\mathcal M_\delta(s)|^\beta
\right).
\end{aligned}
\end{equation}

Consequently, the single-sample contribution to the fixed-base
objective~\eqref{eq:fixed_base_vi} is
\begin{equation*}
\ell_{\mathrm{VI}}(\theta;s)
:=
-\log\pi(F_\theta(s))
-\Lambda_\theta(s),
\>s\in B(1),\quad\text{implying $\mathcal E_{\mathrm{VI}}(\theta)
=
\mathbb E_{S\sim u_1}
\left[
\ell_{\mathrm{VI}}(\theta;S)
\right]$.}
\end{equation*}

\subsection{Unconstrained parametrisation and stochastic optimisation}
\label{app:opt_alg}

Adam~\cite{kingma2015adam} operates on unconstrained Euclidean parameters. We 
introduce
$\vartheta=(\mu,\zeta,v)
\in
\R^d\times\R\times\R^d$, where $R(\zeta)=\exp(\zeta)$.
To map $v\in\R^d$ smoothly into the open unit ball, define $\psi(v):=\tanh(|v|)v/|v|$ for $v\neq0$ and $\psi(0)=0$.
Then
$|\psi(v)|=\tanh(|v|)<1,$
so that $\delta=\psi(v)\in B(1)$. The constrained
parameters associated with $\vartheta$ are thus
$\theta(\vartheta)
:=
\bigl(\mu,\exp(\zeta),\psi(v)\bigr)$.
For a mini-batch
$s^{(1)},\ldots,s^{(M_{\mathrm{VI}})}\in B(1)$, set
\begin{equation*}
\widehat{\mathcal E}_{\mathrm{VI}}
\left(
\vartheta;
s^{(1)},\ldots,s^{(M_{\mathrm{VI}})}
\right)
:=
\frac{1}{M_{\mathrm{VI}}}
\sum_{j=1}^{M_{\mathrm{VI}}}
\ell_{\mathrm{VI}}
\left(
\theta(\vartheta);s^{(j)}
\right).
\end{equation*}

\begin{namedalgorithm}{VI}
{Variational optimisation of the M\"obius preconditioner}
\begin{algorithmic}[1]
\Require Unnormalised target log-density
$x\mapsto\log\pi(x)$; radial exponent $\beta>0$;
number of optimisation steps $K_{\mathrm{VI}}$;
mini-batch size $M_{\mathrm{VI}}$;
Adam learning rate $\alpha_{\mathrm{VI}}$;
initial unconstrained parameters
$\vartheta_0=(\mu_0,\zeta_0,v_0)$
\Ensure Optimised constrained parameters
$\theta^\star=(\mu^\star,R^\star,\delta^\star)$

\For{$k=0,\ldots,K_{\mathrm{VI}}-1$}
    \State Draw
    $s_k^{(1)},\ldots,s_k^{(M_{\mathrm{VI}})}$
    independently from $\operatorname{Unif}(B(1))$
    \Comment{Fixed base distribution}

    \State Compute the pathwise gradient
    \[
    g_k
    \gets
    \nabla_\vartheta
    \widehat{\mathcal E}_{\mathrm{VI}}
    \left(
    \vartheta;
    s_k^{(1)},\ldots,s_k^{(M_{\mathrm{VI}})}
    \right)
    \bigg|_{\vartheta=\vartheta_k}
    \]

    \State Update
    $\vartheta_{k+1}
    \gets
    \operatorname{AdamStep}
    \left(
    \vartheta_k,g_k;\alpha_{\mathrm{VI}}
    \right)$
    \Comment{Algorithm in~\cite{kingma2015adam}}
\EndFor

\State Set
$\theta^\star
\gets
\theta(\vartheta_{K_{\mathrm{VI}}})$

\State \textbf{return}
$\theta^\star=(\mu^\star,R^\star,\delta^\star)$
\end{algorithmic}
\end{namedalgorithm}

As the base samples are drawn from a distribution that does
not depend on $\vartheta$, the gradient in
Algorithm~\ref{alg:VI} is a standard pathwise gradient
estimator. The normalization constant of $\pi$ is not required.

\end{document}